\documentclass[a4paper,USenglish,cleveref,nameinlink,autoref,thm-restate]{lipics-v2021}

\hideLIPIcs

\usepackage{thmtools}
\usepackage{todonotes}
\usepackage{graphicx} 
\usepackage[inline]{enumitem}
\usepackage{caption}
\usepackage{xcolor}
\usepackage{xspace}
\usepackage{todonotes}
\usepackage{complexity}

\newtheorem{property}{Property}

\Crefname{observation}{Observation}{Observations}
\Crefname{algorithm}{Algorithm}{Algorithms}
\Crefname{algocf}{Algorithm}{Algorithms}
\Crefname{section}{Section}{Sections}
\Crefname{lemma}{Lemma}{Lemmas}
\Crefname{note}{Note}{Notes}
\Crefname{claim}{Claim}{Claims}
\Crefname{property}{Property}{Properties}
\Crefname{enumi}{Property}{Properties}
\Crefname{figure}{Fig.}{Figs.}
\Crefname{fact}{Fact}{Facts}

\funding{{\em Bekos} and {\em Symvonis} were supported by HFRI Grant No: 26320.}

\title{Internally-Convex Drawings of\\ Outerplanar Graphs in Small Area}
\titlerunning{Internally-Convex Drawings of Outerplanar Graphs in Small Area}

\author{Michael A. Bekos}{University of Ioannina, Greece}{bekos@uoi.gr}{https://orcid.org/0000-0002-3414-7444}{}
  
\author{Giordano {Da Lozzo}}{Roma Tre University, Italy
\and
\url{https://uniroma3.gitlab.io/compunet/gdl/}
}{giordano.dalozzo@uniroma3.it}{http://orcid.org/0000-0003-2396-5174}{}

\author{Fabrizio Frati}{Roma Tre University, Italy}{fabrizio.frati@uniroma3.it}{https://orcid.org/0000-0001-5987-8713}{}

\author{Giuseppe Liotta}{University of Perugia, Italy}{giuseppe.liotta@unipg.it}{https://orcid.org/0000-0002-2886-9694}{}

\author{Antonios Symvonis}{National Technical University of Athens}{symvonis@math.ntua.gr}{https://orcid.org/0000-0002-0280-741X}{}

\authorrunning{Bekos, Da Lozzo, Frati, Liotta, Symvonis}

\Copyright{M.\ A.\ Bekos, G.\ Da Lozzo, F.\ Frati, G.\ Liotta, A.\ Symvonis}

\nolinenumbers

\definecolor{lightblue}{HTML}{D5FFFF}
\definecolor{lipicsblue}{rgb}{0.08235294118,0.3098039216,0.537254902}
\definecolor{linkblue}{rgb}{0.098,0.098,0.4392}
\definecolor{ourgreen}{rgb}{0.509,0.745,0.235}
\definecolor{indianred}{rgb}{0.804,0.361,0.361}
\definecolor{indianred1}{rgb}{1,0.416,0.416}
\definecolor{indianred3}{rgb}{0.804,0.333,0.333}
\definecolor{orangered}{rgb}{1,0.271,0}
\definecolor{coral1}{rgb}{1,0.447,0.337}
\definecolor{rosybrown2}{rgb}{0.933,0.231,0.231}
\definecolor{aquamarine4}{rgb}{0.271,0.545,0.455}
\definecolor{chartreuse3}{rgb}{0.4,0.804,0}
\definecolor{mediumpurple3}{rgb}{0.537,0.408,0.804}
\definecolor{mediumvioletred}{rgb}{0.78,0.082, 0.522}

\newcommand{\hly}[1]{\sethlcolor{yellow}\hl{#1}\sethlcolor{yellow}}

\relatedversion{A preliminary version of the paper appears in the proceedings of the 33rd International Symposium on Graph Drawing and Network Visualization (GD 2025) \cite{DBLP:conf/gd/BekosLFLS25}. \cref{sse:diameter} and in particular \cref{th:lower-diameter,th:upper-diameter,th:upper-diameter-outer1} are new to this version.}

\definecolor{defblue}{rgb}{0.121,0.47,0.705}
\DeclareTextFontCommand{\emph}{\color{defblue}\em}

\hypersetup{colorlinks=true,
    linkcolor=aquamarine4,
    anchorcolor=lipicsblue,
    citecolor=lipicsblue,
    filecolor=lipicsblue,
    menucolor=lipicsblue,
    urlcolor=lipicsblue,
    bookmarksopen=true,
    bookmarksopenlevel=2,
    bookmarksnumbered=true,
    plainpages=false,
    }


\EventEditors{Vida Dujmovi\'c and Fabrizio Montecchiani}
\EventNoEds{2}
\EventLongTitle{33rd International Symposium on Graph Drawing and Network Visualization (GD 2025)}
\EventShortTitle{GD 2025}
\EventAcronym{GD}
\EventYear{2025}
\EventDate{September 24--26, 2025}
\EventLocation{Norrk\"{o}ping, Sweden}
\EventLogo{}
\SeriesVolume{357}
\ArticleNo{16}

\ccsdesc[500]{Theory of computation~Design and analysis of algorithms}
\ccsdesc[500]{Mathematics of computing~Graph theory}

\keywords{Grid drawings, convexity, area bounds, outerplanar graphs}


\begin{document}

\maketitle
\begin{abstract}
A well-known result by Kant [{\em Algorithmica}, 1996] implies that  $n$-vertex outerplane graphs admit embedding-preserving planar straight-line grid drawings where the internal faces are convex polygons in $O(n^{2})$ area. In this paper, we present an algorithm to compute such drawings in $O(n^{1.5})$ area. We also consider outerplanar drawings in which the internal faces are required to be strictly-convex polygons. In this setting, we provide a $\Theta(nk^2)$ area bound for $n$-vertex outerplanar graphs whose weak dual is a path and whose maximum face size is $k$ and a $\Theta(nd^2)$ area bound for $n$-vertex outerplanar graphs whose diameter is bounded by $d$.
\end{abstract}



\section{Introduction}
\label{sec:introduction}

Drawing outerplanar graphs in small area is a core topic in Graph Drawing which has been the subject of intense work. The first one we are aware of dates back to 1980, when Leiserson provided an algorithm for the construction of linear-area VLSI layouts of bounded-degree outerplanar graphs~\cite{DBLP:conf/focs/Leiserson80}. In the early 90's, de Fraysseix, Pach, and Pollack~\cite{DBLP:journals/combinatorica/FraysseixPP90} and Schnyder~\cite{DBLP:conf/soda/Schnyder90} proved that $n$-vertex planar graphs, and thus $n$-vertex outerplanar graphs, admit $O(n^2)$-area planar straight-line grid drawings. This bound is the best possible for planar graphs, however whether a subquadratic area upper bound can be achieved for outerplanar graphs has been an intriguing question asked repeatedly (see, e.g.,~\cite{DBLP:conf/gd/Biedl02,DBLP:conf/gd/BrandenburgEGKLM03,DBLP:journals/jgaa/FelsnerLW03,DBLP:journals/dam/GargR07}). A positive answer was first provided by Di Battista and Frati~\cite{DBLP:journals/algorithmica/BattistaF09}. They presented an algorithm which, given an $n$-vertex maximal outerplanar graph $G$ with weak dual tree $T$, uses a structural decomposition of binary trees by Chan~\cite{DBLP:journals/algorithmica/Chan02} in order to construct an $O(n^{1.48})$ area ``star-shaped'' drawing of~$T$; this is a planar straight-line grid drawing which can be augmented to an outerplanar straight-line drawing of $G$ within the same area bound. The same strategy has later led to an almost-linear area bound, namely  $O\left(n2^{\sqrt{2 \log n}} \sqrt{\log n}\right)$~\cite{DBLP:journals/jcss/FratiPR20}, and also to an $O(\Delta n\log n)$ bound for outerplanar graphs of maximum degree $\Delta$~\cite{DBLP:journals/comgeo/Frati12}. 

In this paper we study the area requirements of {\em convex drawings} of outerplanar graphs, in which the boundary of every face (including the outer face) is a convex polygon. It was surprising to us that this question has not been studied before, considering the prominent place that convex drawings occupy in the graph drawing literature. We only mention here the result on convex drawings that is most relevant to our paper. Namely, Chrobak and Kant~\cite{DBLP:journals/ijcga/ChrobakK97,DBLP:journals/algorithmica/Kant96} and, independently, Felsner~\cite{DBLP:journals/order/Felsner01} have presented extensions of the algorithms by de Fraysseix, Pach, and Pollack~\cite{DBLP:journals/combinatorica/FraysseixPP90} and Schnyder~\cite{DBLP:conf/soda/Schnyder90}, proving that every $3$-connected planar graph admits a convex drawing in quadratic area, which is asymptotically optimal. Specifically, Felsner's upper bound is $(f-1)\times(f-1)$, where $f$ denotes the number of faces. 

{\em Strictly-convex drawings}, in which no three vertices on the boundary of a face are allowed to be collinear, require larger area. Indeed, even an $n$-vertex cycle cannot be drawn as a strictly-convex polygon on a grid of size $o(n^3)$~\cite{Andrews1963}. For $n$-vertex $3$-connected planar graphs, this bound is not yet matched by a corresponding upper bound. The best known upper bound is due to B{\'{a}}r{\'{a}}ny and Rote~\cite{Barany2006}, who proved that $n$-vertex $3$-connected planar graphs admit strictly-convex grid drawings in $O(n^4)$ area. Bekos, Gronemann, Montecchiani, and Symvonis~\cite{DBLP:journals/jocg/BekosGMS24} provided an alternative algorithm with the same asymptotic area bound but with significantly smaller hidden constants. 

The construction of small-area convex grid drawings of outerplanar graphs encounters two natural barriers. First, a planar graph admits a convex drawing only if it is biconnected, hence one is required to consider biconnected outerplanar graphs only. Second, as we show later, there exist $n$-vertex biconnected outerplanar graphs that require $\Omega(n^3)$ area in any (not necessarily strictly-)convex grid drawing; note that an $O(n^3)$ upper bound for (even strictly-)convex drawings of $n$-vertex biconnected outerplanar graphs can be achieved by drawing the cycle delimiting the outer face as a strictly-convex polygon~\cite{Andrews1963} and drawing the internal edges as chords. We thus relax the convexity requirement and require only the internal faces to be convex, that is, we consider {\em internally-convex drawings}. For outerplanar graphs, this is a natural variation of the convex drawing style, taking into account that the outer face of an outerplanar graph is {\em special}: All the vertices are incident to it and this property defines the graph class. Chrobak and Kant's algorithms~\cite{DBLP:journals/ijcga/ChrobakK97,DBLP:journals/algorithmica/Kant96} allow us to obtain internally-convex drawings of $n$-vertex outerplanar graphs on a grid of size $O(n^2)$ (via a suitable augmentation of the input graph to a 3-connected planar graph, by adding vertices and edges in its outer face). This restores in the internally-convex setting the pursuit for a sub-quadratic area upper bound for grid drawings of outerplanar graphs. We also consider the, in our opinion equally interesting, question of determining the area requirements of {\em internally-strictly-convex drawings} of outerplanar~graphs. 

\medskip\noindent\textbf{Our contributions.} Our main result is an algorithm that constructs an internally-convex grid drawing with $O(n^{1.5})$ area for $n$-vertex outerplanar graphs. We employ some ideas and tools from a recent paper by Biedl, Liotta, Lynch, and Montecchiani~\cite{BiedlLLM24}. The main geometric component of our algorithm consists of constructing a ``leveled'' drawing, where consecutive levels are drawn on consecutive horizontal grid lines, up to a level where there are sufficiently few vertices so that the drawing can ``make a turn'' and extend horizontally rather than vertically. We remark that ours is the first algorithm that constructs outerplanar straight-line grid drawings in sub-quadratic area by drawing directly the outerplanar graph, without passing through a star-shaped drawing of its weak dual tree. This aspect might be of independent interest.

For internally-strictly-convex grid drawings of $n$-vertex outerplanar graphs, the same arguments as in the strictly-convex setting prove that $\Theta(n^3)$ is a tight bound for the area requirements. However, the natural question to consider here is how the area relates not only to the number $n$ of vertices but also to other parameters of the graph, like the maximum size~$k$ of an internal face or the diameter $d$. We show that there exist outerplanar graphs requiring~$\Omega(nk^2)$ area and outerplanar graphs requiring~$\Omega(nd^2)$ area in any internally-strictly-convex grid drawing. We also show that the latter lower bound can be matched by a corresponding~$O(nd^2)$ upper bound for all the $n$-vertex outerplanar graphs and that the former lower bound can be matched by a corresponding~$O(nk^2)$ upper bound for the $n$-vertex outerpaths, i.e., the biconnected outerplanar graphs whose weak dual is a path. Our upper bounds for the area required by internally-strictly-convex grid drawings have implications for the construction of polynomial-area straight-line drawings of memningful families of outer-$1$-planar graphs, i.e. those graphs that admit a drawing where all vertices are incident to the outer face and every edge is crossed at most once. We recall that, while deciding whether a graph $G$ admits a drawing where each edge is crossed at most once is \NP-hard~\cite{grigoriev2007algorithms,korzhik2013minimal}, testing whether $G$ is outer-$1$-planar can be performed in $O(n)$-time~\cite{DBLP:journals/algorithmica/HongEKLSS15}. To the best of our knowledge, no-polynomial area straight-line drawing algorithms are currently known for general outer-$1$-planar graphs.  Instead, Dehkordi and Eades~\cite{dehkordi2012every} describe an exponential-area straight-line drawing algorithm for outer-$1$-planar graphs, in which all edge crossings form right angles.

The rest of the paper is organized as follows. In \cref{sec:preliminaries}, we introduce some definitions and prove a lower bound for the area requirements of convex drawings. In \cref{se:non-strict} and \cref{se:strict}, we present our results for internally-convex  and internally-strictly-convex drawings, respectively. Finally, we conclude in \cref{sec:conclusions} with several open problems.


\section{Preliminaries}
\label{sec:preliminaries}

We assume familiarity with Graph Drawing, see, e.g.~\cite{DBLP:books/ph/BattistaETT99}. 

A \emph{drawing} of a graph maps each vertex to a distinct point in the plane and each edge to a Jordan arc between its endpoints. In a \emph{grid drawing} vertices have integer coordinates.
Let~$\Gamma$ be a grid drawing of a graph. Let $\max_q(\Gamma)$ and $\min_q(\Gamma)$ be the maximum and minimum $q$-coordinate of a vertex in $\Gamma$, with $q \in \{x,y\}$, respectively. The \emph{height} $h(\Gamma)$ and \emph{width} $w(\Gamma)$ of~$\Gamma$ are the positive integers $h(\Gamma) = \max_y(\Gamma)-\min_y(\Gamma) + 1$ and $w(\Gamma) = \max_x(\Gamma)-\min_x(\Gamma) + 1$, respectively. In other words, $h(\Gamma)$ and $w(\Gamma)$ represent the number of horizontal and vertical grid lines intersecting the smallest axis-parallel rectangle enclosing $\Gamma$, respectively.
The \emph{area} of a drawing is the product of its width and height.

A drawing is \emph{planar} if no two edges intersect, except at a common end-point. In a planar drawing, a vertex or an edge is \emph{external} if it is incident to the outer face, and \emph{internal} otherwise. A drawing is \emph{outerplanar} if it is planar and every vertex is external. A graph is \emph{planar} if it admits a planar drawing and \emph{outerplanar} if it admits an outerplanar drawing. A planar drawing partitions the plane into topologically connected regions called \emph{faces}.
The unbounded face is the \emph{outer face}, whereas the bounded faces are the \emph{internal faces}.
Two planar drawings of a connected graph are \emph{equivalent} if they induce (i) the same counter-clockwise order of the edges incident to each vertex and (ii) the same counter-clockwise order of the edges along the boundaries of their outer faces. The equivalence classes of equivalent outerplanar drawings are called \emph{outerplanar embeddings}, or in this paper just \emph{embeddings}. An \emph{outerplane graph} is a planar graph equipped with an outerplanar embedding. A biconnected outerplanar graph has a unique outerplanar embedding (up to an inversion of all the orders defining the embedding), hence we often talk about faces of a biconnected outerplane graph, referring to faces of its  unique outerplanar embedding. Also, a drawing of a graph is \emph{embedding-preserving} if it respects an embedding associated to the graph.

In a \emph{straight-line drawing}, edges are drawn as straight-line segments. In a planar straight-line drawing, a face is \emph{convex} (\emph{strictly-convex}) if it is delimited by a polygon whose internal angles are not larger than $180^\circ$ (resp.\ are smaller than $180^\circ$). A \emph{convex drawing} (a \emph{strictly-convex drawing}) is a straight-line planar drawing in which every face, including the outer face, is convex (resp., strictly-convex). An \emph{internally-convex drawing} (an \emph{internally-strictly-convex drawing}) is a straight-line planar drawing in which every internal face, but not necessarily the outer face, is convex (resp., strictly-convex). %
When dealing with outerplanar graphs, we do not require convex drawings to be outerplanar per se. Our results however hold in the strongest possible sense: The lower bounds on the area hold true even for drawings that are not outerplanar, while the upper bounds are proved by constructing outerplanar drawings.

The following theorem motivates the study of internally-convex drawings.


\begin{theorem}\label{th:convex-lower} 
There exists an $n$-vertex biconnected outerplanar graph that requires $\Omega(n^3)$ area in every convex grid drawing.
\end{theorem}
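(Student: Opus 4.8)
The plan is to construct a single $n$-vertex biconnected outerplanar graph $G$ in which $\Theta(n)$ vertices are \emph{forced} to be strictly-convex corners of the outer polygon in every convex drawing, and then to conclude via the classical fact that $\Theta(n)$ points in strictly-convex position on the integer grid span area $\Omega(n^3)$~\cite{Andrews1963}. Assuming for simplicity that $n$ is even, set $m=n/2$ and let $G$ be obtained from a cycle $c_1,c_2,\dots,c_m$ (indices modulo $m$) by adding, for each $i$, a new vertex $d_i$ of degree $2$ adjacent to $c_i$ and $c_{i+1}$. Then $G$ is biconnected and outerplanar: in its outerplanar embedding the outer face is bounded by the $2m$-cycle $c_1,d_1,c_2,d_2,\dots,c_m,d_m$, each $c_ic_{i+1}$ is a chord, and the internal faces are the $m$ triangles $T_i:=c_id_ic_{i+1}$ together with the $m$-cycle $c_1c_2\cdots c_m$. (For odd $n$, subdividing one of the central edges adds a vertex without affecting the argument.)

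Next I would show that each $d_i$ is a strictly-convex corner of the outer polygon of any convex drawing $\Gamma$ of $G$. The vertex $d_i$ is incident to exactly the two edges $d_ic_i$ and $d_ic_{i+1}$, which partition a small disk around $d_i$ into the two angular sectors of the two faces incident to $d_i$. One of these faces is the triangle $T_i$, whose angle at $d_i$ is some $\theta_i$ with $0<\theta_i<180^\circ$ (a bounded convex face must be a non-degenerate polygon). The other face then has angle $360^\circ-\theta_i>180^\circ$ at $d_i$, so it cannot be a bounded convex face; hence it is the outer face, $d_i$ lies on the outer polygon $P$ of $\Gamma$, and the internal angle of $P$ at $d_i$ is $\theta_i<180^\circ$. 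Thus $d_1,\dots,d_m$ are $m$ distinct strictly-convex corners of $P$ — and, incidentally, the same local argument forces the embedding realized by $\Gamma$ to be the outerplanar one, though this is not needed.

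To finish, I would observe that a strictly-convex corner of a convex polygon is a vertex of the convex hull of the polygon's vertex set, and that no three vertices of a convex hull are collinear; hence $d_1,\dots,d_m$ are $m$ points in strictly-convex position on the integer grid. By Andrews' lower bound~\cite{Andrews1963}, their convex hull, and therefore the bounding box of $\Gamma$, has area $\Omega(m^3)=\Omega(n^3)$. This proves the statement. For completeness (and to match the $O(n^3)$ upper bound mentioned earlier), $G$ does admit a convex drawing of area $O(n^3)$: draw the outer $2m$-cycle as a strictly-convex lattice polygon of area $O(m^3)$ and each chord $c_ic_{i+1}$ as a straight segment; the chords are pairwise non-crossing and every internal face is a non-degenerate convex polygon.

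The only delicate step is the middle one: one must be careful that ``the outer face is convex'' refers to the polygon whose \emph{bounded} side is the region not containing the rest of the drawing, and then verify that the reflex-angle argument genuinely rules out a degree-$2$ vertex sitting between two bounded faces. Everything else is routine, together with a black-box application of Andrews' theorem.
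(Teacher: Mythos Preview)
Your proposal is correct and takes essentially the same approach as the paper: the same graph is constructed, the same angle-counting at degree-$2$ vertices forces the outerplanar embedding, and Andrews' bound is the black box. The only cosmetic difference is that the paper shows the \emph{inner} $m$-cycle $c_1\cdots c_m$ is strictly convex (at each $c_i$ the outer polygon contributes angle at most $180^\circ$, and the two incident triangles consume a strictly positive part of it), whereas you show the $d_i$ are strictly-convex corners of the outer polygon; either choice yields $\Theta(n)$ lattice points in strictly-convex position.
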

\begin{proof}
Let $G$ be the $n$-vertex biconnected outerplanar graph obtained from a cycle $\mathcal C$ with~$n/2$ vertices by attaching a degree-$2$ vertex incident to the end-vertices of each edge of~$\mathcal C$. First, every convex drawing of~$G$ respects the unique outerplanar embedding of~$G$, as moving a degree-$2$ vertex inside $\mathcal C$ would create an angle larger than $180^\circ$ in an internal face at that vertex. Second, in every outerplanar convex drawing of~$G$, the drawing of~$\mathcal C$ is strictly-convex, as the angle at each vertex $v$ of~$\mathcal C$ in the interior of the cycle delimiting the outer face of~$G$ is at most $180^\circ$ and part of this angle is used by the two triangular faces incident to $v$. Since every strictly-convex drawing of a cycle requires cubic area~\cite{Andrews1963}, the lower bound follows.
\end{proof}

\begin{figure}[t!]
\centering
\includegraphics[page=1,width=.6\textwidth]{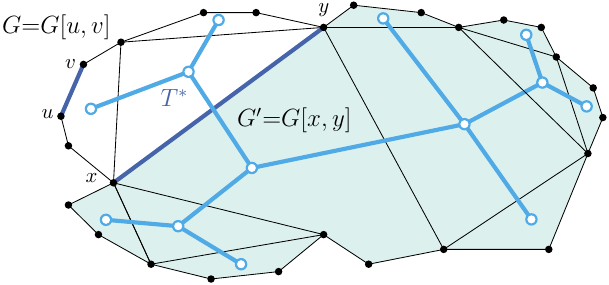}
  \caption{
  An outerplanar graph $G[u,v]$ rooted at the external edge $(u,v)$ and its weak dual $T^*$. The outerplanar graph $G' = G[x,y]$ is light-blue shaded.\label{fig:def-rooted}}
\end{figure}

Let $G$ be a biconnected outerplane graph with $n\geq 3$ vertices; refer to \cref{fig:def-rooted}. Note that $G$ has at least one internal face. 
The \emph{weak dual} of~$G$, denoted by $T^*$, is the tree having a vertex for each internal face of~$G$ and an edge between any two vertices corresponding to internal faces sharing an edge. 
For simplicity, we use the same label to denote a vertex of~$T^*$ and the corresponding face of~$G$. 
If $T^*$ is a path, then $G$ is an \emph{outerpath}. 
Let $(u,v)$ be an external edge of~$G$ such that $v$ immediately precedes $u$ in counter-clockwise order along the outer face of~$G$. The outerplane graph $G$ with the designated edge $(u,v)$, denoted by $G[u,v]$, is said to be \emph{rooted at $(u,v)$}.  Let $(x,y)$ be an internal edge of~$G[u,v]$ such that $u,x,y,v$ appear in this counter-clockwise order along the outer face of~$G$ (possibly $u=x$ or $y=v$ may hold). Consider the outerplane subgraph $G'$ of~$G$ induced by the vertices $x$, $y$, and the vertices that are encountered when traversing the boundary of the outer face of~$G$ in counter-clockwise direction from~$x$ to~$y$. 
Observe that the edge $(x,y)$ is an external edge of~$G'$ such that $y$ immediately precedes $x$ in counter-clockwise order along the outer face of~$G'$. Then we denote by $G[x,y]$ the outerplanar graph $G'$ rooted at the edge $(x,y)$.

Let $G$ be rooted at $(u,v)$ and let $f_1$ be the internal face of~$G$ incident to $(u,v)$. 
The \emph{extended weak dual tree} $T$ of~$G[u,v]$ is the tree with root $f_1$ obtained from the weak-dual tree $T^*$ of $G$ by inserting a new leaf $\ell_e$ for each external edge $e$ of~$G$ different from $(u,v)$ and an edge $(f,\ell_e)$ if $f$ is the internal face of~$G$ incident to $e$. Let $d$ be the number of vertices of~$T$. Note that $d\geq n$, as $T$ contains the root $f_1$ plus one leaf for each of the $n-1$ external edges of~$G$ different from $(u,v)$. Also, $d\leq 2n-3$, since $G$ has $n-1$ external edges different from $(u,v)$ and at most $n-2$ internal faces. 
Let $\pi=(f_1,f_2,\dots,f_p)$ be a root-to-leaf path in $T$. We denote by $G[\pi]$ the outerpath induced by the vertices incident to the faces $f_1,f_2,\dots,f_{p-1}$. We call $G[\pi]$ the \emph{outerpath dual to $\pi$}. Denote by $e_p$ the edge of~$G$ dual to $(f_{p-1},f_p)$. Let $u=z_1,z_2,\dots,z_a,z_{a+1},\dots,z_m=v$ be the counter-clockwise order of the vertices along the outer face of~$G[\pi]$, where $e_p=(z_a,z_{a+1})$. We call the rooted outerplanar graphs $G[z_1,z_2],\dots,G[z_{a-1},z_a]$ \emph{left subgraphs of~$G$ at $\pi$} and the rooted outerplanar graphs $G[z_{a+1},z_{a+2}],\dots,G[z_{m-1},z_m]$ \emph{right subgraphs of~$G$ at $\pi$}.

In this paper, we also deal with some non-planar drawings, which are introduced below. A drawing of a graph is \emph{outer-$1$-planar} if every vertex is external and every edge is crossed at most once. An \emph{outer-$1$-planar graph} is a graph that admits an outer-$1$-planar drawing. Two outer-$1$-planar drawings $\Gamma$ and $\Gamma'$ of a graph are \emph{equivalent} if: (i) any two edges $e_1$ and $e_2$ cross in $\Gamma$ if and only if they cross in $\Gamma'$; and (ii) the planar drawings $\Gamma_p$ and $\Gamma'_p$ obtained from $\Gamma$ and $\Gamma'$, respectively, by introducing a vertex $v_{1,2}$ in place of the crossing point between any two edges $e_1$ and $e_2$ are equivalent. An \emph{outer-$1$-planar embedding} is an equivalence class of outer-$1$-planar drawings, and an \emph{outer-$1$-plane graph} is an outer-$1$-planar graph equipped with an outer-$1$-planar embedding.

The \emph{outer frame} of an outer-$1$-plane graph~$G$ is defined as follows. Consider an outer-$1$-planar drawing $\Gamma$ of~$G$ respecting the outer-$1$-planar embedding of $G$. For any two edges $e_1=(u_1,v_1)$ and $e_2=(u_2,v_2)$ that cross in a point $c_{1,2}$, insert in $G$ and $\Gamma$ the edges $(u_1,u_2)$, $(u_1,v_2)$, $(v_1,u_2)$, and $(v_1,v_2)$, if they do not already belong to $G$, so that the $4$-cycle $(u_1,u_2,v_1,v_2)$ encloses the edges $e_1$ and $e_2$, and encloses no vertex. This augmentation is always possible and keeps $G$ outer-$1$-plane. Indeed, if $G$ does not contain, say, the edge $(u_1,u_2)$, then it can be inserted  in $\Gamma$ sufficiently close to the polyline $(u_1,c_{1,2},u_2)$, so that it does not cross any edge, and so that every vertex different from $u_1$ and $u_2$ is outside the closed curve composed of $(u_1,u_2)$ and of $(u_1,c_{1,2},u_2)$. If $G$ does contain the edge $(u_1,u_2)$, then again the closed curve composed of the edge $(u_1,u_2)$ and of $(u_1,c_{1,2},u_2)$ does not contain any vertex in its interior, given that $\Gamma$ is outer-$1$-planar, and does not cross any edge, given that $e_1$ and $e_2$ already cross each other. After the augmentation has been performed for every pair of crossing edges, we obtain the \emph{framed augmentation} of $G$. The outer frame of~$G$ is then defined by removing the crossing edges from the framed augmentation of $G$.

\section{Internally-Convex Drawings} \label{se:non-strict}


In this section we prove the following theorem, which is our main result. 

\begin{theorem} \label{th:non-strictly}
Every $n$-vertex outerplane graph admits an embedding-preserving internally-convex grid drawing in $O(n^{1.5})$ area.
\end{theorem}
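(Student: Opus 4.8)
The plan is to reduce the statement to biconnected outerplane graphs and then to give a recursive drawing procedure guided by the extended weak dual tree, whose two recursions (one controlling height, one controlling width) are balanced against each other by a single threshold $\tau=\Theta(\sqrt n)$. The reduction to the biconnected case is routine: a connected outerplane graph is obtained from the drawings of its blocks by identifications at cut vertices (bridges being trivial blocks); since every internal face lies in a single block, and since the block drawings can be placed in pairwise-disjoint regions meeting only at the shared cut vertices and oriented so that all incident edges are realized inside the outer face, an $O(n^{1.5})$ bound for biconnected outerplane graphs gives the same bound in general, because the block sizes sum to $O(n)$ and $\sum_i s_i^{3/2}\le(\sum_i s_i)^{3/2}$. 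Note that one cannot first pass to a maximal outerplane supergraph: a union of triangular faces need not be convex, so the internal faces must be drawn as they actually are. From now on $G$ is biconnected with $n$ vertices, rooted at an arbitrary external edge $(u,v)$, with extended weak dual tree $T$ on $d=O(n)$ nodes.

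The heart of the argument is a routine that, given a rooted outerplanar graph $G[x,y]$ on $s$ vertices together with an orientation (say ``vertical''), outputs an embedding-preserving internally-convex grid drawing with the root edge $(x,y)$ on the bottom side of the bounding box and the rest of the graph above it. I would pick in $T$ a root-to-leaf \emph{heavy path} $\pi=(f_1,\dots,f_p)$ — at each step descending to the child with the largest subtree — so that every subgraph hanging off $\pi$ has at most $\alpha s$ vertices for some fixed $\alpha<1$. The dual outerpath $G[\pi]$ is then drawn ``leveled'': face $f_i$ is placed with its lowest vertices on the horizontal grid line $y=-i$, so $G[\pi]$ occupies height $\Theta(p)$ and width $O(\max_i|f_i|)$, with the two external chains $z_1,\dots,z_a$ and $z_m,z_{m-1},\dots,z_{a+1}$ realized as $x$-monotone convex chains descending along the left and right sides of the box. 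The left subgraphs $G[z_1,z_2],\dots,G[z_{a-1},z_a]$ and the right subgraphs $G[z_{a+1},z_{a+2}],\dots,G[z_{m-1},z_m]$ are drawn recursively (in suitably rotated frames) and docked, respectively, into the half-plane left of the left chain and right of the right chain, each immediately below its attaching edge and inside pairwise-disjoint axis-parallel boxes. Convexity is preserved because the internal faces of $G[\pi]$ are bounded by the two convex chains plus short bottom edges, and because at each attaching edge $(z_i,z_{i+1})$ the wedge available on the outer side of $G[\pi]$ is almost a half-plane — exactly the room a recursively produced drawing needs when its root edge sits on the box side. When the recursion reaches a subgraph with at most $\tau=\lceil\sqrt n\rceil$ vertices, I would instead call the mirror-image routine in the opposite orientation: the drawing ``makes a turn'' and from that point on grows horizontally (wide and short) rather than vertically.

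The area is then estimated by unfolding the two recurrences. Writing $W(s),H(s)$ for the width and height of the vertical routine, the construction yields $H(s)\le p+\max_i H(s_i)$ and $W(s)\le O(\max_i|f_i|)+\sum_i W(s_i)$, with the side-subgraph sizes $s_i$ each at most $\alpha s$ and $\sum_i s_i=O(s)$; below the threshold the turned routine draws an $s'$-vertex subgraph ($s'\le\tau$) in width $O(s')$ and height $O(\sqrt{s'})=O(n^{1/4})$. I expect these recurrences, with the cutoff at $\tau=\Theta(\sqrt n)$, to unfold to a bound of the form $O(n^{a})\times O(n^{b})$ with $a+b=3/2$ — intuitively, along a heavy path there are only $O(n/\tau)=O(\sqrt n)$ ``large'' hanging subgraphs forcing a recursive descent and only $O(\sqrt n)$ ``small'' ones, each costing $O(n^{1/4})$ in the transverse direction — so that the area is $O(n^{3/2})$; pinning down the constants is a routine, if somewhat delicate, induction.

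The main obstacle is geometric rather than combinatorial: one must place all the recursively produced side drawings so that (a) their boxes are pairwise disjoint and disjoint from the leveled drawing of $G[\pi]$; (b) each docking edge $(z_i,z_{i+1})$ is realized with the corresponding side drawing lying entirely in the wedge on the outer side of $G[\pi]$ at that edge; and (c) the internal face of $G[\pi]$ incident to $(z_i,z_{i+1})$ and every internal face of the side subgraph incident to it stay convex — all while the two descending chains of $G[\pi]$ remain $x$-monotone and convex and the totals stay within the claimed bounds. The turn compounds this: when a subgraph switches orientation one must certify that the faces straddling the corner (the last faces of the vertical part and the first faces of the horizontal part) are still convex, which restricts how the two sub-boxes may be glued. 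Handling (a)--(c) uniformly — presumably by maintaining an invariant on the shape of the drawing's outer boundary, in the spirit of the tools of~\cite{BiedlLLM24} — is where most of the work lies.
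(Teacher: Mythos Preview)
Your recurrence $H(s)\le p+\max_i H(s_i)$ does not close. The heavy path $\pi$ can have length $p=\Theta(s)$ (think of a zig-zag triangulated outerpath), so placing ``face $f_i$ on the horizontal grid line $y=-i$'' already spends $\Theta(s)$ height on $G[\pi]$ alone, before any recursion. Your turn mechanism fires only when a \emph{hanging} subgraph drops below $\tau$, so it never rescues the long spine; unfolding $H(s)\le s+H(s/2)$ gives $H(s)=\Theta(s)$, and the area you get is $\Theta(n)\times\Theta(n)$. The paper's construction turns in a different place: it also lays the graph out in levels $P_0,P_1,\dots$, but when the number of levels exceeds $\sqrt n$ it observes (by pigeonhole, \cref{le:small-subgraph}) that some level $V_t$ with $t\le 1+\sqrt n$ has $|V_t|\le\sqrt n$, and it makes the turn \emph{there}, inside the spine drawing itself. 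Thus only $O(\sqrt n)$ height is consumed by levels before the turn, another $O(\sqrt n)$ by the turn itself, and the remainder of the spine is laid out horizontally.

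There is a second, subtler gap. After the turn, the height of the horizontal part is the maximum height of a left subgraph \emph{plus} the maximum height of a right subgraph (they sit on opposite sides of the horizontal spine). With a heavy path you only know each hanging subtree has size $\le s/2$, so $\sqrt{d_L}+\sqrt{d_R}$ can be as large as $\sqrt{2}\cdot\sqrt d>\sqrt d$ and the height recurrence again fails to close. The paper instead selects $\pi$ via \cref{th:decomposition} (the Chan-style path from~\cite{BiedlLLM24}), which guarantees $d_L^{0.48}+d_R^{0.48}\le(1-\delta)d^{0.48}$ for every left--right pair; this is exactly what forces $\sqrt{d_L}+\sqrt{d_R}<(1-\delta)\sqrt d$ and makes the recursion of \cref{le:induction_works} go through. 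Finally, for the reduction to the biconnected case, note that the paper does not decompose into blocks and glue at cut vertices (which would require a separate argument that the glued drawing is embedding-preserving and internally-convex); it simply \emph{augments} $G$ to a biconnected outerplane graph by adding edges in the outer face and then restricts the resulting drawing.
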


Our proof is inspired by the proof that every outer-1-plane graph admits an embedding-preserving visibility representation in $O(n^{1.5})$ area~\cite{BiedlLLM24}. In particular, we are going to use the following theorem, which generalizes a well-known result by Chan \cite{DBLP:journals/algorithmica/Chan02}. 

\begin{theorem} \label{th:decomposition} (\cite[Theorem 1]{BiedlLLM24})
Let $p=0.48$. Given any rooted tree with $n$ vertices, there exists a root-to-leaf path $\pi$ such that for any left subtree $\alpha$ and for any right subtree $\beta$ of~$\pi$, $|\alpha|^p+|\beta|^p\leq (1-\delta)n^p$, for some constant $0<\delta<1$.
\end{theorem}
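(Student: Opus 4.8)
The plan is to rephrase the statement in terms of subtree sizes and then construct $\pi$ by a greedy descent whose analysis feeds a divide-and-conquer recurrence. Root the given tree and, for each node $v$, let $s(v)$ be the number of vertices in the subtree $T_v$ rooted at $v$. For a root-to-leaf path $\pi=(v_0,\dots,v_\ell)$, every child of a path-vertex $v_i$ other than $v_{i+1}$ roots a maximal hanging subtree, which is a left subtree or a right subtree of $\pi$ according to whether it lies to the left or to the right of $v_{i+1}$ in the plane embedding. Writing $M_L$ and $M_R$ for the maximum sizes of a left, respectively right, subtree of $\pi$, the statement is equivalent to exhibiting a path with $M_L^{\,p}+M_R^{\,p}\le(1-\delta)n^p$. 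I note in passing that high-degree vertices cause no extra difficulty: once $v_{i+1}$ is fixed, only the largest child of $v_i$ on each side of $v_{i+1}$ matters, so each step is effectively a binary left/right choice, which is what lets the argument generalize Chan's binary-tree decomposition.

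First I would follow the \emph{heavy path}, descending at each node into the child of maximum subtree size. A standard calculation shows that every subtree hanging off the heavy path has size smaller than half of the current subtree, hence $M_L,M_R<n/2$. This already yields $M_L^{\,p}+M_R^{\,p}<2^{\,1-p}n^p$, but since $p<1/2$ this constant exceeds $1$, so the heavy path alone is not enough. The improvement comes from exploiting the freedom in how large hanging subtrees are distributed between the two sides: whenever a node on the path has a child whose subtree is too large to be left on either side, I would reroute the path to descend into that (necessarily unique, as it has size larger than $n/2$) child, turning it from a counted hanging subtree into an uncounted path segment, while the residual siblings it leaves behind telescope to a total size much smaller than $n$. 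Steering all genuinely large subtrees toward a single side keeps the other side nearly empty, which is exactly the regime where $x\mapsto x^p$ is cheap.

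The bound is tailored to a width recurrence of the form $W(n)\le W(M_L)+W(M_R)+O(1)$: if $M_L^{\,p}+M_R^{\,p}\le(1-\delta)n^p$ then $W(n)=O(n^p)$ follows, the factor $1-\delta$ absorbing the additive term. Accordingly, I would prove the inequality by strong induction on $n$, using the subadditivity and concavity of $x\mapsto x^p$ to combine the contribution of the subtrees stranded at the current step with the inductive guarantee obtained inside the child the path descends into. The exponent $p=0.48$ enters precisely here: it is chosen strictly below $1/2$ so that, in the worst balanced configuration, the resulting constant is a fixed number strictly below $1$, which is what produces the uniform gap $\delta$.

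The main obstacle is breaking strictly below $n^p$. Because $x\mapsto x^p$ is concave with infinite slope at the origin, a balanced split into two subtrees of size $n/2$ already gives $2(n/2)^p=2^{\,1-p}n^p>n^p$, and even a single small subtree of size $\varepsilon n$ left on the ``clean'' side, when paired with a near-$n$ subtree on the other side, overshoots $n^p$, since $\varepsilon^p+(1-\varepsilon)^p>1$ for every $\varepsilon\in(0,1)$. Hence no one-shot ``push everything to one side'' estimate can reach the target; the crux is an amortized rerouting/exchange argument that rebalances the path whenever both sides are simultaneously heavy, and it is exactly the strict inequality $p<1/2$ that makes this rebalancing converge while retaining a constant gap $\delta>0$. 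Tightening the interaction between the telescoping of the stranded siblings and the inductive bound inside the descended child is where I expect the real work to lie.
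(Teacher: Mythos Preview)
The paper does not prove \cref{th:decomposition}; it is quoted as \cite[Theorem~1]{BiedlLLM24} and invoked as a black box, so there is no in-paper argument to compare your proposal against. For the actual proof you must consult that reference (which in turn generalizes Chan's binary-tree decomposition~\cite{DBLP:journals/algorithmica/Chan02}).

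On the substance of your sketch: you correctly isolate the obstruction, namely that $x^{p}+(1-x)^{p}>1$ for every $x\in(0,1)$ when $p<1$, so no single split can meet the target, and you are right that the heavy path alone only yields $M_L,M_R<n/2$ and hence the useless bound $2^{\,1-p}n^{p}>n^{p}$. However, your proposed fix --- ``reroute into the child of size larger than $n/2$'' --- is vacuous once you are already on the heavy path: by definition the heavy path descends into the largest child at every node, so no hanging subtree ever has size exceeding half the current subtree, and there is nothing to reroute into. What remains is an honest description of the difficulty rather than a proof; the mechanism that actually produces the uniform gap $\delta>0$ (a more delicate choice of path than simply the heavy edge, analyzed via a potential that exploits $p<1/2$) is not present, as you yourself concede in your final sentence.
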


The following lemma is similar to the combination of Definition 1 and Observation 3 from \cite{BiedlLLM24}, although with different constants. For the sake of completeness, we provide a proof, which follows the one in \cite{BiedlLLM24} almost verbatim. Recall that $p=0.48$ and $0<\delta<1$ are the constants from \cref{th:decomposition}. We define the constant $c$ as  $c:=2/\delta$. Observe that $c>2$.

\begin{lemma}\label{le:induction_works}  
Let $f(d)$ be the recursive function defined on $\mathbb N_0$ as follows.  First, $f(0)=0$ and $f(1)=1$. Second, for $d\in \mathbb N_0$ with $d\geq 2$, we have that 
\[
f(d)=\max\limits_{d_a,d_b\in \mathbb N_0:d_a^p+d_b^p\leq (1-\delta)d^p} \{f(d_a)+f(d_b)\}+2\sqrt d.
\]
Then $f(d)\leq c\sqrt d$. 
\end{lemma}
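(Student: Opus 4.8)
The plan is to prove $f(d)\le c\sqrt d$ by strong induction on $d$, where $c=2/\delta$. The base cases $d=0$ and $d=1$ are immediate: $f(0)=0\le 0$ and $f(1)=1\le c$ since $c>2>1$. For the inductive step, fix $d\ge 2$ and assume $f(d')\le c\sqrt{d'}$ for all $d'<d$. Let $d_a,d_b\in\mathbb N_0$ be the maximizers in the recursive definition, so $d_a^p+d_b^p\le(1-\delta)d^p$; in particular $d_a,d_b<d$, so the inductive hypothesis applies and we get
\[
f(d)\le c\sqrt{d_a}+c\sqrt{d_b}+2\sqrt d.
\]
It therefore suffices to show $c\sqrt{d_a}+c\sqrt{d_b}+2\sqrt d\le c\sqrt d$, i.e.\ that $c(\sqrt{d_a}+\sqrt{d_b})\le (c-2)\sqrt d$.

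The key analytic step is to bound $\sqrt{d_a}+\sqrt{d_b}$ in terms of $\sqrt d$ using the constraint $d_a^p+d_b^p\le(1-\delta)d^p$. Writing $x=d_a/d$ and $y=d_b/d$, the constraint reads $x^p+y^p\le 1-\delta$, and we want an upper bound on $\sqrt x+\sqrt y$. Since $p=0.48<1/2$, the function $t\mapsto t^{1/2}$ is a convex-decreasing reparametrization of $t\mapsto t^p$; concretely, $\sqrt x = (x^p)^{1/(2p)}$ with exponent $1/(2p)>1$, so by convexity of $s\mapsto s^{1/(2p)}$ and subadditivity considerations one gets $\sqrt x+\sqrt y\le (x^p+y^p)^{1/(2p)}\le (1-\delta)^{1/(2p)}$. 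Hence $\sqrt{d_a}+\sqrt{d_b}\le (1-\delta)^{1/(2p)}\sqrt d$. Because $\tfrac{1}{2p}>1$ and $0<\delta<1$, we have $(1-\delta)^{1/(2p)}\le 1-\delta$ (raising a number in $(0,1)$ to a power larger than $1$ only decreases it). Plugging in, it remains to verify $c(1-\delta)\le c-2$, which is exactly $c\delta\ge 2$, i.e.\ $c\ge 2/\delta$ — true by our choice $c=2/\delta$. This closes the induction.

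I would present the computation $\sqrt x+\sqrt y\le(x^p+y^p)^{1/(2p)}$ carefully, since that is the one genuinely non-trivial inequality: it is the statement that for exponent $r=1/(2p)>1$ and nonnegative reals $a=x^p,b=y^p$, we have $a^r+b^r\le(a+b)^r$, which holds because $r>1$ (the $\ell_r$ "norm" is dominated by the $\ell_1$ "norm" on the positive orthant). One must also note $d_a,d_b<d$ strictly: this follows from $d_a^p\le d_a^p+d_b^p\le(1-\delta)d^p<d^p$ together with $\delta>0$ and $p>0$, so that the inductive hypothesis is legitimately available; the edge cases $d_a=0$ or $d_b=0$ are harmless since $f(0)=0$ and $\sqrt 0=0$.

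The main obstacle is purely in getting the chain of exponent inequalities right — in particular making sure the inequality $(1-\delta)^{1/(2p)}\le 1-\delta$ points in the direction we need (it does, precisely because $1/(2p)=1/0.96>1$), and that the super-additivity $a^r+b^r\le(a+b)^r$ for $r>1$ is invoked rather than its sub-additive counterpart. Everything else is bookkeeping: the base cases, the reduction to a numerical inequality, and the final substitution $c=2/\delta$. No case analysis beyond the trivial $d_a$ or $d_b$ equal to zero is needed.
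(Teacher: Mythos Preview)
Your proof is correct and follows essentially the same approach as the paper: strong induction on $d$, reducing to the inequality $\sqrt{d_a}+\sqrt{d_b}\le(1-\delta)\sqrt d$, and then closing with $c=2/\delta$. The only difference is cosmetic: the paper derives this inequality in one step by observing that for each $t=d_a/d,d_b/d\in[0,1)$ and $p<1/2$ one has $t^{1/2}\le t^{p}$ directly, whereas you pass through the super-additivity $a^{r}+b^{r}\le(a+b)^{r}$ for $r=1/(2p)>1$ and then use $(1-\delta)^{1/(2p)}\le 1-\delta$; both routes yield the same bound.
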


\begin{proof}
We prove the statement by induction on $d$. The statement is obviously true if $d\leq 1$, given that $c>2$. Suppose next that $d\geq 2$. Let $d_a$ and $d_b$ be non-negative integers at which the maximum in the definition of~$f(d)$ is achieved, i.e., $d_a^p+d_b^p\leq (1-\delta)d^p$ and $f(d)=f(d_a)+f(d_b)+2\sqrt d$. Since $\delta>0$, we have $d_a<d$ and $d_b<d$. Dividing  $d_a^p+d_b^p\leq (1-\delta)d^p$ by $d^p$ we get $\left(\frac{d_a}{d}\right)^p+\left(\frac{d_b}{d}\right)^p\leq (1-\delta)$. By $d_a<d$, $d_b<d$, and $p<1/2$, we get $\left(\frac{d_a}{d}\right)^{1/2}+\left(\frac{d_b}{d}\right)^{1/2}<\left(\frac{d_a}{d}\right)^p+\left(\frac{d_b}{d}\right)^p$, which implies that $\left(\frac{d_a}{d}\right)^{1/2}+\left(\frac{d_b}{d}\right)^{1/2}<(1-\delta)$, that is, $\sqrt d_a +\sqrt d_b<(1-\delta)\sqrt d$. Therefore, from $f(d)=f(d_a)+f(d_b)+2\sqrt d$ by induction we get $f(d)\leq c\sqrt d_a +c\sqrt d_b +2\sqrt d$, which by the previous inequality is smaller than $c(1-\delta)\sqrt d+2\sqrt d$, which is equal to $c\sqrt d$ for $c=2/\delta$.
\end{proof}

Let $G$ be an $n$-vertex outerplane graph; refer to \cref{fig:decomposition}. In what follows, we assume~$G$ to be biconnected. Indeed, if it is not, it can be augmented to a biconnected outerplane graph~$G'$ by introducing edges in its outer face. Then the restriction of an embedding-preserving internally-convex grid drawing $\Gamma'$ of~$G'$ yields an embedding-preserving internally-convex grid drawing of~$G$ that inherits the area upper bounds of~$\Gamma'$. Let $u$ and $v$ be any two vertices such that $v$ immediately precedes $u$ in counter-clockwise order along the outer face of~$G$. We root $G$ at the edge~$(u,v)$. 
Recall that $G[u,v]$ denotes the outerplanar graph $G$ rooted at the edge~$(u,v)$.

We show that $G[u,v]$ admits a (small-area) \emph{$uv$-separated drawing}. This is an embedding-preserving internally-convex grid drawing $\Gamma$ of~$G$ satisfying the following properties:
\begin{enumerate}[label=\bf {P.}\arabic*,leftmargin=21pt]
    \item\label{prp:1}$u$ and $v$ lie on a horizontal line $h_0$;
    \item\label{prp:2}all neighbors of~$u$ and $v$ lie on the horizontal line $h_1$ one unit below $h_0$;
    \item\label{prp:3}all the other vertices of~$G$ lie not above $h_1$; and
    \item\label{prp:4}all vertices different from $u$ (from $v$) are to the right of $u$ (resp.\ to the left of $v$).
\end{enumerate}

The following property is a direct consequence of the definition of~$uv$-separated drawing.

\begin{figure}
    \centering
    \includegraphics[width=.8\textwidth]{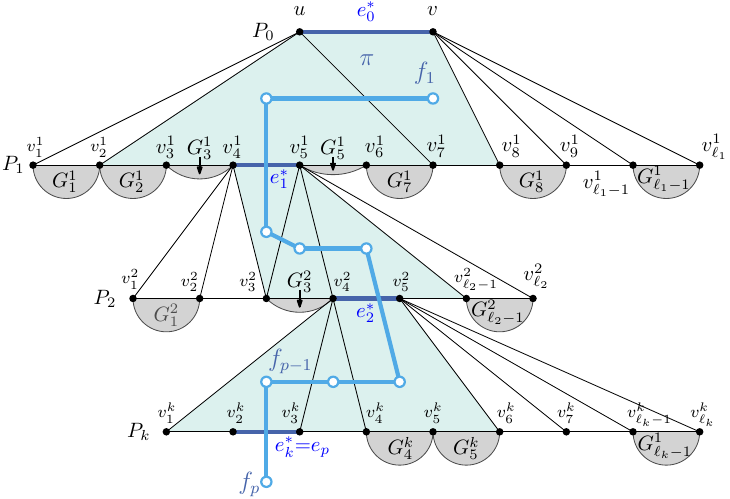}
    \caption{
    An outerplanar graph $G[u,v]$ rooted at the external edge $(u,v)$. The outerpath $G[\pi]$ dual to the path $\pi$ selected by \cref{th:decomposition} is light-blue shaded. Transition edges are drawn~thick~blue.}
    \label{fig:decomposition}
\end{figure}

\begin{property}\label{pro:shift}
Let $\Gamma$ be a $uv$-separated drawing of~$G[u,v]$. Vertices $u$ and $v$ can be shifted arbitrarily by integer amounts to the left and right, respectively, while maintaining $\Gamma$ an outerplanar internally-convex grid drawing.
\end{property}

If $n\geq 3$, then let $T$ be the extended weak dual tree of~$G[u,v]$ and let $d$ be the number of vertices of~$T$. If $n=2$, then $T$ is not defined and we set $d=1$.  This choice is justified by the size of the subtrees of~$T$ we generate. Indeed, if $n\geq 3$, we will select a root-to-leaf path $\pi$ in $T$ and consider left and right subgraphs of~$G$ at $\pi$. If any of such subgraphs is just an edge, it is associated with a subtree of~$T$ which consists of a single vertex, from which the correspondence between $n=2$ and $d=1$ stems. 

We show an algorithm that, by induction on $n$, constructs a $uv$-separated drawing $\Gamma$ of~$G[u,v]$ satisfying the following two properties: 
\begin{itemize}
\item \label{prop:W}\hyperref[prop:W]{\em Property W:} every vertical grid line intersecting $\Gamma$ contains at least one vertex of~$G$ (and thus the width of~$\Gamma$ is at most $n$); and
\item \label{prop:H}\hyperref[prop:H]{\em Property H:} every horizontal grid line intersecting $\Gamma$ contains at least one vertex of~$G$ and the height of~$\Gamma$ is at most $f(d)$.
\end{itemize}
The statement implies \cref{th:non-strictly}, given that $f(d)\in O(\sqrt d)$ by \cref{le:induction_works} and since $d\leq 2n-3$.

Denote by~$h(d)$ the maximum value of~$h(\Gamma)$, over all drawings $\Gamma$ constructed by our algorithm for rooted outerplane graphs whose extended weak dual has $d$ vertices.

In the base case, we have $n=2$ (and thus $d=1$). Hence, $G$ is the edge $(u,v)$ and a drawing $\Gamma$ satisfying \hyperref[prop:W]{\em Properties W} and \hyperref[prop:H]{\em H} is constructed by placing $u$ on a grid point one unit to the left of~$v$. In particular, note that $h(\Gamma)=1$ and that~$f(1)=1$.

In the inductive case, we have $n>2$, hence $G$ has internal faces, since it is biconnected. Let $f_1$ be the root of~$T$, which is the internal face of~$G$ incident to $(u,v)$. Let $\pi=(f_1,f_2,\dots,f_p)$ be the root-to-leaf path in $T$ from \cref{th:decomposition}, where $e_p$ is the edge of~$G$ dual to $(f_{p-1},f_p)$. 

We introduce some notation and definitions. Call $e^*_0=(u,v)$ a \emph{transition edge}; let $V_0:=\{u,v\}$, let the path $P_0=(v^0_0,v^0_1)$ coincide with $(u,v)$, and let the rooted outerplane graph $G^*_0$ coincide with $G[u,v]$. Suppose that, for some $i\geq 1$, a transition edge $e^*_{i-1}$ has been defined that is part of a path $P_{i-1}$. Consider the set of vertices that do not belong to $V_0\cup \dots\cup V_{i-1}$ and that belong to the faces of $G$ incident to the end-vertices of~$e^*_{i-1}$; these vertices induce a path $P_i=(v^i_1,\dots,v^i_{\ell_i})$, where the labels of the vertices are such that $u,v^i_1,v^i_{\ell_i},v$ appear in this counter-clockwise order along the outer face of~$G$. See again \cref{fig:decomposition}, where, for example, $P_2$ is the path induced by the vertices that do not belong to $V_0 \cup V_1$ (roughly speaking, by the vertices that are on the ``correct side'' of $e^*_1$) and that belong to the faces of $G$ incident to the end-vertices of~$e^*_1$ (only one of such faces is incident to both the end-vertices of~$e^*_1$, the other ones are incident to a single end-vertex of~$e^*_1$). Only one of the edges of~$P_i$ might be dual to an edge of~$\pi$; this edge of~$P_i$ is the transition edge $e^*_i$ with end-vertices $v^i_{s_i}$ and $v^i_{s_i+1}$, for some $1\leq s_i\leq \ell_i-1$. The set $V_i$ comprises the vertices $v^i_1,\dots,v^i_{\ell_i}$ of~$P_i$, as well as the vertices of the subgraphs $G^i_j:=G[v^i_j,v^i_{j+1}]$ of~$G$, for $j=1,\dots,\ell_i-1$ with $j\neq s_i$. Observe that, for $j=1,\dots,s_i-1$, the graph $G^i_j$ is a left subgraph of~$G$ at $\pi$ or a proper subgraph of a left subgraph of~$G$ at $\pi$. Similarly, for $j=s_i+1,\dots,\ell_i-1$, the graph $G^i_j$ is a right subgraph of~$G$ at $\pi$ or a proper subgraph of a right subgraph of~$G$ at $\pi$. We denote by $G^*_i$ the graph $G[v^i_{s_i},v^i_{s_i+1}]$. For the last-defined path $P_k$, we have that either the transition edge $e^*_k$ does not exist or, if it does, it is the edge $e_p$, which is incident to the outer face of~$G$ and thus $G^*_k$ coincides with the edge $e^*_k$. An example of a situation in which $e^*_k$ does not exist is the one in which $P_k$ is a single vertex. This happens if $f_{p-1}$, the second-to-last vertex of $\pi$, is a triangular face delimited by the end-vertices of $e^*_{k-1}$ and by one additional vertex, which coincides with $P_k$. In this case $e_p$, the edge of $G$ dual to the last edge $(f_{p-1},f_p)$ of $\pi$, connects a vertex of $e^*_{k-1}$ with the unique vertex of $P_k$ and $e^*_{k}$ does not exist.





Ideally, we would like the $uv$-separated drawing we construct to have the following features. The vertices $v^i_1,\dots,v^i_{\ell_i}$ of each path $P_i$ should appear on a common horizontal line $h_i$ in this left-to-right order, and~$h_i$ should lie one unit above $h_{i+1}$. 
The subgraphs $G^i_j$ with $j\neq s_i$ should be recursively drawn and placed below~$h_i$, except for the edge $(v^i_j,v^i_{j+1})$ which lies on~$h_i$, in such a way that any two of these subgraphs are \emph{horizontally disjoint}, that is, there exists a vertical line that keeps the drawings of the two subgraphs on different sides, except, possibly, for a single vertex shared by the two subgraphs which lies on the line. The subgraph~$G^i_{s_i}$ is not drawn recursively, as it coincides with the subgraph $G^*_i$ on which the level-by-level construction we are describing is further applied. Recall that $k$ is the index of the path $P_k$ defined last in the procedure discussed above.
If $k\leq \sqrt n$ ({\bf Case 1}), this plan can be accomplished. Otherwise ({\bf Case~2}), by a simple vertex counting argument, there exists an index $2\leq t \leq 1+\sqrt n$ such that the set $V_t$ has at most $\sqrt n$ vertices. In this case, we draw the vertices in $V_t$ differently than before, so that the transition edge~$e^*_t$ is vertical. The small size of~$V_t$ allows us to ``turn'' from horizontal to vertical without increasing the height dramatically. From $e^*_t$ towards the end of~$\pi$, the construction proceeds by drawing the vertices of the outerpath dual to $\pi$ on two horizontal lines, and by placing the recursively-constructed drawings of the right and left subgraphs of~$G$ at $\pi$ above and below it, respectively.

\begin{figure}
    \centering
    \includegraphics[page=2,width=.9\textwidth]{Figures/decomposition.pdf}
    \caption{Construction for a $uv$-separated drawing of~$G[u,v]$ when $k \leq \sqrt{n}$. To avoid cluttering, vertical and horizontal proportions are not respected, but vertices are assumed to lie on the grid.}
    \label{fig:simple-case}
\end{figure}


We now formalize this argument. As sketched above, we distinguish two cases. 
In {\bf Case~1}, we have $k\leq \sqrt n$; refer to \cref{fig:simple-case}. For $i=1,\dots,k$ and for $j=1,\dots,\ell_i-1$ with $j\neq s_i$, we recursively construct a $v^i_jv^i_{j+1}$-separated drawing $\Gamma^i_j$ of~$G^i_j[v^i_j,v^i_{j+1}]$. 
We construct, for $i=k-1,k-2,\dots,0$, a  $v^i_{s_i}v^i_{s_i+1}$-separated drawing $\Gamma^*_i$ of~$G^*_i[v^i_{s_i},v^i_{s_i+1}]$ as described in the following. 
Notice that the order of the indices ensures that, if $G^*_{i+1}$ exists, then its drawing~$\Gamma^*_{i+1}$ has been already constructed once $\Gamma^*_i$ is about to be constructed. In order to start the iteration, if $e^*_k$ does not exist, then $G^*_k$ does not exist either, and hence $\Gamma^*_k$ does not need to be defined. If $e^*_k$ exists, then $G^*_k$ coincides with such an edge and $\Gamma^*_k$ is constructed by placing $v^k_{s_k}$ on a grid point one unit to the left of~$v^k_{s_k+1}$. In order to construct $\Gamma^*_i$, we place the recursively constructed drawings $\Gamma^{i+1}_j$ with $j=1,\dots,\ell_{i+1}-1$ with $j\neq s_{i+1}$, as well as the drawing $\Gamma^*_{i+1}$ (if $G^*_{i+1}$ exists), side by side, so that the placement of a vertex $v^{i+1}_j$ coincides in two drawings it belongs to; these are $\Gamma^{i+1}_{j-1}$ and $\Gamma^{i+1}_j$ if $j\notin \{s_{i+1},s_{i+1}+1\}$,
or are $\Gamma^{i+1}_{s_{i+1}-1}$ and $\Gamma^*_{i+1}$ if 
$j=s_{i+1}$,
or are $\Gamma^*_{i+1}$ and $\Gamma^{i+1}_{s_{i+1}+1}$ if 
$j=s_{i+1}+1$.
This ensures that the vertices of~$P_{i+1}$ all lie on the same horizontal line $h_{i+1}$. Further, we place $v^i_{s_{i}}$ one unit above and one unit to the left of the leftmost vertex of~$P_{i+1}$, and $v^i_{s_{i}+1}$ one unit above and one unit to the right of the rightmost vertex of~$P_{i+1}$. This results in the desired $v^i_{s_{i}}v^i_{s_{i}+1}$-separated drawing $\Gamma^*_i$ of~$G[v^i_{s_i},v^i_{s_i+1}]$. When $i=0$, the resulting drawing $\Gamma:=\Gamma^*_0$ is a $uv$-separated drawing of~$G[u,v]$, as established by the following lemma. 

\begin{lemma}\label{le:Case1}
In Case 1, the constructed drawing $\Gamma$ is a $uv$-separated drawing of~$G[u,v]$ satisfying 
\hyperref[prop:W]{Properties W} and \hyperref[prop:H]{H}.
\end{lemma}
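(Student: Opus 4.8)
The plan is to prove, by induction on $n$, the stronger statement that the algorithm produces a $uv$-separated drawing of $G[u,v]$ satisfying \hyperref[prop:W]{Properties~W} and~\hyperref[prop:H]{H}; the lemma is exactly the Case~1 instance of the inductive step, and the base case $n=2$ has already been settled. For the inductive step in Case~1 I would run a secondary, downward induction on $i=k,k-1,\dots,0$, showing that each partial drawing $\Gamma^*_i$ is a $v^i_{s_i}v^i_{s_i+1}$-separated drawing of $G^*_i[v^i_{s_i},v^i_{s_i+1}]$ satisfying \hyperref[prop:W]{Properties~W} and~\hyperref[prop:H]{H}; the case $i=0$ then yields the statement for $\Gamma=\Gamma^*_0$. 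The base of the secondary induction, $i=k$, is trivial since $\Gamma^*_k$ is either undefined or a single horizontal edge. Throughout, every recursively constructed drawing $\Gamma^i_j$ with $j\neq s_i$ is $v^i_jv^i_{j+1}$-separated and satisfies \hyperref[prop:W]{Properties~W} and~\hyperref[prop:H]{H} by the primary inductive hypothesis, since $G^i_j$ contains neither $u$ nor $v$ and hence has fewer than $n$ vertices.

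\textbf{Planarity and the easy properties of $\Gamma^*_i$.}
To form $\Gamma^*_i$ one places $\Gamma^{i+1}_1,\dots,\Gamma^{i+1}_{s_{i+1}-1},\Gamma^*_{i+1},\Gamma^{i+1}_{s_{i+1}+1},\dots,\Gamma^{i+1}_{\ell_{i+1}-1}$ side by side, identifying the two copies of each shared vertex $v^{i+1}_{j+1}$. By \ref{prp:4}, in each of these drawings the leftmost vertex is its left root-endpoint and the rightmost is its right root-endpoint, so consecutive drawings in the sequence occupy disjoint vertical strips meeting only along the single grid line through the identified vertex. By \ref{prp:1}--\ref{prp:3}, each of these drawings lies weakly below the line $h_{i+1}$ and meets $h_{i+1}$ exactly in its root edge. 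Hence their union is planar, places $v^{i+1}_1,\dots,v^{i+1}_{\ell_{i+1}}$ on $h_{i+1}$ in left-to-right order, and keeps every other vertex below $h_{i+1}$. We then place $v^i_{s_i}$ (resp.\ $v^i_{s_i+1}$) on the line $h_i$ one unit above $h_{i+1}$ and strictly to the left (resp.\ right) of all of $v^{i+1}_1,\dots,v^{i+1}_{\ell_{i+1}}$. The edges added at this step are the transition edge $e^*_i$ (horizontal, on $h_i$) and the edges from $v^i_{s_i}$ and $v^i_{s_i+1}$ to their neighbours among $v^{i+1}_1,\dots,v^{i+1}_{\ell_{i+1}}$; the latter fan out from the two new vertices inside the horizontal slab delimited by $h_{i+1}$ and $h_i$, touching these two lines only at their endpoints, and therefore cross neither each other, nor $e^*_i$, nor the previously placed subdrawings. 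Thus $\Gamma^*_i$ is a planar straight-line grid drawing that respects the embedding, and \ref{prp:1}--\ref{prp:4} follow directly from the placement. \hyperref[prop:W]{Property~W} is preserved, because the occupied vertical grid lines of $\Gamma^*_i$ are those of the subdrawings (each carrying a vertex by induction) together with the two fresh lines through $v^i_{s_i}$ and $v^i_{s_i+1}$; in particular $w(\Gamma^*_i)$ is at most the number of vertices of $G^*_i$. Likewise, the $y$-coordinates of the vertices of $\Gamma^*_i$ form a contiguous range of integers, since $h_i$ and $h_{i+1}$ are consecutive and each subdrawing occupies a contiguous block of rows topped by $h_{i+1}$.

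\textbf{Internal convexity of $\Gamma^*_i$.}
Every internal face of $G^*_i$ is of exactly one of three types: (i) a face inside some $\Gamma^{i+1}_j$ with $j\neq s_{i+1}$; (ii) a face inside $\Gamma^*_{i+1}$; or (iii) a face incident to $v^i_{s_i}$ or to $v^i_{s_i+1}$. Faces of type~(i) are convex by the primary inductive hypothesis and those of type~(ii) by the secondary one. For type~(iii): by the definition of $P_{i+1}$, every such face $g$ other than the unique face $\phi$ incident to $e^*_i$ has exactly one vertex outside $P_{i+1}$ --- namely $v^i_{s_i}$ or $v^i_{s_i+1}$ --- and all its remaining vertices among $v^{i+1}_1,\dots,v^{i+1}_{\ell_{i+1}}$, hence collinear on $h_{i+1}$, while the remaining vertex lies one unit above $h_{i+1}$ and strictly to one side of all of them; and $\phi$ has exactly two vertices outside $P_{i+1}$, namely $v^i_{s_i}$ and $v^i_{s_i+1}$, which lie on $h_i$ with $v^i_{s_i}$ (resp.\ $v^i_{s_i+1}$) strictly to the left (resp.\ right) of the remaining, collinear, vertices of $\phi$. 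In either case a one-line geometric check shows $g$ is a convex polygon: the interior angle at each intermediate collinear vertex equals $180^{\circ}$, while the angles at the two extreme collinear vertices and at $v^i_{s_i}$, $v^i_{s_i+1}$ are smaller than $180^{\circ}$ by the chosen positions. This completes the secondary induction, so $\Gamma=\Gamma^*_0$ is a $uv$-separated drawing of $G[u,v]$ satisfying \hyperref[prop:W]{Property~W} and the first part of \hyperref[prop:H]{Property~H}.

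\textbf{The height bound, and the main obstacle.}
It remains to show $h(\Gamma)\leq f(d)$. A direct unrolling of the construction gives
\[
h(\Gamma)=\max\Bigl\{\,k+1,\ \max_{1\leq i\leq k,\; j\neq s_i}\bigl(i+h(\Gamma^i_j)\bigr)\Bigr\}.
\]
Since $d\geq n\geq 3$, taking $d_a=d_b=0$ in the recurrence defining $f$ gives $f(d)\geq 2\sqrt d$; combined with $k\leq\sqrt n\leq\sqrt d$ this yields $k+1\leq 2\sqrt d\leq f(d)$. For the other term, fix $i$ and $j\neq s_i$ and let $d^i_j$ be the number of vertices of the extended weak dual tree of $G^i_j$. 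As $G^i_j$ is a left or right subgraph of $G$ at $\pi$, or a proper subgraph of one, its extended weak dual tree is contained in a left subtree or a right subtree of $\pi$ in $T$, hence $(d^i_j)^p\leq (1-\delta)d^p$ by \cref{th:decomposition}. Taking $d_a=d^i_j$ and $d_b=0$ in the recurrence for $f$ and using $f(0)=0$ gives $f(d)\geq f(d^i_j)+2\sqrt d$; together with $i\leq k\leq\sqrt d$ and the inductive bound $h(\Gamma^i_j)\leq f(d^i_j)$, this gives $i+h(\Gamma^i_j)\leq i+f(d^i_j)\leq 2\sqrt d+f(d^i_j)\leq f(d)$. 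Hence $h(\Gamma)\leq f(d)$, so $\Gamma$ satisfies \hyperref[prop:H]{Property~H}. I expect the main obstacle to lie not in this computation but in the two middle paragraphs: ensuring that the side-by-side placement together with the two added apex vertices never creates a crossing, and --- above all --- verifying that the internal faces of $G^*_i$ not drawn recursively are exactly the fan faces around $e^*_i$, each of which, by the way the paths $P_i$ and the transition edges are defined, consists of a horizontal run of vertices plus one (or, for $\phi$, two) apex vertices placed strictly to one side one unit higher, making convexity immediate.
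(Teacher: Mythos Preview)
Your proposal is correct and follows essentially the same approach as the paper's proof: both establish planarity, internal convexity, \cref{prp:1,prp:2,prp:3,prp:4}, and then \hyperref[prop:W]{Properties~W} and~\hyperref[prop:H]{H} via the same key observations (horizontally-disjoint side-by-side placement of the subdrawings, non-recursive faces having all their vertices on two consecutive horizontal lines, and the height bound $h(\Gamma)\leq k+\max_{i,j} h(\Gamma^i_j)$ fed into the recurrence for $f$ with $d_b=0$). Your explicit secondary induction on $i$ and your treatment of the type-(iii) faces---allowing arbitrarily many collinear vertices on $h_{i+1}$ rather than the paper's informal ``triangle or trapezoid''---are in fact slightly more careful than the paper's direct argument, but the substance is the same.
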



\begin{proof}
First, it trivially comes from the construction that $\Gamma$ is a straight-line grid drawing and that it preserves the given embedding, provided its planarity. 

We now prove that $\Gamma$ is planar. The paths $P_0,P_1,\dots,P_k$ do not cross each other in $\Gamma$, as they lie on distinct horizontal lines $h_0,h_1,\dots,h_k$. The other edges that are directly drawn by the algorithm (we call them \emph{binding edges}) connect vertices on a path $P_i$ and a path~$P_{i+1}$, hence they do no cross any of the paths $P_0,P_1,\dots,P_k$. Also, no two binding edges cross each other in $\Gamma$, given that the left-to-right order of the vertices of~$P_i$ and $P_{i+1}$ on~$h_i$ and~$h_{i+1}$ is the same as the order in which such vertices are encountered when traversing in counter-clockwise direction the boundary of the outer face of~$G$ from $u$ to $v$. Any two recursively drawn graphs $G^i_j$ do not cross each other in $\Gamma$, as they are horizontally disjoint, by construction. Likewise, for any $l>i$, any  recursively drawn graph $G^i_j$ does not cross any edge of~$P_l$ or any binding edge incident to a vertex of~$P_l$, as $G^i_j$ and such an edge are horizontally-disjoint in~$\Gamma$. Also, for any $l<i$, any   recursively drawn graph $G^i_j$ does not cross any edge of~$P_l$ or any binding edge incident to a vertex of~$P_l$, as the horizontal line $h_i$ keeps $G^i_j$ and such an edge on different sides in $\Gamma$; the fact that $G^i_j$ lies below $h_i$ is a consequence of \cref{prp:1,prp:2,prp:3} of a $v^i_jv^i_{j+1}$-separated drawing. Finally, $P_i$ and $G^i_j$ share the edge $(s^i_j,s^i_{j+1})$ and nothing else.


The convexity of each internal face $f$ of~$G$ in $\Gamma$ follows by induction if $f$ is internal to some recursively drawn graph $G^i_j$. Otherwise, $f$ is convex as it is drawn as a triangle or a trapezoid in $\Gamma$, since its incident vertices lie on two horizontal lines $h_i$ and $h_{i+1}$, for some $0\leq i\leq k-1$. \cref{prp:1,prp:2,prp:3,prp:4} of a $uv$-separated drawing are easily implied by our construction. In particular, $u$ and $v$ are the vertices last placed by our construction, and their placement satisfies \cref{prp:1} and \cref{prp:4}. The neighbors of~$u$ and $v$ all lie on $h_1$, which is one unit below the horizontal line $h_0$ through $u$ and $v$, thus satisfying \cref{prp:2}. Finally, each vertex of~$G$ different from $u$ and $v$ lies on $h_1,\dots,h_k$ or belongs to a recursively constructed drawing whose top side is on one of such lines. This ensures \cref{prp:3}.

In order to complete the proof, it remains to prove \hyperref[prop:W]{\em Property W} and \hyperref[prop:W]{\em Property H}.

\hyperref[prop:W]{\em Property W} easily follows by induction, given that every vertical grid line intersecting $\Gamma$ intersects a recursively constructed drawing or a vertex directly placed by the algorithm. 

The first part of \hyperref[prop:H]{\em Property H} also follows by induction, given that every horizontal grid line intersecting $\Gamma$ intersects a recursively constructed drawing or a vertex directly placed by the algorithm. Concerning the second part, we have that $h(\Gamma)$ is at most $k$ plus the maximum height of a recursively constructed drawing $\Gamma^i_j$. Indeed, the paths $P_0,P_1,\dots,P_k$ occupy $k+1$ consecutive horizontal lines $h_0,h_1,\dots,h_k$, respectively. No vertex of~$\Gamma$ lies above $h_0$. Furthermore, every recursively constructed drawing $\Gamma^i_j$ is placed in $\Gamma$ so that it intersects at least one of the lines $h_0,h_1,\dots,h_k$. Note that $k\leq \sqrt n$ by assumption and hence $k\leq \sqrt d$, given that $d\geq n$. Assume that $i$ and $j$ are the indices such that $h(\Gamma^i_j)$ is maximum. Let $d^i_j$ be the number of vertices in the extended weak dual tree of~$G^i_j$. Hence, we have $h(\Gamma)\leq \sqrt d+h(\Gamma^i_j)$ and thus, by induction, $h(\Gamma)\leq \sqrt d+f(d^i_j)$. Since $G^i_j$ is a subgraph of a left subgraph of~$G$ at $\pi$ or of a right subgraph of~$G$ at $\pi$, by \cref{th:decomposition}, it holds true that $(d^i_j)^p\leq (1-\delta)d^p$. By the definition of~$f(d)$ in \cref{le:induction_works} (with $d_a=d^i_j$ and $d_b=0$), we have that $\sqrt d+f(d^i_j)\leq f(d)$, hence $h(\Gamma)\leq f(d)$, which proves \hyperref[prop:H]{\em Property H}.
\end{proof}



We now describe the construction of {\bf Case 2}. Recall that, in this case, we have $k\geq 1+\sqrt n$ We start the description with the following simple combinatorial lemma. 

\begin{lemma} \label{le:small-subgraph}
There exists an index $t$ with $2\leq t\leq 1+\sqrt n$ such that $|V_t|\leq \sqrt n$. 
\end{lemma}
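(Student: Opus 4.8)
The plan is to use a simple double-counting argument on the vertices of the sets $V_1, V_2, \dots$. First I would recall the key structural fact from the level-by-level construction: the sets $V_1, \dots, V_k$ partition the vertices of $G$ other than those on the transition edge $e^*_0=(u,v)$ — more precisely, every vertex of $G$ lies in exactly one $V_i$ with $0 \le i \le k$, where $V_0 = \{u,v\}$. (A vertex $v^i_{s_i}$ or $v^i_{s_i+1}$ on a transition edge is placed in the set $V_i$ of the path $P_i$ containing it, and subsequent sets $V_{i+1}, V_{i+2}, \dots$ only collect vertices \emph{not} already in $V_0 \cup \dots \cup V_i$; this is exactly how $P_{i+1}$ was defined.) Hence $\sum_{i=1}^{k} |V_i| \le n$, since $|V_0| = 2$ and all these sets are pairwise disjoint subsets of $V(G)$.

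Next I would restrict attention to the indices $i$ in the range $2 \le i \le 1+\sqrt{n}$. In Case~2 we have $k \ge 1 + \sqrt n$, so this is a genuine, nonempty range of indices, all of which satisfy $i \le k$, so the corresponding paths $P_i$ (and sets $V_i$) are all defined. There are exactly $\lceil \sqrt n \rceil$ such indices (namely $2, 3, \dots, 1+\lceil\sqrt n\rceil$, or however one rounds; in any case at least $\sqrt n$ of them). Since these sets are pairwise disjoint and their total size is at most $n$, by the pigeonhole principle at least one of them, say $V_t$, satisfies $|V_t| \le n / \sqrt n = \sqrt n$. This $t$ lies in $[2, 1+\sqrt n]$ by construction, which is exactly the claim.

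I do not expect any real obstacle here; the only point requiring a line of care is the disjointness claim, i.e.\ checking that no vertex is counted in two different $V_i$'s. This follows directly from the definition of $P_i$ as the path induced by vertices \emph{not} belonging to $V_0 \cup \dots \cup V_{i-1}$ that are incident to the faces meeting $e^*_{i-1}$, together with the fact that $V_i$ additionally absorbs the vertices of the subgraphs $G^i_j$ hanging off $P_i$ — and these subgraphs, being subtrees of the extended weak dual rooted at distinct subtrees of the chosen path, are vertex-disjoint except at the endpoints $v^i_j$, which already lie on $P_i \subseteq V_i$. So every vertex of $G$ is assigned to exactly one $V_i$, the sum bound $\sum_{i\ge 1} |V_i| \le n$ holds, and the pigeonhole step closes the argument. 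One might alternatively phrase the bound as an average: among the $\ge \sqrt n$ disjoint sets $V_2, \dots, V_{1+\sqrt n}$, the average size is at most $n/\sqrt n = \sqrt n$, so the minimum is at most $\sqrt n$.
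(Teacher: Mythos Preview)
Your proposal is correct and follows exactly the same route as the paper's proof: the sets $V_2,\dots,V_{1+\sqrt n}$ are pairwise disjoint subsets of $V(G)$, so by averaging the smallest has at most $\sqrt n$ vertices. The paper states this in a single sentence without spelling out the disjointness justification you give, but the argument is the same.
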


\begin{proof}
Since the sets $V_2,V_3,\dots,V_{1+\sqrt n}$ are disjoint and since there are $\sqrt n$ of them, the smallest among them contains at most $\sqrt n$ vertices.   
\end{proof}


The construction for Case 2 proceeds in three steps. In \textbf{Step 1}, we construct a drawing $\Gamma^*_t$ of~$G^*_t$; in \textbf{Step 2}, we augment $\Gamma^*_t$ to a drawing $\Gamma^*_{t-1}$ of~$G^*_{t-1}$; finally, in \textbf{Step 3}, we augment $\Gamma^*_{t-1}$ to the desired drawing $\Gamma$ of~$G$.

\begin{figure}[tb]
    \centering   \includegraphics[page=3,width=\textwidth]{Figures/decomposition.pdf}
    \caption{(Left) Construction of the drawing $\Gamma^*_t$ of~$G^*_t$. Labels $\Lambda^u_j$ and $\Lambda^v_j$ are omitted for single edges. (Right) Polygon used to represent the drawing of~$\Gamma^*_t$ in the subsequent figures.}
    \label{fig:drawing-G*_t}
\end{figure}

\subparagraph{\bf Step 1:} We construct a drawing $\Gamma^*_t$ of~$G^*_t$; refer to \cref{fig:drawing-G*_t}. Recall that the transition edge $e^*_t$ has end-vertices $v^t_{s_t}$ and $v^t_{s_t+1}$ and that $G^*_t$ is the graph $G[v^t_{s_t},v^t_{s_t+1}]$. Let $(f_j,f_{j+1})$ be the edge of~$\pi$ dual to $e^*_t$, let $\pi'$ be the subpath $(f_{j+1},f_{j+2},\dots,f_p)$ of~$\pi$, and let  $(u_0=v^t_{s_t},u_1,\dots,u_x,w_y,w_{y-1},\dots,w_1,w_0=v^t_{s_t+1})$ be the counter-clockwise order of the vertices along the outer face of~$G[\pi']$, where $(u_x,w_y)$ is the edge $e_p$. 
We initialize $\Gamma^*_t$ by drawing~$e^*_t$ as a vertical segment of height $1$. For $j=1,\dots,x$, we recursively construct a drawing $\Lambda^u_j$ of~$G[u_{j-1},u_j]$. We place the drawings $\Lambda^u_1,\dots,\Lambda^u_x$ side by side, so that the placement of a vertex $u_j$ coincides in the drawings $\Lambda^u_j$ and $\Lambda^u_{j+1}$ where it appears, for $j=1,\dots,x-1$, and so that the placement of~$u_0$ coincides in $\Lambda^u_1$ and in the drawing of~$e^*_t$. Analogously, for $j=1,\dots,y$, we recursively construct a drawing $\Lambda^w_j$ of~$G[w_j,w_{j-1}]$. Unlike for $\Lambda^u_1,\dots,\Lambda^u_x$, we rotate the drawings $\Lambda^w_1,\dots,\Lambda^w_y$ by $180^\circ$, and place them side by side, so that the placement of a vertex $w_j$ coincides in $\Lambda^w_j$ and $\Lambda^w_{j+1}$, for $j=1,\dots,y-1$, and so that the placement of~$w_0$ coincides in $\Lambda^w_1$ and in the drawing of~$e^*_t$. This completes the construction of~$\Gamma^*_t$.

\begin{figure}[tb!]
    \centering
    \includegraphics[page=4, width=\textwidth]{Figures/decomposition.pdf}\hfill
    \caption{(Left) Construction of $\Gamma^*_{t-1}$ from $\Gamma^*_t$, whose boundary is depicted by the yellow-tiled region. (Right) Polygon used to represent~$\Gamma^*_{t-1}$ in the subsequent figures.}
    \label{fig:drawing-G*_t-1}
\end{figure}

\subparagraph{\bf Step 2:} 
We now augment $\Gamma^*_t$ to a drawing $\Gamma^*_{t-1}$ of~$G^*_{t-1}$; refer to \cref{fig:drawing-G*_t-1}. Recall that the path $P_t$ has vertices $(v^t_1,v^t_2,\dots,v^t_{s_t},v^t_{s_t+1},\dots,v^t_{\ell_t})$. For $j=1,\dots,\ell_t-1$ with $j\neq s_t$, we recursively construct a $v^t_jv^t_{j+1}$-separated drawing $\Gamma^t_j$ of~$G[v^t_j,v^t_{j+1}]$. 
We place the recursively constructed drawings $\Gamma^t_1,\dots,\Gamma^t_{s_t-1}$ side by side, so that the placement of a vertex $v^t_j$ coincides in $\Gamma^t_{j-1}$ and $\Gamma^t_{j}$, for $j=2,\dots,s_t-1$, and so that the placement of~$v^t_{s_t}$ coincides in $\Gamma^t_{s_t-1}$ and in $\Gamma^*_t$. Next, we rotate the drawings $\Gamma^t_{s_t+1},\dots,\Gamma^t_{\ell_t-1}$ in counter-clockwise direction by $90^\circ$, and we stack them one on top of the other, so that the placement of a vertex $v^t_j$ coincides in $\Gamma^t_{j-1}$ and $\Gamma^t_{j}$, for $j=s_t+2,\dots,\ell_t-1$, and so that the placement of~$v^t_{s_t+1}$ in  $\Gamma^t_{s_t+1}$ is on the same vertical line as in $\Gamma^*_t$ and on the same horizontal line as the highest vertex in $\Gamma^*_t$ (refer to the left side of \cref{fig:change}). 
\begin{figure}[tb!]
  \begin{center}
\includegraphics[page=5,width=.5\textwidth]{Figures/decomposition.pdf}
  \end{center}
  \caption{
  Transformation of $\Gamma^t_{s_t+1}$ shifting downward $v^t_{s_t+1}$ while avoiding overlapping with $\Gamma^*_t$.
  }\label{fig:change}
\end{figure}
This might provide a double placement for $v^t_{s_t+1}$ (unless $v^t_{s_t+1}$ is the highest vertex in $\Gamma^*_t$). The issue is resolved by keeping for $v^t_{s_t+1}$ the position it has in $\Gamma^*_t$; note that, in $\Gamma^t_{s_t+1}$, this amounts to shifting $v^t_{s_t+1}$ downwards (refer to the right side of \cref{fig:change}). The drawing of~$\Gamma^*_{t-1}$ is completed by placing $v^{t-1}_{s_{t-1}}$ and $v^{t-1}_{s_{t-1}+1}$ on the same horizontal line, one unit higher than the highest vertex in $\Gamma^*_{t-1}$ so far (this is either $v^t_{\ell_t}$ if $s_t\leq \ell_t-2$ or the highest vertex in $\Gamma^*_t$ if $s_t=\ell_t-1$), so that $v^{t-1}_{s_{t-1}+1}$ is one unit to the left of the vertical line through $e^*_t$ and $v^{t-1}_{s_{t-1}}$ is one unit to the left of the leftmost vertex in $\Gamma^*_{t-1}$ so far (this is either $v^t_1$ if $s_t\geq 2$ or $v^{t-1}_{s_{t-1}+1}$ if $s_t=1$).

\begin{figure}[tb!]
\centering
\includegraphics[page=7, width=.9\textwidth]{Figures/decomposition.pdf}\hfill
\caption{Construction of $\Gamma^*_{t-2}$ from $\Gamma^*_{t-1}$, whose boundary is depicted by the red-tiled region. The illustration shows the modification of the placement of~$v^{t-1}_{s_{t-1} +1}$ from the one it has in $\Gamma^{t-1}_{s_{t-1} +1}$.}
\label{fig:drawing-G*_t-2}
\end{figure}
\subparagraph{\bf Step 3:} Finally, we show how to augment $\Gamma^*_{t-1}$ to a $uv$-separated drawing of~$G[u,v]$. This is done similarly to Case 1. Namely, we construct, for $i=t-2,t-3,\dots,0$, a  $v^i_{s_i}v^i_{s_i+1}$-separated drawing $\Gamma^*_i$ of 
$G[v^i_{s_i},v^i_{s_i+1}]$ from an already constructed  $v^{i+1}_{s_{i+1}}v^{i+1}_{s_{i+1}+1}$-separated drawing $\Gamma^*_{i+1}$ of  $G[v^{i+1}_{s_{i+1}},v^{i+1}_{s_{i+1}+1}]$. There is one exception to this strategy. Namely, in Step 2, we constructed $\Gamma^*_{t-1}$, which is not a $v^{t-1}_{s_{t-1}}v^{t-1}_{s_{t-1}+1}$-separated drawing of  $G[v^{t-1}_{s_{t-1}},v^{t-1}_{s_{t-1}+1}]$, as it does not satisfy \cref{prp:2,prp:4}. However, $\Gamma^*_{t-1}$ does satisfy \cref{prp:1,prp:3} and this is enough for the construction of a $v^{t-2}_{s_{t-2}}v^{t-2}_{s_{t-2}+1}$-separated drawing $\Gamma^*_{t-2}$ of  $G[v^{t-2}_{s_{t-2}},v^{t-2}_{s_{t-2}+1}]$. As in Case 1, we place the recursively constructed drawings $\Gamma^{i+1}_j$ with $j=1,\dots,\ell_{i+1}-1$ with $j\neq s_{i+1}$, as well as the drawing $\Gamma^*_{i+1}$, side by side, so that the placement of a vertex $v^{i+1}_j$ coincides in two drawings it belongs to, with one exception in the case $i=t-2$; refer to \cref{fig:drawing-G*_t-2}. Namely, the drawing $\Gamma^{t-1}_{s_{t-1}+1}$ of~$G^{t-1}_{s_{t-1}+1}$ is embedded so that $v^{t-1}_{s_{t-1}+1}$ is on the same horizontal line as in $\Gamma^*_{t-1}$ and on the same vertical line as the rightmost vertex in $\Gamma^*_{t-1}$. This provides a double placement for $v^{t-1}_{s_{t-1}+1}$. The issue is resolved by keeping for $v^{t-1}_{s_{t-1}+1}$ the position it has in $\Gamma^*_{t-1}$; note that, in $\Gamma^{t-1}_{s_{t-1}+1}$, this amounts to shifting $v^{t-1}_{s_{t-1}+1}$ leftwards. The desired $v^i_{s_{i}}v^i_{s_{i}+1}$-separated drawing $\Gamma^*_i$ of 
$G[v^i_{s_i},v^i_{s_i+1}]$ is completed by placing $v^i_{s_{i}}$ one unit above and one unit to the left of the leftmost vertex of~$P_{i+1}$, and $v^i_{s_{i}+1}$ one unit above and one unit to the right of the rightmost vertex of~$P_{i+1}$. When $i=0$, the obtained drawing $\Gamma:=\Gamma^*_0$ is a $uv$-separated drawing of~$G[u,v]$, as proved in the following.

\begin{lemma}\label{le:Case2}
In Case 2, the constructed drawing $\Gamma$ is a $uv$-separated drawing of~$G[u,v]$ satisfying \hyperref[prop:W]{\em Property W} and \hyperref[prop:H]{\em H}.
\end{lemma}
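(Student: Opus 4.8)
The plan is to mirror the structure of the proof of \cref{le:Case1}, handling the new geometric features of Case~2 (the vertical transition edge $e^*_t$, the $90^\circ$-rotated subgraphs stacked above, and the $180^\circ$-rotated subgraphs below) one at a time. First I would establish that $\Gamma$ is a straight-line grid drawing preserving the embedding, which is immediate from the construction once planarity is known; the only genuine content is planarity, convexity of internal faces, \cref{prp:1,prp:2,prp:3,prp:4}, and \hyperref[prop:W]{Properties W} and~\hyperref[prop:H]{H}. For planarity I would argue step by step: inside $\Gamma^*_t$ (Step~1) the drawings $\Lambda^u_1,\dots,\Lambda^u_x$ are horizontally disjoint and lie weakly below the line through $e^*_t$ by \cref{prp:1,prp:2,prp:3} applied to each $v^i_jv^i_{j+1}$-separated drawing, while the $180^\circ$-rotated $\Lambda^w_j$'s are horizontally disjoint and lie weakly above; the two groups are separated by the vertical line through $e^*_t$, so $\Gamma^*_t$ is planar. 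In Step~2, the left part (the $\Gamma^t_j$ with $j<s_t$) attaches below and to the left exactly as in Case~1, and the $90^\circ$-rotated $\Gamma^t_j$ with $j>s_t$ are stacked so that each lies in a half-plane to the right of a vertical line through the previous one; the downward shift of $v^t_{s_t+1}$ in $\Gamma^t_{s_t+1}$ keeps that subgraph convex and planar since shifting a vertex that lies on the boundary of a $v^i_jv^i_{j+1}$-separated drawing along the boundary direction (here, along the vertical line through $e^*_t$, which is the ``$u$-side'' line of that rotated drawing) is harmless by \cref{pro:shift} applied in the rotated frame. Step~3 is then essentially identical to Case~1, with the single exception at $i=t-2$ that the drawing $\Gamma^{t-1}_{s_{t-1}+1}$ gets its vertex $v^{t-1}_{s_{t-1}+1}$ shifted leftwards; the same \cref{pro:shift}-type argument applies.

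Next I would verify convexity of every internal face. A face internal to some recursively drawn $\Lambda^u_j$, $\Lambda^w_j$, $\Gamma^t_j$, or $\Gamma^*_i$ is convex by induction (and rotation and the shift operations preserve convexity, the latter because they move a vertex parallel to a supporting line of the face). Every remaining internal face is a binding face between two consecutive horizontal levels $h_i,h_{i+1}$ (as in Case~1, drawn as a triangle or trapezoid) or one of the $O(1)$ faces created around the vertical transition edge $e^*_t$ in Steps~1--2; for those I would check directly that the relevant angles are at most $180^\circ$, using that the subgraphs attached on each side are contained in the appropriate half-planes. Then \cref{prp:1,prp:2,prp:3,prp:4} follow from the construction exactly as in \cref{le:Case1}: $u$ and $v$ are placed last, one unit above the leftmost/rightmost vertices of $P_1$, so they sit on a horizontal line $h_0$ with all neighbors on $h_1$ and everything else weakly below $h_1$, and the leftward/rightward placements give \cref{prp:4}.

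For \hyperref[prop:W]{Property W}, every vertical grid line meeting $\Gamma$ meets a recursively constructed subdrawing or a directly placed vertex, and each such subdrawing satisfies Property~W by induction, so the width is at most $n$; I would double-check that the $90^\circ$ rotations in Step~2 do not create empty columns, which holds because a rotated drawing that satisfied Property~H now satisfies Property~W. For \hyperref[prop:H]{Property H} the argument is the crux and where I expect the main difficulty. The height of $\Gamma$ decomposes as: the height used by the vertically stacked part in Step~2 (the $90^\circ$-rotated $\Gamma^t_j$ with $j>s_t$, whose total height is at most $\sum |V(G^t_j)| \le |V_t| \le \sqrt n$ since these are widths-turned-heights and Property~W bounds each width by the vertex count), plus the height of $\Gamma^*_{t-1}$'s remaining vertical extent, plus the $k' := t-1 \le \sqrt n$ horizontal levels added in Step~3, plus the height of $\Gamma^*_t$ itself. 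Since $|V_t|\le \sqrt n$ by \cref{le:small-subgraph} and $t-1\le \sqrt n$, the extra additive cost beyond the tallest recursively drawn piece is $O(\sqrt n)$, i.e. at most $2\sqrt n \le 2\sqrt d$; more precisely I would show $h(\Gamma) \le \max\{f(d^*_t)+2\sqrt d,\ f(d^t_j)+2\sqrt d\ \text{over relevant }j,\ \dots\}$ where each of $d^*_t$, $d^t_j$, and the dual sizes of the left/right subgraphs satisfies the bound $(\cdot)^p \le (1-\delta)d^p$ by \cref{th:decomposition} (using that all these are subgraphs of left/right subgraphs of $G$ at $\pi$, possibly with two nonzero terms when a level both stacks vertically and has horizontal neighbors). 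Then the recurrence defining $f$ in \cref{le:induction_works} gives $h(\Gamma)\le f(d)$. The delicate bookkeeping is making sure the ``$+2\sqrt d$'' is not exceeded — in particular that the height contributed by the rotated stack in Step~2 and the horizontal levels in Step~3 together do not exceed $2\sqrt d$ on top of a single recursive call — which is exactly why we chose $t$ so that $|V_t|\le\sqrt n$ and why $2\sqrt n$ (not a larger multiple) appears in the definition of $f$.
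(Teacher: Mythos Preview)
Your overall structure follows the paper's approach, but there are two genuine gaps.

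First, in the planarity argument for Step~2 you only invoke \cref{pro:shift} to say that the downward shift of $v^t_{s_t+1}$ keeps $\Gamma^t_{s_t+1}$ internally planar and convex. That is not the issue: after the shift, $\Gamma^t_{s_t+1}$ and $\Gamma^*_t$ are neither horizontally nor vertically disjoint (both occupy the strip just to the right of the vertical line $\ell$ through $e^*_t$, and their $y$-ranges overlap), so you must argue that no edge of one crosses an edge of the other. The paper's argument uses \cref{prp:4} to conclude that $v^t_{s_t+1}$ is the unique lowest vertex of the rotated $\Gamma^t_{s_t+1}$, and then crucially uses \cref{prp:2} (after rotation, all neighbors of $v^t_{s_t+1}$ lie on the vertical line one unit right of $\ell$) to see that the edges incident to $v^t_{s_t+1}$ stay above every edge of $\Gamma^*_t$ they could meet. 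The analogous argument in Step~3 for $\Gamma^{t-1}_{s_{t-1}+1}$ versus $\Gamma^*_{t-1}$ is likewise missing. (A smaller point: your geometric description of Step~1 is garbled---both the $\Lambda^u_j$ and the $\Lambda^w_j$ lie to the \emph{right} of the vertical line through $e^*_t$, not on opposite sides of it; the two groups are separated by the horizontal lines through the endpoints of $e^*_t$.)

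Second, your bound for \hyperref[prop:H]{Property~H} is structurally wrong. You write $h(\Gamma)\le \max\{f(d^*_t)+2\sqrt d,\dots\}$, but $G^*_t$ is not a left or right subgraph of $G$ at $\pi$ (it contains the tail $\pi'$ of the path), so no inequality $(d^*_t)^p\le(1-\delta)d^p$ is available and you cannot apply induction to it. The correct decomposition is that $h(\Gamma^*_t)$ equals the maximum height among the $\Lambda^u_j$ \emph{plus} the maximum height among the $\Lambda^w_j$; the former are (subgraphs of) left subgraphs of $G$ at $\pi$ and the latter are (subgraphs of) right subgraphs, so their dual sizes $d_a,d_b$ satisfy $d_a^p+d_b^p\le(1-\delta)d^p$ by \cref{th:decomposition}. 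That sum, not a max, is precisely what the two-term recurrence $f(d_a)+f(d_b)+2\sqrt d$ in \cref{le:induction_works} is designed for. Your parenthetical ``possibly with two nonzero terms when a level both stacks vertically and has horizontal neighbors'' misidentifies where the two terms arise.
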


\begin{proof}
First, it trivially comes from the construction that $\Gamma$ is a straight-line grid drawing and that it preserves the given embedding, provided its planarity.

We now prove that $\Gamma$ is planar. In order to do that, we prove that the drawing produced at each step of the algorithm's construction is planar. As in the proof of \cref{le:Case1}, our arguments use (without explicitly mentioning it) the fact that every recursively constructed $z_1z_2$-separated drawing lies, apart from the edge $(z_1,z_2)$, below the horizontal line through $z_1$ and $z_2$. This is a consequence of \cref{prp:1,prp:2,prp:3} of a $z_1z_2$-separated drawing.
\begin{itemize}
\item {\em Step 1.} The drawing $\Gamma^*_t$ is easily proved to be planar. Indeed, the paths $(v^t_{s_t}=u_0,u_1,\dots,u_x)$ and $(v^t_{s_t+1}=w_0,w_1,\dots,w_y)$ on the boundary of~$G[\pi']$ lie on two horizontal lines in $\Gamma^*_t$. This implies that they do not cross each other, that they do not cross any edge $(u_i,w_j)$, and that no two edges $(u_i,w_j)$ and $(u_{i'},w_{j'})$ cross each other. Each graph $G[u_i,u_{i+1}]$ does not cross any graph $G[u_j,u_{j+1}]$, as such graphs are horizontally disjoint in $\Gamma^*_t$, does not cross any subgraph $G[w_j,w_{j+1}]$ as it is vertically disjoint from it, and does not cross $G[\pi']$ as it is vertically disjoint from it, except for the edge $(u_i,u_{i+1})$ which is shared by the two graphs. Similarly, each graph $G[w_j,w_{j+1}]$ is free of crossings in $\Gamma^*_t$.
\item {\em Step 2.} We now deal with $\Gamma^*_{t-1}$. Path $P_t$ is represented in $\Gamma^*_{t-1}$ as a polygonal line composed of a vertical segment on a line $\ell$ and of a horizontal segment to the left of~$\ell$. Since $G^*_{t}$ lies entirely to the right of  $\ell$ in $\Gamma^*_{t-1}$, except for the edge $(v^t_{s_t},v^t_{s_t+1})$ which is shared with $P_t$, it follows that $P_t$ does not cross $G^*_{t}$. The recursively drawn graphs $G^t_j$ with $j<s_t$ do not cross each other as they are horizontally disjoint, and do not cross~$G^*_{t}$ and the recursively drawn graphs~$G^t_j$ with~$j>s_t$ as they are horizontally disjoint from them in $\Gamma^*_{t-1}$. Also, the recursively drawn graphs $G^t_j$ with $j>s_t$ do not cross each other as they are vertically disjoint, and the recursively drawn graphs $G^t_j$ with $j>s_t+1$ do not cross $G^*_t$ as they are vertically disjoint from it. The proof that $G^t_{s_t+1}$ and $G^*_t$ do not cross each other in $\Gamma^*_{t-1}$ uses the properties of a $v^t_{s_t+1}v^t_{s_t+2}$-separated drawing (and in fact, the avoidance of such crossings is the driving force behind the definition of our drawing invariant). Indeed, by \cref{prp:4} and by the $90^\circ$ counter-clockwise rotation of~$\Gamma^t_{s_t+1}$, we have that $v^t_{s_t+1}$ is the lowest vertex in $\Gamma^t_{s_t+1}$. Since $\Gamma^t_{s_t+1}$ is placed so that $v^t_{s_t+1}$ is on the same horizontal line as the highest vertex in $\Gamma^*_t$, it follows that the only edges of~$G^t_{s_t+1}$ that might cross $G^*_t$ are those incident to $v^t_{s_t+1}$. However, since all the vertices of~$G^t_{s_t+1}$ different from $v^t_{s_t+1}$ lie above $\Gamma^*_t$ and since all the neighbors of~$v^t_{s_t+1}$ in $G^t_{s_t+1}$ lie on the vertical line one unit to the right of~$\ell$, by \cref{prp:2} and by the $90^\circ$ counter-clockwise rotation of~$\Gamma^t_{s_t+1}$, it follows that the edges of~$G^t_{s_t+1}$ incident to $v^t_{s_t+1}$ lie above every edge of~$G^*_t$ with which they have a horizontal overlap. We conclude the proof of planarity of~$\Gamma^*_{t-1}$ by remarking that, by \cref{pro:shift}, placing $v^t_{s_t+1}$ at the point where it is placed in $\Gamma^*_t$ maintains the planarity of~$\Gamma^t_{s_t+1}$. 
\item {\em Step 3.} That the paths $P_0,P_1,\dots,P_{t-1}$, the binding edges connecting vertices on such paths, and the recursively drawn graphs $G^i_j$ (as well as the graph $G^*_{t-1}$) sharing an edge with these paths do not cross each other in $\Gamma$ is proved by the same arguments as in Case~1, with one exception. Namely, $G^*_{t-1}$ is not horizontally disjoint in $\Gamma$ from $G^{t-1}_{s_{t-1}+1}$. To prove that these two graphs do not cross each other in $\Gamma$, we use the same argument as the one employed for Step 2.   By \cref{prp:4} and since $\Gamma^{t-1}_{s_{t-1}+1}$ is placed so that $v^{t-1}_{s_{t-1}+1}$ is on the same vertical line as the rightmost vertex in $\Gamma^*_{t-1}$, the only edges of~$G^{t-1}_{s_{t-1}+1}$ that might cross $G^*_{t-1}$ are those incident to $v^{t-1}_{s_{t-1}+1}$. However, since all the vertices of~$G^{t-1}_{s_{t-1}+1}$ different from $v^{t-1}_{s_{t-1}+1}$ lie to the right of~$\Gamma^*_{t-1}$ and since all the neighbors of~$v^{t-1}_{s_{t-1}+1}$ in $G^{t-1}_{s_{t-1}+1}$ lie on the horizontal line one unit below $h_{t-1}$, by \cref{prp:2}, it follows that the edges of~$G^{t-1}_{s_{t-1}+1}$ incident to $v^{t-1}_{s_{t-1}+1}$ lie to the right of every edge of~$G^*_{t-1}$ with which they have a vertical overlap. Finally, note that, by \cref{pro:shift}, placing $v^{t-1}_{s_{t-1}+1}$ at the point where it is placed in $\Gamma^*_{t-1}$ maintains the planarity of~$\Gamma^{t-1}_{s_{t-1}+1}$.
\end{itemize}

The convexity of each internal face $f$ of~$G$ in $\Gamma$ follows by induction if $f$ is internal to some recursively drawn subgraph $G^i_j$ of~$G$. If $f$ is an internal face of~$G[\pi']$ or if $f$ has vertices on two distinct paths among $P_0,P_1,\dots,P_{t-1}$, then $f$ is drawn as a triangle or a trapezoid in~$\Gamma$, hence it is convex. The only faces that do not fit in any of these two categories are those incident to $v^{t-1}_{s_{t-1}}$ and/or $v^{t-1}_{s_{t-1}+1}$ and incident to at least one vertex in $P_t$.  The convexity of these faces follows from two facts. 

\begin{itemize}
\item First, the polygon $Q$ delimited by the edge $(v^{t-1}_{s_{t-1}},v^{t-1}_{s_{t-1}+1})$ and by $P_t$ is a convex pentagon $(v^{t-1}_{s_{t-1}},v^{t-1}_{s_{t-1}+1},v^t_{\ell_t},v^t_{s_t},v^t_{1})$ (if $s_t>1$) or a convex quadrilateral $(v^{t-1}_{s_{t-1}},v^{t-1}_{s_{t-1}+1},v^t_{\ell_t},v^t_{1})$ (if $s_t=1$). This is a direct consequence of the construction, which ensures that $v^{t-1}_{s_{t-1}}$ and $v^{t-1}_{s_{t-1}+1}$ lie on a horizontal line higher than $v^t_{\ell_t}$, $v^t_{s_t}$, and $v^t_{1}$, with $v^{t-1}_{s_{t-1}+1}$ to the right of~$v^{t-1}_{s_{t-1}}$, that $v^t_{\ell_t}$ and $v^t_{s_t}$ lie on a vertical line to the right of~$v^{t-1}_{s_{t-1}}$ and $v^{t-1}_{s_{t-1}+1}$ (and $v^t_{1}$ if $s_t>1$), with $v^t_{\ell_t}$ higher than $v^t_{s_t}$, and that, if $s_t>1$, $v^t_{1}$ and $v^t_{s_t}$ lie on a horizontal line lower than $v^t_{\ell_t}$, $v^{t-1}_{s_{t-1}+1}$, and $v^{t-1}_{s_{t-1}}$, with $v^t_{s_t}$ to the right of~$v^t_{1}$.
\item Second, the faces incident to $v^{t-1}_{s_{t-1}}$ and/or $v^{t-1}_{s_{t-1}+1}$ and incident to at least one vertex in~$P_t$ form a convex subdivision of~$Q$.
\end{itemize}

The fact that $\Gamma$ satisfies \cref{prp:1,prp:2,prp:3,prp:4} of a $uv$-separated drawing can be proved as in \cref{le:Case1}, given that the ``topmost'' part of~$\Gamma$ is constructed as in Case~1, with the only addendum that $\Gamma^*_{t-1}$ lies entirely below or on $h_{t-1}$, by construction; this is needed to ensure \cref{prp:3} for $\Gamma$. Note that the assumption $t\geq 2$ guarantees that Step 3 is actually applied for the construction of $\Gamma$. This is vital since, as mentioned before, $\Gamma^*_{t-1}$ is not a $v^{t-1}_{s_{t-1}}v^{t-1}_{s_{t-1}+1}$-separated drawing of $G[v^{t-1}_{s_{t-1}},v^{t-1}_{s_{t-1}+1}]$, and hence, if $t=1$ were allowed, the drawing $\Gamma=\Gamma^*_0$ might not be a $uv$-separated drawing of $G$. 

Finally, we prove \hyperref[prop:W]{\em Property W} and \hyperref[prop:H]{\em Property H}. Actually, \hyperref[prop:W]{\em Property W} and the first part of \hyperref[prop:H]{\em Property~H} are proved as in the proof of \cref{le:Case1}. Concerning the second part of \hyperref[prop:H]{\em Property~H}, we start by proving that $h(\Gamma)$ is at most $t$, plus $\sqrt n$, plus the maximum height of a recursively constructed drawing $\Gamma^i_j$ of a subgraph $G^i_j$ of a left subgraph of~$G$ at $\pi$, plus the maximum height of a recursively constructed drawing $\Gamma^{i'}_{j'}$ of a subgraph $G^{i'}_{j'}$ of a right subgraph of~$G$ at~$\pi$. First, no vertex of~$\Gamma$ lies above $h_0$; this fact, which is trivial in the proof of \cref{le:Case1}, uses here the observation that $\Gamma^*_{t-1}$ lies entirely below or on $h_{t-1}$, by construction. The horizontal grid lines $h_0,h_1,\dots,h_{t-1}$ on which $P_0,P_1,\dots,P_{t-1}$ are placed give rise to the~$t$ term in the upper bound for $h(\Gamma)$. The $\sqrt n$ term accounts for the horizontal grid lines between the highest line intersecting~$\Gamma^t_{\ell_t-1}$ (which is one unit below $h_{t-1}$) and the lowest line intersecting $\Gamma^t_{s_t+1}$ before $v^t_{s_t+1}$ is moved to the position it has in $\Gamma^*_t$. Indeed, since the drawings $\Gamma^t_{s_t+1},\Gamma^t_{s_t+2},\dots,\Gamma^t_{\ell_t-1}$ are constructed recursively, and hence satisfy \hyperref[prop:W]{\em Property~W}, and are rotated by $90^\circ$, and since the graphs $G^t_{s_t+1},G^t_{s_t+2},\dots,G^t_{\ell_t-1}$ have a total of at most $\sqrt n$ vertices, by \cref{le:small-subgraph}, it follows that the drawings $\Gamma^t_{s_t+1},\Gamma^t_{s_t+2},\dots,\Gamma^t_{\ell_t-1}$ intersect, before moving $v^t_{s_t+1}$, at most $\sqrt n$ grid lines. 
Next, we observe that $h(\Gamma^*_{t})$ is given by the sum of the maximum height of a graph $G[u_j,u_{j+1}]$ and the maximum height of a graph $G[w_j,w_{j+1}]$. Since the former are left subgraphs of~$G$ at $\pi$ and the latter are right subgraphs of~$G$ at $\pi$, this gives rise to the term maximum height of a recursively constructed drawing $\Gamma^i_j$ of a subgraph $G^i_j$ of a left subgraph of~$G$ at $\pi$ and to the term maximum height of a recursively constructed drawing $\Gamma^{i'}_{j'}$ of a subgraph $G^{i'}_{j'}$ of a right subgraph of~$G$ at $\pi$ in the upper bound for $h(\Gamma)$. The proof of the claimed upper bound on $h(\Gamma)$ is concluded by considering the recursively drawn graphs that were not dealt with yet. Every recursively constructed drawing~$\Gamma^i_j$ with $i\leq t-1$ is placed in $\Gamma$ so that it intersects at least one of the lines $h_0,h_1,\dots,h_{t-1}$, hence the number of horizontal grid lines between  $h_0$ and the lowest line intersecting one of such drawings~$\Gamma^i_j$ is at most $t-1+ h(\Gamma^i_j)$, which is smaller than the claimed upper bound. Finally, the recursively drawn graphs $G^t_j$ with $j< s_t$ have their top side, defined by the edge $(v^t_j,v^t_{j+1})$, on the horizontal line through vertices $u_0,u_1,\dots,u_x$. Since these graphs are left subgraphs of~$G$ at~$\pi$, same as the graphs $G[u_j,u_{j+1}]$, it follows that the number of horizontal grid lines between  $h_0$ and the lowest line intersecting $G^t_j$ with $j<s_t$ is at most the claimed upper bound on $h(\Gamma)$, given that the same is true for $G[u_j,u_{j+1}]$.

%
%
%
%

Note that $t\leq \sqrt n$ by construction and hence $t\leq \sqrt d$, given that $d\geq n$. Assume that $i$ and $j$ are indices such that $G^i_j$ is a recursively drawn subgraph of a left subgraph of~$G$ at $\pi$ and $h(\Gamma^i_j)$ is maximum. Also, assume that $i'$ and $j'$ are indices such that $G^{i'}_{j'}$ is a  recursively drawn subgraph of a right subgraph of~$G$ at $\pi$ and $h(\Gamma^{i'}_{j'})$ is maximum. Let $d^i_j$ and $d^{i'}_{j'}$ be the number of vertices in the extended weak dual trees of~$G^i_j$ and $G^{i'}_{j'}$, respectively. By the upper bound proved above, we have $h(\Gamma)\leq 2\sqrt d+h(\Gamma^i_j)+h(\Gamma^{i'}_{j'})$ and thus, by induction, $h(\Gamma)\leq 2\sqrt d+f(d^i_j)+f(d^{i'}_{j'})$. By \cref{th:decomposition}, it holds true that $(d^i_j)^p+(d^{i'}_{j'})^p\leq (1-\delta)d^p$. By the definition of~$f(d)$ in \cref{le:induction_works} (with $d_a=d^i_j$ and $d_b=d^{i'}_{j'}$), we have that $2\sqrt d+f(d^i_j)+f(d^{i'}_{j'})\leq f(d)$, hence $h(\Gamma)\leq f(d)$, which proves \hyperref[prop:H]{\em Property H}.
\end{proof}

\section{Internally-Strictly-Convex Drawings} \label{se:strict}

In this section, we present two algorithms for obtaining small-area internally-strictly-convex drawings of outerpaths of bounded face size (\cref{sse:outerpaths}) and of outerplanar graphs of bounded diameter (\cref{sse:diameter}). Notably, these results imply that meaningful families of outer-$1$-plane graphs admit an embedding-preserving straight-line drawing on an integer grid of polynomial size.

\subsection{Outerpaths of Bounded Face Size}
\label{sse:outerpaths}

In this subsection, we prove a tight bound on the area required by internally-strictly-convex grid drawings of outerpaths. More precisely, we prove that every $n$-vertex outerpath whose internal faces have size at most $k$ admits an internally-strictly-convex grid drawing in area~$O(nk^2)$, which we also show to match a general lower bound for the area requirements of internally-strictly-convex grid drawings of outerplanar graphs. We start by proving the lower bound.

\begin{theorem} \label{th:lower}
Every $n$-vertex outerplane graph with $\Omega(\frac{n}{k})$ internal faces of size $k$ requires $\Omega(nk^2)$ area in any internally-strictly-convex grid drawing.
\end{theorem}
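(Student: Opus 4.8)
The plan is to lower-bound the area of any internally-strictly-convex grid drawing $\Gamma$ of $G$ by the total area swept out by the strictly-convex internal faces of size $k$, invoking the classical result of Andrews \cite{Andrews1963} on convex lattice polygons. Let $f_1,\dots,f_m$ be the internal faces of $G$ of size $k$; by hypothesis $m=\Omega(n/k)$.

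First I would note that, in $\Gamma$, each $f_i$ is drawn as a strictly-convex polygon with $k$ vertices at grid points. Since all its internal angles are smaller than $180^\circ$, all $k$ vertices are genuine corners and the polygon coincides with its convex hull, so it is a convex lattice $k$-gon. Andrews' theorem, in the form ``a convex lattice polygon of area $A$ has $O(A^{1/3})$ vertices'', then provides a constant $c>0$ such that $\mathrm{area}(f_i)\ge c\,k^{3}$ for every $i$.

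Next I would observe that these areas can be summed inside the drawing's bounding box. The $f_i$'s are distinct faces of a planar straight-line drawing, hence have pairwise disjoint interiors; and, being bounded, each is contained in the smallest axis-parallel rectangle $R$ enclosing $\Gamma$ — indeed every point outside $R$ lies in the unique unbounded (outer) face, because all edges of $\Gamma$ lie inside the convex set $R$. Consequently $\mathrm{area}(R)\ge\sum_{i=1}^{m}\mathrm{area}(f_i)\ge m\,c\,k^{3}=\Omega(nk^{2})$. Finally, since a $k$-gon with $k\ge 3$ cannot be drawn on a single grid row or column, $\Gamma$ has width and height at least $2$, so its area $w(\Gamma)\,h(\Gamma)$ is at least $(w(\Gamma)-1)(h(\Gamma)-1)=\mathrm{area}(R)=\Omega(nk^{2})$, which is the claimed bound.

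I do not anticipate a genuine difficulty: the argument is essentially a packing bound, and the only care needed is in the two translations it relies on — from the combinatorial notion of the size of a face to a convex lattice polygon with that many vertices, and from the grid-area convention $w\cdot h$ to the geometric area of the enclosing rectangle — together with the elementary topological fact that every bounded face lies inside that rectangle. So the ``hard part'' is just this bit of bookkeeping; there is no hidden combinatorics.
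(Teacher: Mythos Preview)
Your packing argument via Andrews' theorem is exactly the paper's approach, and your care with the bounding-box versus grid-area conversion is a nice touch the paper glosses over. There is, however, one step you skip: you write ``each $f_i$ is drawn as a strictly-convex polygon'', but the $f_i$'s are internal faces of the given outerplanar \emph{embedding} of $G$, whereas the theorem (per the paper's stated conventions in \cref{sec:preliminaries}) ranges over \emph{all} internally-strictly-convex drawings of $G$, not only embedding-preserving ones. A priori $\Gamma$ could realise a different planar embedding in which some $f_i$ is not a face at all, and then there is nothing to which Andrews' bound applies. The paper closes this gap by first arguing that any internally-strictly-convex drawing of an outerplanar graph must itself be outerplanar --- a non-outerplanar embedding would place a degree-$2$ vertex in the interior, forcing an internal angle of at least $180^\circ$ at that vertex --- after which the internal faces of $\Gamma$ coincide with those of $G$ and your argument goes through verbatim.
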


\begin{proof}
Let $G$ be an $n$-vertex outerplane graph with $\Omega(\frac{n}{k})$ internal faces of size $k$.  First, note that any internally-strictly-convex grid drawing of~$G$ must be outerplanar. Indeed, any embedding of an outerplanar graph which is not an outerplanar embedding contains a degree-$2$ vertex as an internal vertex. However, such a vertex would create an angle larger than or equal to $180^\circ$ in an internal face.
The area lower bound is obtained by a simple packing argument. Since any internally-strictly-convex grid drawing of~$G$ is outerplanar, it contains $\Omega(\frac{n}{k})$ internal faces of size $k$. Each of these faces occupies $\Omega(k^3)$ area \cite{Andrews1963}, hence the total area of the drawing is $\Omega(\frac{n}{k} \cdot k^3)$, which gives the bound claimed by the theorem.
\end{proof}

\begin{figure}[htb]
    \centering
    \includegraphics[page=12,width=\textwidth]{}
    \caption{Decomposition of an outerpath $G$ in fan graphs $G_0,G_1,\dots,G_r$. Gate edges are blue.}
    \label{fig:fan-decomposition}
\end{figure}
%

%
In the rest of the section, we prove a matching upper bound for the case of outerpaths. Note that there exist $n$-vertex outerpaths with $\Omega(\frac{n}{k})$ internal faces of size $k$, hence the lower bound of \cref{th:lower} applies to outerpaths as well. We introduce some notation and definitions. Let $G$ be an outerpath and let $\pi=(f_1,f_2,\dots,f_{p-1})$ be the weak-dual of~$G$, which is a path; refer to \cref{fig:fan-decomposition}. Let $\hat{e}$ be the edge of~$G$ dual to the edge $(f_1,f_2)$ of~$\pi$ and let~$\tilde{e}$ be the edge of~$G$ dual to the edge $(f_{p-2},f_{p-1})$ of~$\pi$. Moreover, let $\hat{g} = (u_0,v_0)$ be any of the two external edges adjacent to $\hat{e}$ that belong to the face $f_1$ of~$G$, where $u_0$ immediately follows $v_0$ when traversing the boundary of the outer face of $G$ in counter-clockwise order. Also, let $\tilde{g}$ be any of the two external edges adjacent to $\tilde{e}$ that belong to the face $f_{p-1}$ of~$G$. 
We augment $\pi$ with two new vertices $f_0$ and $f_p$, and edges $(f_0,f_1)$ and $(f_{p-1},f_p)$.
We interpret $(f_0,f_1)$ and $(f_{p-1},f_p)$ as edges dual to $\hat{g}$ and $\tilde{g}$, respectively.

We now describe a decomposition of~$G$ into a sequence of smaller outerpaths such that any two consecutive outerpaths in the sequence share exactly one internal edge of~$G$.
We let~$e^*_0$ be the edge $\hat{g}$. Let $G_0$ be the plane graph induced by the vertices delimiting the internal faces of~$G$ incident to the end-vertex common to $e^*_0$ and $\hat{e}$ (see the vertex $u_0$ in \cref{fig:fan-decomposition}). 
Let~$e^*_1$ be the external edge of~$G_0$, other than $e^*_0$, dual to an edge of~$\pi$. We call $G_0$ a \emph{fan graph} with \emph{gate edges} $e^*_0$ and~$e^*_1$. Suppose that, for some $i\geq 1$, a fan graph $G_{i-1}$ with gate edges~$e^*_{i-1}$ and $e^*_i$ has been defined. Let $u_i$ and $v_i$ be the end-vertices of $e^*_i$, where $u_i$ is encountered before $v_i$ when traversing the boundary of the outer face of~$G$ in counter-clockwise direction from~$u_0$ to $v_0$.
We define the fan graph~$G_i$ as follows.
Let~$V_{i-1}$ be defined as $\bigcup^{i-1}_{j=0}V(G_j) \setminus \{u_i,v_i\}$. Then~$G_i$ is the outerplane graph induced by the vertices that do not belong to $V_{i-1}$ and that are incident to faces that have $u_i$ or $v_i$ on their boundary. 
Let $e^*_{i+1}$ be the external edge of~$G_i$, other than~$e^*_i$, dual to an edge of~$\pi$. The edges $e^*_{i}$ and $e^*_{i+1}$ are the gate edges of $G_i$.
Eventually, the decomposition is concluded by defining a fan graph $G_r$ with gate edges $e^*_r$ and $e^*_{r+1}$ such that $e^*_{r+1}=\tilde{g}$. If $\tilde{g}$ and $e^*_r$ are adjacent, this requires adding one vertex, two edges, and one internal face to $G$, so to ensure that $e^*_r$ and $e^*_{r+1}$ do not share any vertex.



%
We will show that each fan graph $G_i$ with gate edges $e^*_i$ and $e^*_{i+1}$ admits a \emph{block-drawing}. This is an outerplanar internally-strictly-convex grid drawing $\Gamma_i$ of~$G_i$ satisfying the following properties:

\begin{enumerate}[label=\bf B.\arabic*,ref=B.\arabic*,leftmargin=22pt]
    \item\label{prp:gate-edges} The two gate edges $e^*_i$ and $e^*_{i+1}$ are drawn as vertical segments of unit length such that $y(u_i)=y(u_{i+1})=0$ and $y(v_i)=y(v_{i+1})=1$;
    \item\label{prp:gate-left-vertices} the vertices $u_i$ and $v_{i}$ are the leftmost vertices of~$\Gamma_i$ and no other vertex of $G_i$ has the same $x$-coordinate as these vertices; and
    \item\label{prp:gate-right-vertices} the vertices $u_{i+1}$ and $v_{i+1}$ are the rightmost vertices of~$\Gamma_i$ and no other vertex of $G_i$ has the same $x$-coordinate as these vertices.
\end{enumerate}

We proceed as follows. \Cref{le:convex-face-given-s1,le:convex-face-given-s2} are two technical lemmas that yield special internally-strictly-convex grid drawings of cycles. We use these two lemmas to compute a block-drawing $\Gamma_i$ of each fan graph $G_i$ with appropriate area bounds. We do so by drawing each of its faces using either \cref{le:convex-face-given-s1} or \cref{le:convex-face-given-s2}, and we appropriately ``merge'' them together (\cref{le:blockdrawing}). The final drawing $\Gamma$ of~$G$ is obtained by ``gluing'' the block-drawings $\Gamma_0,\dots,\Gamma_r$ (\cref{th:outerpath}) at their shared gate edges.  



In the following, by \emph{half-plane of a line} we mean an open half-plane bounded by~the~line.



\begin{figure}[t!]
\centering
\includegraphics[page=9,width=.45\textwidth]{Figures/outerpath.pdf}
  \caption{
  Illustrations for the proof of \cref{le:convex-face-given-s1}.}\label{fig:convex-face-given-s1-slanted}
\end{figure}

\begin{lemma}\label{le:convex-face-given-s1}
    Let $C=(v_1,v_2,\dots, v_h)$ with $h\geq 3$ be a cycle. Let $s_1$ be a segment representing $e_1=(v_1,v_2)$ such that $v_1$ is at the origin $(0,0)$, $x(v_2)$ is a non-negative integer, and $y(v_2)=1$. Then~$C$ admits an internally-strictly-convex grid drawing $\Gamma_C$ such that $C$:
    \begin{enumerate}[label={\bf I.\arabic*},ref=I.\arabic*,leftmargin=22pt]
    \item \label{pro:s1}$s_1$ represents $e_1$ in $\Gamma_C$;
    \item \label{pro:vh}the vertex $v_h$ is placed at $(x(v_2)+h-2,1)$;
    \item \label{pro:side}let $\ell$ be the line that passes through $s_1$ oriented from $v_1$ towards $v_2$. Then, all the vertices $v_3,\dots,v_h$ lie in the right half-plane of~$\ell$;
    \item \label{pro:width}the width of~$\Gamma_C$ is $x(v_2)\cdot(h-2) +  \frac{(h-3)(h-2)}{2} +1$; and
    \item \label{pro:height}the height of~$\Gamma_C$ is~$h-1$.   
    \end{enumerate} 
\end{lemma}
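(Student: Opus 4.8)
The plan is to give an explicit grid placement of $v_1,\dots,v_h$ and then derive \cref{pro:s1,pro:vh,pro:side,pro:width,pro:height} by inspecting coordinates; the single geometric ingredient will be a cross-product computation establishing strict convexity. I keep $v_1=(0,0)$ and $v_2=(x(v_2),1)$ as forced by $s_1$; for $j=3,\dots,h-1$ I place $v_j$ on the line $y=j-1$ with $x$-coordinate $(j-1)\,x(v_2)+\tfrac{(j-1)(j-2)}{2}$ (the same formula reproduces $v_2$ at $j=2$); and I place $v_h=(x(v_2)+h-2,1)$. Under this choice the edge $v_jv_{j+1}$ has direction $(x(v_2)+j-1,\,1)$ for $j=2,\dots,h-2$, so the chain $v_2,\dots,v_{h-1}$ climbs one unit per edge using the strictly increasing horizontal steps $x(v_2)+1,x(v_2)+2,\dots$ Since $(j-1)(j-2)$ is even, all coordinates are integers, so $\Gamma_C$ is a grid drawing, and \cref{pro:s1,pro:vh} hold by construction.

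To prove strict convexity I traverse the closed walk $v_1v_2\cdots v_hv_1$ and examine the cyclic list of edge direction vectors $\vec e_1,\dots,\vec e_h$, where $\vec e_j=(x(v_2)+j-1,1)$ for $1\le j\le h-2$, $\vec e_{h-1}=v_h-v_{h-1}$, and $\vec e_h=v_1-v_h$. The cross product of two consecutive vectors among $\vec e_1,\dots,\vec e_{h-2}$ is always $-1$, since they share second coordinate $1$ and their first coordinates strictly increase; and a short computation shows that the three remaining consecutive cross products, at $v_{h-1}$, at $v_h$, and at $v_1$, equal $-\tfrac{(h-2)(h-3)}{2}-1$, $-\tfrac{(h-1)(h-2)}{2}$, and $-(h-2)$ respectively, all negative for $h\ge 3$. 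Hence every turn of the walk is a right turn. Since, in boundary order, the $y$-coordinates of $v_1,\dots,v_h$ read $0,1,2,\dots,h-2,1$ --- strictly increasing up to $v_{h-1}$, then strictly decreasing back to $v_1$ --- the boundary is the union of two $y$-monotone arcs with common endpoints $v_1,v_{h-1}$; together with the uniform right-turn property this makes the polygon simple and strictly convex, and in particular no three of $v_1,\dots,v_h$ are collinear. So $\Gamma_C$ is an internally-strictly-convex grid drawing of $C$.

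The remaining properties are read off the coordinates. The oriented line $\ell$ through $v_1,v_2$ is $\{x=x(v_2)\,y\}$, whose right half-plane is $\{x>x(v_2)\,y\}$; for $3\le j\le h-1$ we have $x(v_j)-x(v_2)(j-1)=\tfrac{(j-1)(j-2)}{2}>0$ and $x(v_h)-x(v_2)\cdot 1=h-2>0$, so $v_3,\dots,v_h$ all lie strictly to the right of $\ell$, which is \cref{pro:side}. The minimum and maximum vertex $y$-coordinates are $0$ (at $v_1$) and $h-2$ (at $v_{h-1}$), so the height is $h-1$, which is \cref{pro:height}. Finally, the leftmost vertex is $v_1$ at $x=0$ and the rightmost is $v_{h-1}$ at $x=(h-2)\,x(v_2)+\tfrac{(h-2)(h-3)}{2}$ (with the few smallest values of $h$ checked by hand), giving width $(h-2)\,x(v_2)+\tfrac{(h-2)(h-3)}{2}+1$, which is \cref{pro:width}.

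The step I expect to be the real obstacle is the convexity of the ``turn-around'' formed by the last two edges $v_{h-1}v_h$ and $v_hv_1$: $v_h$ is pinned at height $1$, well below the apex $v_{h-1}$, and one must guarantee it lands neither on the chord $v_{h-1}v_1$ nor outside convex position. This constraint is exactly what forces the horizontal steps $x(v_2)+1,x(v_2)+2,\dots$ along the top chain --- any slower growth fails --- and the three cross-product evaluations above are the technical heart of the argument, everything else being coordinate bookkeeping.
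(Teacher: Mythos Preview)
Your construction is identical to the paper's: the same explicit placement $v_j=\bigl((j-1)x(v_2)+\tfrac{(j-1)(j-2)}{2},\,j-1\bigr)$ for $j\le h-1$ and $v_h=(x(v_2)+h-2,1)$. The only difference is cosmetic: you certify strict convexity by computing the signed cross products of consecutive edge vectors, while the paper argues that the slopes $\tfrac{1}{x(v_2)+i-1}$ along the upper chain are strictly decreasing and then handles the last two edges via a parallelogram observation on $v_1,v_{h-2},v_{h-1},v_h$. Both arguments are straightforward; your cross-product bookkeeping is in fact a bit more uniform, since the paper's parallelogram step is stated somewhat loosely.
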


\begin{proof}
We show how to construct the drawing $\Gamma_C$; refer to \cref{fig:convex-face-given-s1-slanted}.
Vertices $v_1$ and $v_2$ are at $(0,0)$ and $(x(v_2),1)$ in $\Gamma_C$.
For $j=3,\dots,h-1$, place $v_{j}$ at $(x(v_2)\cdot(j-1) +  \frac{(j-2)(j-1)}{2}, j-1)$.
Finally, place  vertex $v_h$ at $(x(v_2)+h-2,1)$. This completes the construction of $\Gamma_C$. 

Note that $\Gamma_C$ satisfies  \cref{pro:s1,pro:vh} by construction. 
Furthermore, observe that the $x$-coordinate of $v_{h-1}$ is $x(v_2)\cdot(h-2) +  \frac{(h-3)(h-2)}{2}$, while its $y$-coordinate is $h-2$.
The leftmost grid line intersecting $\Gamma_C$ passes through $v_0$ and thus this is the line $x=0$, whereas the rightmost grid line intersecting $\Gamma_C$ passes through $v_{h-1}$ and thus this is the line $x = x(v_2)\cdot(h-2) +  \frac{(h-3)(h-2)}{2}$. Therefore, $\Gamma_C$ has width equal to $x(v_{h-1})+1$, and \cref{pro:width} holds.
The lowest horizontal grid line intersecting $\Gamma_C$ passes through $v_0$ and thus this is the line $y=0$, whereas the highest horizontal grid line intersecting $\Gamma_C$ passes through $v_{h-1}$ and thus this is the line $y=h-2$.
Therefore, $\Gamma_C$ has height $y(v_{h-1})+1$, and \cref{pro:height} holds.

Next, we show that $\Gamma_C$ is strictly-convex. This is trivial if $h=3$. If $h>3$, the strict convexity follows by these observations:

\begin{itemize}
    \item For $i=1,\ldots,h-2$, the slope of the edge $(v_i,v_{i+1})$ is $\frac{1}{x(v_2)+i-1}$, that is, the slopes decrease with $i$, hence the interior angles at vertices $v_2,\ldots,v_{h-2}$ are smaller than $180^\circ$.
    \item By construction, the slopes of the edges $(v_1,v_h)$ and $(v_{h-2},v_{h-1})$ are both equal to $\frac{1}{x(v_2)+h-2}$. Since these edges are also of the same length, vertices $v_1,v_{h-2},v_{h-1},v_h$ form a parallelogram $\cal P$. This implies that the interior angles of $\cal P$ (and thus of $C$) at $v_{h-1}$ and at $v_{h}$ are smaller than $180^\circ$. Since the edge $(v_1,v_2)$ has positive slope, by hypothesis, and the slope of the edge $(v_1,v_h)$ is positive and smaller than the slope of $(v_1,v_2)$ by construction, the angle of $C$ at $v_1$ is also smaller than $180^\circ$.
\end{itemize}

Finally, the strict convexity of $\Gamma_C$ and the fact that the slope of the edge $(v_1,v_h)$ is smaller than the slope of the edge $(v_1,v_2)$ imply \cref{pro:side}.
\end{proof}


The following property is a consequence of the constraints of the drawings of \cref{le:convex-face-given-s1}.

\begin{property}\label{pro:shift-slanted}
Let $\Gamma_C$ be the drawing of a cycle $C=(v_1,v_2,\dots, v_h)$ produced by \cref{le:convex-face-given-s1}. Then the drawing obtained by shifting vertex $v_h$ in $\Gamma_C$ by any integer amounts to the right and/or to the bottom is an internally-strictly-convex grid drawing of $C$.
\end{property}

\begin{figure}[t!]
    \centering
    \includegraphics[page=6,width=\textwidth]{Figures/outerpath.pdf}
    \caption{Illustration for the proof of \cref{le:convex-face-given-s2}.}
    \label{fig:convex-face-given-s1-rounded}
\end{figure}

\begin{lemma} \label{le:convex-face-given-s2}
    Let $C=(v_1,v_2,\dots, v_h)$ with $h>3$ be a cycle. Let $e_1=(v_1,v_2)$ and $e_2=(v_i,v_{i+1})$ be two non-adjacent edges of $C$. Let $s_1$ be a segment that represents $(v_1,v_2)$ such that $v_1$ is at the origin $(0,0)$, $x(v_2)$ is a non-negative integer, and $y(v_2) = 1$. Then $C$ admits an internally-strictly-convex grid drawing $\Gamma_C$ such that:
    \begin{enumerate}[label={\bf J.\arabic*},ref=J.\arabic*,leftmargin=22pt]
        \item \label{prof:e1-rounded}$s_1$ represents $e_1$ in $\Gamma_C$; 
        \item \label{prof:e2-rounded}the edge $e_2$ is represented by a vertical segment such that $x(v_i)>x(v_2)$, $y(v_i)=1$, and $y(v_{i+1})=0$;
         \item \label{prof:width-rounded}the width  of~$\Gamma_C$ is $2 + \max \{ x(v_2)\cdot(i-2) + \frac{(i-3)(i-2)}{2}, \frac{(h-i-1)(h-i)}{2} \}$; and 
        \item \label{prof:height-rounded}the height  of~$\Gamma_C$ is $h-2$. 
    \end{enumerate} 
\end{lemma}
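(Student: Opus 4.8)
The plan is to realise $\Gamma_C$ as two strictly convex ``staircase'' chains glued together, reusing \cref{le:convex-face-given-s1}. The cycle $C$ is split by the two designated edges into an upper path $v_1,v_2,\dots,v_i$ and a lower path $v_{i+1},v_{i+2},\dots,v_h,v_1$, and I would draw each of them so that the whole boundary turns consistently. Write $T_1:=x(v_2)(i-2)+\frac{(i-3)(i-2)}{2}$ and $T_2:=\frac{(h-i-1)(h-i)}{2}$ for the two quantities appearing in J.3.

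For the upper path, apply \cref{le:convex-face-given-s1} to the cycle $(v_1,v_2,\dots,v_{i-1},v_i)$ with the given segment $s_1$; this places $v_1=(0,0)$, $v_2=(x(v_2),1)$, the vertices $v_3,\dots,v_{i-1}$ on a convex staircase with strictly decreasing positive slopes (so $v_{i-1}$ lands at $(T_1,\,i-2)$), and $v_i$ at height $1$. For the lower path, apply \cref{le:convex-face-given-s1} to the cycle $(v_1,v_h,v_{h-1},\dots,v_{i+2},v_{i+1})$ (reading the lower path backwards and treating $(v_1,v_h)$ as its first edge), choosing the starting segment — concretely, with $x$-extent $1$ on its first edge — so that the staircase uses total width $T_2$, and then reflect the result across a horizontal line so that the path bulges downward, with its vertices occupying the rows $0,-1,\dots,-(h-i-1)$. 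Finally, glue the two pieces along the common vertex $v_1$ (which sits at the origin in both), and translate the images of $v_i$ and of $v_{i+1}$ — using \cref{pro:shift-slanted} and its mirror image, which allow a ``$v_h$''-type vertex of a \cref{le:convex-face-given-s1} drawing to be pushed to the right and down (resp.\ up) without losing strict convexity — so that $v_i$ and $v_{i+1}$ come to lie on a vertical unit segment on the rightmost grid column, with $y(v_i)=1$ and $y(v_{i+1})=0$. Then J.1 and J.2 hold by construction; J.4 holds because the vertices occupy exactly the $h-2$ rows from $-(h-i-1)$ to $i-2$; and J.3 follows by checking that the two chains span the columns $0,\dots,T_1$ and $0,\dots,T_2$ respectively, so their union spans $0,\dots,\max\{T_1,T_2\}$.

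What is left is to verify internal strict convexity of the resulting polygon. Strict convexity at $v_2,\dots,v_{i-1}$ and at the interior vertices of the lower path comes for free from \cref{le:convex-face-given-s1}. The only vertices needing separate attention are the three ``junctions'' $v_1$, $v_i$, $v_{i+1}$, where the two staircases and the two designated edges meet: at $v_i$ one compares the (non-positive) slope of $(v_{i-1},v_i)$ with the vertical edge $e_2$ and uses $x(v_{i-1})=T_1<x(v_i)$; symmetrically at $v_{i+1}$; and at $v_1$ one compares the slope of the closing edge $(v_h,v_1)$ with the slope $1/x(v_2)$ of $s_1$ (the edge $e_1$ being vertical when $x(v_2)=0$, which is one of the cases to handle). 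Planarity is then automatic, since a strictly convex closed polygonal chain bounds a simple polygon.

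The step I expect to be the main obstacle is the gluing/width bookkeeping: the upper chain ``needs'' horizontal room $T_1$ and the lower chain ``needs'' horizontal room $T_2$, and one must argue that the cheaper of the two can be stretched across the full width $\max\{T_1,T_2\}$ — and the two $v_h$-type shifts carried out — without creating a collinearity at $v_1$, $v_i$, or $v_{i+1}$ and without spilling into an extra grid column. The delicate corner cases here are the small instances ($i=3$, $i=h-1$, $x(v_2)=0$, or $T_1=T_2$), and they are presumably the reason J.3 features a maximum rather than a sum.
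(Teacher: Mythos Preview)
Your approach is essentially the paper's proof: the paper also splits $C$ into the cycles $C_1=(v_1,\dots,v_i)$ and $C_2=(v_1,v_h,\dots,v_{i+1})$, applies \cref{le:convex-face-given-s1} to each (with the lower one started from the unit-slope segment $(0,0)$--$(1,1)$), vertically mirrors $C_2$, glues at $v_1$, and then pushes $v_i$ and $v_{i+1}$ to a common rightmost column via \cref{pro:shift-slanted}. One small correction to your width accounting: the ``$+1$'' in J.3 is not about corner cases but about the extra column needed for $v_i,v_{i+1}$ — the paper places them at $x=1+M$ where $M$ is the largest $x$-coordinate already used, precisely so that they are strictly to the right of every other vertex (which in particular enforces $x(v_i)>x(v_2)$ and handles your $i=3$ worry automatically).
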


\begin{proof}
We show how to construct $\Gamma_C$; refer to \cref{fig:convex-face-given-s1-rounded}.
First, we define two cycles $C_1 = (v_1,v_2,\dots,v_{i})$ and $C_2 = (v_1,v_{h},v_{h-1},\dots,v_{i+1})$. 
Observe that the length of $C_1$ is $i$ and that the length of $C_2$ is $h-i+1$.
We apply \cref{le:convex-face-given-s1} to obtain a drawing $\Gamma_{C_1}$ of $C_1$ in which the edge $(v_1,v_2)$ is represented by $s_1$ and to obtain a drawing $\Gamma_{C_2}$ of $C_2$ in which the edge $(v_1,v_h)$ is represented by the segment connecting points $(0,0)$ and $(1,1)$. 
Let~$M$ be the maximum $x$-coordinate of a vertex in $\Gamma_{C_1}$ and $\Gamma_{C_2}$.
We leverage \cref{pro:shift-slanted} to obtain a drawing $\Gamma'_{C_1}$
 of $C_1$ by shifting the vertex~$v_i$ horizontally and rightward in $\Gamma_{C_1}$ so that its $x$-coordinate is $1+M$. Analogously, we leverage \cref{pro:shift-slanted} to obtain a drawing $\Gamma'_{C_2}$
 of~$C_2$ by shifting the vertex $v_{i+1}$ downward and rightward in $\Gamma_{C_2}$ so that its $x$-coordinate is $1+M$ and its $y$-coordinate is $0$.
 To obtain the drawing $\Gamma_C$ we then proceed as follows.
 We first initialize $\Gamma_C$ to $\Gamma'_{C_1}$. 
 Then we construct a vertically-mirrored copy $\Gamma''_{C_2}$ of $\Gamma'_{C_2}$, that is, we flip the sign of the $y$-coordinate of each vertex, and we insert $\Gamma''_{C_2}$ in $\Gamma_C$ so that the placement of $v_1$ is the same in both drawings.
 Finally, we remove from $\Gamma_C$ the drawing of the edges $(v_1,v_i)$ and $(v_1,v_{i+1})$ and draw the edge $(v_i,v_{i+1})$ as a vertical segment of unit length. 
 This concludes the construction of $\Gamma_C$.
 
 By construction, $\Gamma_C$ satisfies \cref{prof:e1-rounded} and \cref{prof:e2-rounded}.
 Since the length of $C_1$ is $i$, it follows by \cref{le:convex-face-given-s1} that $w(\Gamma_{C_1}) = x(v_2)\cdot(i-2)+\frac{(i-3)(i-2)}{2}+1$, and that $h(\Gamma_{C_1}) = i-1$.
 Also, since the length of $C_2$ is $h-i+1$, it follows by \cref{le:convex-face-given-s1} that $w(\Gamma_{C_2}) = (h-i-1)+\frac{(h-i-2)(h-i-1)}{2}+1 = \frac{(h-i-1)(h-i)}{2}+1$, and that $h(\Gamma_{C_2}) = h-i$.
The width of $\Gamma_C$ is given by $1$ plus the maximum between $w(\Gamma_{C_1})$ and $w(\Gamma_{C_2})$, which implies \cref{prof:width-rounded}.
 Also, the height of~$\Gamma_C$ is given by $h(\Gamma_{C_1})+h(\Gamma_{C_2})-1$, which implies \cref{prof:height-rounded}.

To complete the proof, it remains to prove that $\Gamma_C$ is an internally-strictly-convex drawing of $C$. By \cref{le:convex-face-given-s1}, the interior angles of $C$ at $v_2,\ldots,v_{i-1},v_{i+2},\ldots,v_h$ are strictly smaller than $180^\circ$. 
%
Since the slope of $(v_1,v_2)$ ranges in $(0,+\infty]$ while the slope of $(v_1,v_h)$ is either $-1$ (if $h \neq i+1$) or $0$ (if $h=i+1$), the interior angle of $C$ at $v_1$ is strictly smaller than $180^\circ$. Since $v_i$ and $v_{i+1}$ are the rightmost vertices of $\Gamma_C$, it follows that each of the interior angles of $C$ at $v_i$ and $v_{i+1}$ is also strictly smaller than $180^\circ$, completing the proof.
\end{proof}


\begin{property}\label{pro:shift-rightward}
Let $\Gamma_C$ be the drawing of a cycle $C=(v_1,v_2,\dots, v_i,v_{i+1},\dots,v_h)$ produced by \cref{le:convex-face-given-s2}. Then the drawing obtained by shifting  vertices $v_{i}$ and $v_{i+1}$ in $\Gamma_C$ by the same integer amount to the right is an internally-strictly-convex grid drawing of $C$.
\end{property}

\noindent We next show how to construct a block-drawing of a fan graph with appropriate area bounds.

\begin{lemma}\label{le:blockdrawing}
   Every $n_i$-vertex fan graph $G_i$ with internal faces of size at most $k_i$ admits a block-drawing whose width is $O(n_ik_i)$ and whose height is $O(k_i)$.
\end{lemma}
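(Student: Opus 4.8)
The plan is to draw the fan graph $G_i$ face by face, walking along the weak-dual path of $G_i$ from the gate edge $e^*_i$ to the gate edge $e^*_{i+1}$, and stitching the individual strictly-convex face drawings produced by Lemmas \ref{le:convex-face-given-s1} and \ref{le:convex-face-given-s2} along their shared edges. First I would recall the structure of a fan graph: since $G_i$ is obtained by taking all faces of $G$ incident to a single vertex $u_i$ (or the two endpoints $u_i,v_i$ of $e^*_i$), its weak dual is a path, and all of its internal faces share a common vertex — call it the \emph{apex} — so $G_i$ is essentially a sequence of faces $\phi_1,\dots,\phi_q$ glued in a fan around the apex, with $q = O(n_i/1)$ faces but with $\sum_j (|\phi_j|-2) = O(n_i)$ counting multiplicities correctly. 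The two gate edges $e^*_i$ and $e^*_{i+1}$ are the ``first'' and ``last'' non-apex edges of this fan, and \ref{prp:gate-edges} requires each to be drawn as a unit vertical segment with the $u$-endpoint at $y=0$ and the $v$-endpoint at $y=1$.

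The key steps, in order: (1) Orient the fan so that the apex lies on the horizontal line $y=1$ (say), and process the faces left to right. (2) For the first face $\phi_1$, which contains $e^*_i$, invoke \cref{le:convex-face-given-s1} (or \cref{le:convex-face-given-s2} if $\phi_1$ must also expose an interior gate-type edge) with $s_1$ the unit vertical segment representing $e^*_i$; by \cref{pro:vh} this places the opposite apex-incident edge as a horizontal-ish segment on the line $y=1$, ready to be shared with $\phi_2$. (3) For each subsequent face $\phi_j$ sharing edge $\sigma_{j-1}$ (an apex edge) with $\phi_{j-1}$: if $\phi_j$ is an intermediate face we use \cref{le:convex-face-given-s1} with $s_1 = \sigma_{j-1}$, so the new face is appended to the right with all new vertices in the right half-plane of $\sigma_{j-1}$ (\cref{pro:side}), guaranteeing that $\phi_j$'s drawing is horizontally disjoint from $\phi_1,\dots,\phi_{j-1}$ and does not collide with them; if $\phi_j$ additionally carries the outgoing gate edge $e^*_{i+1}$ that must become vertical, we instead use \cref{le:convex-face-given-s2} to force that edge to be a unit vertical segment on the rightmost grid column (\cref{prof:e2-rounded}). (4) Possibly adjust using \cref{pro:shift-slanted} and \cref{pro:shift-rightward} to make the shared edges match exactly when the two faces incident to an apex edge prescribe slightly different placements for the non-apex endpoint — shifting that endpoint rightward/downward, which both lemmas permit. (5) Finally, verify properties \ref{prp:gate-edges}--\ref{prp:gate-right-vertices}: the two gate edges are unit vertical segments by choice of $s_1$ and by \cref{le:convex-face-given-s2}; the left endpoints $u_i,v_i$ sit at $x=0$ and, since \cref{pro:side} pushes everything into the right half-plane of the first edge, no other vertex shares their column; symmetrically $u_{i+1},v_{i+1}$ are the rightmost vertices and alone in their column.

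For the area bound: the height is controlled because each face drawn by \cref{le:convex-face-given-s1} has height $|\phi_j|-1 \le k_i - 1$, and these heights do \emph{not} add up across faces — every face shares its ``spine'' along the line through the apex, so the overall vertical extent is $O(k_i)$; more carefully, one checks that the drawing stays within a horizontal strip of height $O(k_i)$ since every face's bulk lies on one fixed side of the apex line and reaches height at most $k_i-1$ above (or below) it, and \cref{le:convex-face-given-s2} contributes height $\le k_i - 2$ on each side, for $O(k_i)$ total. The width is the sum of the individual face widths: by \cref{pro:width}, face $\phi_j$ with $s_1$ having $x$-extent $x_j$ contributes width $O(x_j \cdot k_i + k_i^2)$; since the $x$-coordinate $x_j$ of the shared edge grows, a naive sum is too big, so the crucial observation is that $x_j = O(n_i)$ throughout (the total width is what we are bounding, and it is monotone), whence each of the $O(n_i/1)$... — better: there are at most $n_i$ faces in total and each adds at most $O(k_i)$ new grid columns because the $h-2$ ``new'' non-apex vertices of $\phi_j$ are placed on consecutive-ish columns spanning $O(x_j)$, but summing the increments telescopes to the final width $W$, giving $W \le \sum_j O(|\phi_j| \cdot \text{(local slope)}) $; I would instead argue directly that placing face $\phi_j$ adds exactly $|\phi_j|-2 \le k_i$ new vertices and hence at most $O(k_i)$ new occupied columns beyond the rightmost column so far — since each \cref{le:convex-face-given-s1} drawing, once $s_1$ is a \emph{vertical} unit segment ($x(v_2)=0$ in the lemma's notation, the apex edge), has width exactly $\frac{(h-3)(h-2)}{2}+1 = O(k_i^2)$, and we only ever reuse, not re-span — summing $O(k_i^2)$ over $O(n_i/k_i)$ genuinely-distinct-size faces, or more crudely $O(k_i)$ new columns over $n_i$ faces, yields $W = O(n_i k_i)$.

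\textbf{Main obstacle.} The delicate point is the ``merging'' in step (4): when an apex edge $\sigma_{j-1}$ is shared by $\phi_{j-1}$ and $\phi_j$, the drawing of $\phi_{j-1}$ fixes \emph{both} endpoints of $\sigma_{j-1}$ (one is the apex, one is the newly placed non-apex vertex), and we must feed exactly this segment as $s_1$ into the lemma for $\phi_j$; but the lemmas require $v_1$ at the origin and $y(v_2)=1$, i.e.\ a \emph{unit-height} segment with integer $x(v_2) \ge 0$. After $\phi_{j-1}$ is drawn, $\sigma_{j-1}$ need not be unit-height — the non-apex endpoint may be several rows below the apex. Resolving this requires either (a) arguing inductively that the apex edges are always drawn with unit vertical drop (which forces a careful choice of which lemma to apply to each face and in which orientation — essentially alternating/choosing so that the ``exit'' edge of every face is the short one), or (b) allowing a mild generalization of Lemmas \ref{le:convex-face-given-s1}–\ref{le:convex-face-given-s2} to segments $s_1$ of arbitrary positive integer height, re-proving strict convexity and the width/height bounds in that slightly more general setting. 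I expect option (a) to be the intended route, exploiting that a fan around a single apex lets every face keep the apex on one fixed horizontal line and every non-apex ``spine'' edge on the parallel line one unit away, so that \cref{pro:vh} (which puts $v_h$ at height $1$) is exactly what keeps all the shared edges unit-height; verifying that this is consistent all the way along the fan, including at the face carrying $e^*_{i+1}$ where \cref{le:convex-face-given-s2} must be used to turn that exit edge vertical, is the technical heart of the lemma.
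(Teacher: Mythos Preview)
Your overall plan matches the paper's: fix the apex at the origin, walk through the fan applying \cref{le:convex-face-given-s1} to each face in turn (feeding the just-drawn apex edge as the next $s_1$), and switch to \cref{le:convex-face-given-s2} on the final face so that $e^*_{i+1}$ comes out vertical. You also correctly resolved your ``main obstacle'': \cref{pro:vh} puts $v_h$ at height~$1$, so every shared apex edge keeps its non-apex endpoint on the line $y=1$ and your option (a) goes through exactly as you guessed.

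Two genuine gaps remain.

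\textbf{The last face may violate the hypothesis of \cref{le:convex-face-given-s2}.} That lemma requires $e_1$ and $e_2$ to be \emph{non-adjacent} in the cycle. If the last face $f'$ (the one carrying $e^*_{i+1}$) is a triangle, or more generally if the non-apex endpoint of its incoming apex edge $e'$ coincides with an endpoint of $e^*_{i+1}$, then $e_1=e'$ and $e_2=e^*_{i+1}$ are adjacent and the lemma cannot be invoked. The paper handles this with a trick you did not anticipate: it first deletes $e'$, merging the last two faces $f',f''$ into a single face $g_{t+1}$; the new incoming apex edge $e''$ is now separated from $e^*_{i+1}$ by at least the former non-apex endpoint of $e'$, so non-adjacency is guaranteed and \cref{le:convex-face-given-s2} applies. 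Afterwards $e'$ is reinserted as a straight chord of the resulting strictly-convex polygon.

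\textbf{Your width analysis rests on a false premise.} You claim that each $s_1$ is ``a vertical unit segment ($x(v_2)=0$)''. Only the very first one is; by \cref{pro:vh} the non-apex endpoint $z_j$ of the $j$-th apex edge has $x(z_j)=\sum_{l<j}(|g_l|-2)$, which grows up to roughly $n_i$. Consequently, by \cref{pro:width}, a \emph{single} intermediate face $g_j$ can already have rightmost $x$-coordinate $x(z_j)\cdot(|g_j|-2)+O(k_i^2)=\Theta(n_ik_i)$, so the claim ``each face adds $O(k_i)$ new columns'' is false and no telescoping occurs. The correct argument is that all faces share the apex at the origin, so the total width is the \emph{maximum} (not sum) of the individual face widths, each of which is $O(x(z_j)\cdot k_i+k_i^2)=O(n_ik_i)$; a final rightward shift of $u_{i+1},v_{i+1}$ (via \cref{pro:shift-rightward}) one unit past the global rightmost vertex then secures \cref{prp:gate-right-vertices} without changing the asymptotics.
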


\begin{proof}
Let $G_i$ be an $n_i$-vertex fan graph with gate edges $e^*_i = (u_i,v_i)$ and $e^*_{i+1} = (u_{i+1},v_{i+1})$.
To prove the statement, we distinguish two cases based on whether $G_i$ contains internal edges or not; refer to \cref{fig:convex-blocks}. 

In the latter case (see, e.g., $G_1$ in \cref{fig:convex-blocks}), $G_i$ is a cycle $C=(u_i,v_i,\dots,v_{i+1},u_{i+1},\dots)$ with $n_i \geq 4$.
Then a block drawing $\Gamma_i$ of $G_i$ can be obtained by applying 
\cref{le:convex-face-given-s2} to cycle $C$ with $e_1$ being $(u_i,v_i)$, $e_2$ being $(v_{i+1},u_{i+1})$, and with $s_1$ being the vertical segment connecting $(0,0)$ and $(0,1)$. In particular, \cref{prof:e1-rounded,prof:e2-rounded} of \cref{le:convex-face-given-s2} immediately imply \cref{prp:gate-edges} of $\Gamma_i$. Also, \cref{prp:gate-left-vertices,prp:gate-right-vertices} follow from the strict-convexity of~$\Gamma_i$.
By~\cref{prof:width-rounded} and \cref{prof:height-rounded} of \cref{le:convex-face-given-s2}, the width of $\Gamma_i$ is in $O(k_i^2)$, which is in $O(n_ik_i)$, and the height of $\Gamma_i$ is in $O(k_i)$.

\begin{figure}[t]
    \centering
    \includegraphics[page=10,width=.8\textwidth]{Figures/outerpath.pdf}
    \caption{Illustration for  \cref{le:blockdrawing} and \cref{th:outerpath}. The drawing represents the graph in~\cref{fig:fan-decomposition}.}
    \label{fig:convex-blocks}
\end{figure}

If $G_i$ contains internal edges, we further distinguish two sub-cases based on which, between $u_i$ and $v_i$, is incident to an internal edge of $G_i$.
First, we discuss the case in which $u_i$ is incident to internal edges  (see, e.g., $G_0$ in \cref{fig:convex-blocks}). 
Let $f'$ be the internal face of $G_i$ incident to $e^*_{i+1}$ and let $f''$ be the internal face of $G_i$ that shares an edge with $f'$. We denote the edge shared by $f'$ and $f''$ as $e'$, and note that $e'$ is incident to $u_i$; also, we denote as $e''$ the edge of $f''$ distinct from $e'$ and incident to $u_i$.
We temporarily remove the edge $e'$ from $G_i$. We treat the resulting graph $G^-_i$ as a fan graph with gate edges $e^*_i$ and $e^*_{i+1}$.

If $G^-_i$ is a simple cycle, which occurs if $e'$ is the only internal edge of $G_i$ and thus~$e'' = e^*_i$, we compute a block drawing of $G_i$ as follows. 
First, we compute a block drawing~$\Gamma^-_i$ of~$G^-_i$ by means of \cref{le:convex-face-given-s2}. Then, since~$\Gamma^-_i$ is a strictly-convex polygon, we obtain a block drawing of~$G_i$ by simply drawing the edge $e'$ as a straight-line chord of this polygon. Since in this case we have $n_i \leq 2k_i-2$, by~\cref{prof:width-rounded} and \cref{prof:height-rounded} of \cref{le:convex-face-given-s2} the width of $\Gamma_i$ is in $O(k_i^2)$, which is in $O(n_ik_i)$, and the height of $\Gamma_i$ is in $O(k_i)$.

Suppose now that $G^-_i$ contains internal edges (incident to $u_i$). Let $g_1,g_2,\dots,g_{t+1}$ be the internal faces of $G^-_i$ in clockwise order around $u_i$, where~$g_1$ is incident to $e^*_i$ and~$g_{t+1}$ is incident to $e''$ and to $u_{i+1}$. Note that $g_{t+1}$ is the face resulting from ``merging''~$f'$ and~$f''$ after the removal of $e'$. For $j=2,\dots,t+1$, let $(u_i,z_j)$ be the unique edge shared by $g_{j-1}$ and $g_j$. First, we compute a block drawing~$\Gamma^-_i$ of~$G^-_i$ as follows. We initialize $\Gamma^-_i$ to a drawing obtained by applying \cref{le:convex-face-given-s1} to the cycle bounding~$g_1$ with~$e_1$ being $e^*_i$ and $s_1$ being the vertical segment of unit length between $(0,0)$ and~$(0,1)$. Next, for $j=2,\dots,t$, assume that we have an internally-strictly-convex grid drawing $\Gamma^-_i$ of the graph formed by the boundaries of the faces $g_1,g_2,\dots,g_{j-1}$. Moreover, we assume that $\Gamma^-_i$ lies in the left half-plane of the line $\ell_{j}$ passing through $u_i$ and $z_j$ when going from $u_i$ to $z_j$; also, we assume that the segment representing $(u_i,z_j)$ is such that $y(z_j)=1$ and $x(z_j)\geq 0$. We extend $\Gamma^-_i$ by adding a drawing of the cycle bounding~$g_j$ obtained by applying \cref{le:convex-face-given-s1} to such a cycle with~$e_1$ being~$(u_i,z_j)$ and~$s_1$ being the segment representing $(u_i,z_j)$ in the drawing $\Gamma^-_i$ computed so far. 
Observe that, by \cref{le:convex-face-given-s1}, the drawing of the cycle bounding $g_j$ is strictly-convex and that the edges of $g_j$ do not cross any of the edges of $\Gamma^-_i$ by \cref{pro:side}.
Therefore, $\Gamma^-_i$ is an internally-strictly-convex grid drawing. Moreover, by \cref{pro:vh} of \cref{le:convex-face-given-s1}, $\Gamma^-_i$  lies in the left half-plane of the line $\ell_{j+1}$  passing through $u_i$ and $z_{j+1}$ when going from $u_i$ to $z_{j+1}$; also, the segment representing $(u_i,z_{j+1})$ is such that $y(z_{j+1})=1$ and $x(z_{j+1})\geq 0$. Once $\Gamma^-_i$ is an internally-strictly-convex grid drawing of the graph formed by the boundaries of the faces $g_1,g_2,\dots,g_t$, we have that the edge~$e''$ belongs to $\Gamma^-_i$, as it belongs to $g_t$. Then we augment $\Gamma^-_i$ with a drawing of the cycle $C_{t+1}$ bounding the face $g_{t+1}$ obtained by applying \cref{le:convex-face-given-s2} with~$e_1$ being~$e''$,~$e_2$ being~$e^*_{i+1}$, and~$s_1$ being the segment representing $e''$ in the drawing $\Gamma^-_i$ computed so far. Furthermore, by \cref{pro:shift-rightward}, we shift (by the same amount) to the right vertices~$u_{i+1}$ and~$v_{i+1}$ so that they lie one unit to the right of the rightmost of the remaining vertices of~$G_i$.
Note that this shift might be needed when the boundaries of some of the faces $g_1,\ldots,g_t$ drawn by \cref{le:convex-face-given-s1} are ``long'' (e.g., of length $k$), while the boundary of the last face $g_{t+1}$ drawn by \cref{le:convex-face-given-s2} is ``short'' (e.g., of length $4$); in this situation, $u_{i+1}$ and~$v_{i+1}$ might not be the rightmost vertices of~$\Gamma^-_i$ and this requires shifting~$u_{i+1}$ and~$v_{i+1}$. 
Furthermore, observe that the edge $e''$ lies along~$\ell_{t+1}$ and that $C_{t+1}$ is drawn as a  strictly-convex polygon $P_{t+1}$ with one side along~$\ell_{t+1}$ (\cref{prof:e1-rounded}) that lies to the right of this line (\cref{prof:e2-rounded}). Therefore, $\Gamma^-_i$ is planar, and thus it is an internally-strictly-convex grid drawing of $G^-_i$.
Finally, to obtain the desired block drawing $\Gamma_i$ of $G_i$, we simply draw the edge $e'$ as a straight-line chord of $P_{t+1}$ in $\Gamma^-_i$.

Since all the applications of \cref{le:convex-face-given-s1} and \cref{le:convex-face-given-s2} use segments incident to $u_i$, it follows that the height of the computed drawing is in $O(k_i)$. 

%

We now prove an upper bound on the width of $\Gamma_i$. We first upper bound the $x$-coordinate of any vertex $z_{p}$, with $2 \leq p \leq t+1$. By \cref{pro:vh} of \cref{le:convex-face-given-s1}, we have that the $x$-coordinate of $z_{p}$ is less than $\sum^{p-1}_{j=1} |g_p|$, which is at most $n_i$. It follows by \cref{pro:width} of \cref{le:convex-face-given-s1} that the width of the drawing of the face $g_j$, for any $j=2,\dots,t$, is less than $n_ik_i +  k_i^2/2 +1$. Also, it follows by \cref{prof:width-rounded} of \cref{le:convex-face-given-s2}, that the width of the drawing of the face $g_{t+1}$, before the shift of $u_{i+1}$ and~$v_{i+1}$, is less than $2n_ik_i +  2k_i^2 +2$; note that the face $g_{t+1}$ is the merge of two faces of length at most $k_i$. Thus, we get that the width of $\Gamma_i$ is in $O(n_i k_i+ k_i^2)$, and hence in $O(n_i k_i)$.


Finally, the case in which $v_i$ is incident to an internal edge (see, e.g., $G_2$ in \cref{fig:convex-blocks}) can be handled by first flipping the outerplanar embedding of $G_i$, afterward exchanging the roles of $u_i$ and $v_i$, then computing a block drawing $\Gamma_i$ of $G_i$ as described above, and finally vertically flipping $\Gamma_i$ around the axis $y:=0.5$.
\end{proof}

We are now ready to prove the main theorem of this section. 

\begin{theorem} \label{th:outerpath}
Every $n$-vertex outerpath whose internal faces have size at most $k$ admits an embedding-preserving internally-strictly-convex grid drawing in $O(nk^2)$ area.
\end{theorem}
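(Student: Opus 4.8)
The plan is to assemble the desired drawing by gluing together small-area drawings of the fan graphs $G_0,G_1,\dots,G_r$ from the decomposition of~$G$ described above; refer to \cref{fig:convex-blocks}. First I would apply \cref{le:blockdrawing} to each fan graph $G_i$, obtaining a block-drawing $\Gamma_i$ of~$G_i$; if $G_i$ has $n_i$ vertices and internal faces of size at most $k_i\le k$, then $w(\Gamma_i)=O(n_ik_i)$ and $h(\Gamma_i)=O(k_i)$, and by \cref{prp:gate-edges} the two gate edges $e^*_i$ and $e^*_{i+1}$ of~$G_i$ are represented in $\Gamma_i$ by vertical unit segments whose endpoints have $y$-coordinates $0$ and $1$. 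I would then scan the fan graphs in the order $G_0,G_1,\dots,G_r$, maintaining a drawing of $G_0\cup\cdots\cup G_{i-1}$, and add $\Gamma_i$ by translating it \emph{horizontally} (and only horizontally) so that the unit segment representing $e^*_i$ in $\Gamma_i$ is superimposed on the unit segment representing $e^*_i$ in the current drawing; this is always possible since, by \cref{prp:gate-edges}, these two segments differ only by a horizontal shift. Because two consecutive fan graphs $G_{i-1}$ and $G_i$ share exactly the edge $e^*_i$, and because every edge of~$G$ lies on the boundary of some internal face and hence belongs to some $G_i$, the final drawing $\Gamma$ is a well-defined straight-line grid drawing of the outerpath~$G$.

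Next I would prove that $\Gamma$ is planar and internally-strictly-convex. For planarity, I would show by induction on $i$ that $\Gamma_0\cup\cdots\cup\Gamma_i$ is planar, lies weakly to the left of the vertical line $\ell_{i+1}$ through the translated segment $e^*_{i+1}$, and has $u_{i+1},v_{i+1}$ as its only vertices on $\ell_{i+1}$. Indeed, \cref{prp:gate-right-vertices} guarantees that $\Gamma_i$ alone lies weakly to the left of $\ell_{i+1}$ with only $u_{i+1},v_{i+1}$ on it, while \cref{prp:gate-left-vertices} guarantees that $\Gamma_i$ lies weakly to the right of the line through $e^*_i$ with only $u_i,v_i$ on it; combined with the inductive hypothesis, this means that $\Gamma_0\cup\cdots\cup\Gamma_{i-1}$ and $\Gamma_i$ occupy disjoint open vertical slabs, meet only along the common line through $e^*_i$, and share there exactly the edge $e^*_i$. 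Hence no two edges of~$\Gamma$ cross. For strict convexity, I would observe that the internal faces of $G_0,\dots,G_r$ partition the internal faces of~$G$: a face already entirely contained in $G_0\cup\cdots\cup G_{i-1}$ cannot reappear in $G_i$, since of the vertices of $V_{i-1}$ only the two gate vertices $u_i,v_i$ survive in $G_i$, which is too few to bound a face. Each such face is drawn strictly-convex in the block-drawing that contains it, and a horizontal translation preserves strict convexity; hence every internal face of~$G$ is strictly-convex in~$\Gamma$.

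Finally I would bound the area of~$\Gamma$. For the height: by \cref{prp:gate-edges} every $\Gamma_i$ meets the grid lines $y=0$ and $y=1$, and $h(\Gamma_i)=O(k_i)=O(k)$, so each $\Gamma_i$---and therefore $\Gamma$---is contained in a single horizontal band of height $O(k)$ around $y=0$, giving $h(\Gamma)=O(k)$. For the width: $w(\Gamma)\le\sum_{i=0}^r w(\Gamma_i)=\sum_{i=0}^r O(n_ik_i)\le O(k)\cdot\sum_{i=0}^r n_i$. Since the internal faces of the $G_i$'s partition those of~$G$, there are $O(n)$ fan graphs; and since consecutive fan graphs overlap in exactly the two endpoints of one gate edge while non-consecutive fan graphs are otherwise vertex-disjoint, $\sum_{i=0}^r n_i=O(n)$. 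Therefore $w(\Gamma)=O(nk)$, so $\Gamma$ has area $O(nk)\cdot O(k)=O(nk^2)$, matching the lower bound of \cref{th:lower}.

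The step I expect to be most delicate is not any single computation but the structural bookkeeping underpinning the gluing: one must verify carefully that the fan-graph decomposition partitions the internal faces of~$G$, that consecutive fan graphs meet in exactly the two endpoints of one gate edge, and that in each $\Gamma_i$ the embedding is oriented so that the face of~$G_i$ incident to $e^*_i$ lands on the side of $\ell_i$ opposite the already-drawn part of~$\Gamma$. Each of these facts follows from the definitions of fan graph and gate edge together with Properties~\ref{prp:gate-edges}--\ref{prp:gate-right-vertices}, but they are precisely the hypotheses that make the slab argument for planarity and the $\sum n_i=O(n)$ count go through, so they deserve the most attention.
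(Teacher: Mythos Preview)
Your proposal is correct and follows essentially the same approach as the paper: apply \cref{le:blockdrawing} to each fan graph and glue the resulting block-drawings at their shared gate edges, obtaining height $O(k)$ and width $O(nk)$. The paper's own proof is much terser---it simply asserts that the height and width bounds follow from the gluing---whereas you additionally spell out the slab argument for planarity (from \cref{prp:gate-left-vertices,prp:gate-right-vertices}) and the reason $\sum_i n_i = O(n)$, both of which the paper leaves implicit.
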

\begin{proof}
   We compute block drawings $\Gamma_0,\dots,\Gamma_r$ of the fan graphs $G_0,\dots,G_r$ of the outerpath~$G$ so that each block drawing has height $O(k_i)$ and width $O(n_ik_i)$ by \cref{le:blockdrawing}, where $n_i$ is the number of vertices of~$G_i$ and $k_i$ is the maximum size of any face of $G_i$ ($0\leq i \leq r)$. We iteratively construct $\Gamma$ by ``gluing'' together the block drawings $\Gamma_0,\dots,\Gamma_r$ at their common gate edges. 
More precisely, we proceed as follows. We initialize $\Gamma$ to $\Gamma_0$. 
Then, for $i=1,\dots,r$, we augment $\Gamma$ with
a copy of~$\Gamma_{i}$ so that the drawing of the gate edge $e^*_{i}$ in~$\Gamma_{i}$ coincides with the drawing of~$e^*_{i}$ in the drawing constructed so far. It follows that the height of $\Gamma$ is $O(\underset{0\leq i \leq r}{\max}\{k_i\})$, which is in $O(k)$, while its width is $O(\sum_{i=0}^{r} n_ik_i)$, which is in $O(nk)$.
\end{proof}

As a corollary of \cref{th:outerpath}, we obtain the following.

\begin{corollary}
Let $G$ be an $n$-vertex outer-$1$-plane graph whose outer frame is an outerpath. Then $G$ admits an embedding-preserving straight-line grid drawing in $O(n)$ area.
\end{corollary}

\begin{proof}
Let $G'$ be the outer frame of $G$ and let $(f_1,f_2,\dots,f_p)$ be the weak dual tree of $G'$, which is a path. Recall that, by the construction of the outer frame, every face of $G'$ that contains a pair of crossing edges in $G$ is delimited by a $4$-cycle. We triangulate all the other internal faces of $G'$, while maintaining $G'$ an outerpath. This is done as follows. For each face $f_i$ that does not contain any pair of crossing edges in $G$, let $(u_1,u_2,\dots,u_k)$ be the cycle bounding $f_i$. Also, let $(u_1,u_2)$ and $(u_j,u_{j+1})$ be the edges of $G'$ dual to the edges $(f_{i-1},f_i)$ and $(f_{i},f_{i+1})$, respectively; if $i=1$, the edge $(u_1,u_2)$ is chosen arbitrarily among the edges incident to $f_1$ and different from $(u_j,u_{j+1})$, and similarly for the case in which $i=p$. We add inside $f_i$ the edges $(u_1,u_3),(u_1,u_4),\dots,(u_1,u_j)$ and the edges $(u_j,u_k),(u_j,u_{k-1}),\dots,(u_j,u_{j+2})$. After the augmentation, every internal face of $G'$ has at most $4$ vertices, hence by \cref{th:outerpath} we have that $G'$ admits an internally-strictly-convex drawing $\Gamma'$ in $O(n)$ area. Since the internal faces of $G'$ are represented as strictly-convex polygons in $\Gamma'$, re-introducing the removed crossing edges in $\Gamma'$ as straight-line segments, and removing the edges of $G'$ that do not belong to $G$ results in an embedding-preserving straight-line grid drawing of $G$ in $O(n)$ area. 
\end{proof}


\subsection{Outerplanar Graphs of Bounded Diameter}\label{sse:diameter}

In this section we prove bounds on the area requirements of internally-strictly-convex grid drawings of outerplane graphs of bounded diameter. Let $G$ be a connected graph. The \emph{distance} $\delta_G(u,v)$ between two vertices $u$ and $v$ in $G$ is the length (number of edges) of the shortest path between $u$ and $v$ in $G$. The \emph{diameter} of $G$ is the maximum distance between any two vertices of $G$. We start with the following lower bound.


\begin{theorem} \label{th:lower-diameter}
For every integers $n\geq 3$ and $1\leq d\leq n/2$, there exists an $n$-vertex outerplane graph with diameter $d$ that requires $\Omega(nd^2)$ area in any internally-strictly-convex grid drawing.
\end{theorem}

\begin{proof}
Assume that $d\geq 3$, as if $d\leq 2$ every $n$-vertex outerplane graph with diameter $d$ requires $\Omega(nd^2)\in \Omega(n)$ area in any internally-strictly-convex grid drawing.

\begin{figure}[htb]
    \begin{subfigure}[b]{0.45\textwidth}
      \centering
      \includegraphics[page=1]{Figures/diameter.pdf}
      \subcaption{}
      \label{fig:diameter-even}
    \end{subfigure}
    \hfill
    \begin{subfigure}[b]{0.45\textwidth}
      \centering
      \includegraphics[page=2]{Figures/diameter.pdf}
      \subcaption{}
      \label{fig:diameter-odd}
    \end{subfigure}
    \caption{Illustrations for the proof of \cref{th:lower-diameter}. (a) $n=12$ and $d=4$. (b) $n=14$ and $d=5$.}
    \label{fig:diameter}
\end{figure}

If $d$ is even, as in \cref{fig:diameter-even}, then let~$G$ consist of~$\lfloor \frac{n-1}{d-1} \rfloor$ cycles with~$d$ vertices, all sharing the same vertex~$v$, plus additional~$n-1-(d-1)\cdot \lfloor \frac{n-1}{d-1} \rfloor$ vertices, all adjacent to~$v$. Note that~$\lfloor \frac{n-1}{d-1} \rfloor\geq 2$, since $d\geq 3$ and $n\geq 2d$. If~$d$ is odd, as in \cref{fig:diameter-odd}, then let~$G$ consist of a cycle with~$d+1$ vertices, together with~$\lfloor \frac{n-(d+1)}{d-1} \rfloor$ cycles with~$d$ vertices, where again all the cycles share the same vertex~$v$, plus additional~$n-(d+1)-(d-1)\cdot \lfloor \frac{n-(d+1)}{d-1} \rfloor$ vertices, all adjacent to~$v$. Note that~$\lfloor \frac{n-(d+1)}{d-1} \rfloor\geq 1$, since $d\geq 3$ and $n\geq 2d$. Clearly,~$G$ has~$n$ vertices and diameter~$d$. As in the proof of \cref{th:lower}, we have that any internally-strictly-convex grid drawing of~$G$ must be outerplanar and hence such a drawing contains~$\Omega(\frac{n}{d})$ internal faces, each of which has size~$\Omega(d)$ and hence occupies~$\Omega(d^3)$ area \cite{Andrews1963}, from which the lower bound of the theorem follows.
\end{proof}

We now show an upper bound matching the lower bound of \cref{th:lower-diameter}.

\begin{theorem} \label{th:upper-diameter}
Every $n$-vertex outerplane graph with diameter $d$ admits an embedding-preserving internally-strictly-convex grid drawing in $O(nd^2)$ area.
\end{theorem}

In the rest of the section we prove \cref{th:upper-diameter}. 

Let $G$ be an outerplane graph with diameter $d$. Since adding edges to $G$ does not increase its diameter, and since the outer face is not required to be convex in an internally-strictly-convex grid drawing, the following observation allows us to assume that $G$ is biconnected. 

\begin{observation}\label{obs:assumption-biconnected}
It is possible to add edges in the outer face of $G$ so that it becomes biconnected and remains outerplane. 
\end{observation}

Next, we prove the following lemma.

\begin{lemma}\label{le:assumption-faces}
It is possible to remove some internal edges from~$G$ so that the resulting graph~$G'$ satisfies the following properties:
\begin{itemize}
    \item the diameter of $G'$ is at most $4d$; and
    \item every internal face of $G'$ has at least $4$ incident vertices.
\end{itemize}
\end{lemma}
\begin{proof}
We remove edges from $G$ in two steps. 

In a first step, while there exist two internal faces that are incident to~$3$ vertices each and that share an edge on their boundary, we remove such an edge. Overall, these removals at most double the diameter of~$G$. Indeed, denote by~$G^*$ the graph obtained after the removals and consider any path~$P=(u_1,u_2,\dots,u_k)$ in~$G$. We construct a connected subgraph~$H^*$ of~$G^*$ that contains~$u_1$ and~$u_k$ and that has at most $2k-2$ edges, as follows. For~$i=1,\dots,k-1$, if the edge~$(u_i,u_{i+1})$ is in~$G^*$, we add it to~$H^*$. Otherwise,~$(u_i,u_{i+1})$ is shared by two internal faces~$f$ and~$g$ of~$G$, with boundaries~$(u_i,u_{i+1},u(f))$ and~$(u_i,u_{i+1},u(g))$, respectively; we then add~the path~$(u_i,u(f),u_{i+1})$ to~$H^*$. Note that the path~$(u_i,u(f),u_{i+1})$ is in~$G^*$, since after the removal of~$(u_i,u_{i+1})$ the face resulting from the union of~$f$ and~$g$ has~$4$ incident vertices, hence its incident edges are not removed. The number of edges of~$H^*$ is at most twice the length of~$P$, hence at most $2k-2$. Thus,~$H^*$ contains a path from~$u_1$ to~$u_k$ whose length is at most twice the length of~$P$.


For the second step, note that $G^*$ does not contain any two faces that share an edge on their boundary and that are incident to~$3$ vertices each. Then, while $G^*$ contains a face incident to~$3$ vertices, we remove an arbitrary edge incident to it. Overall, these removals at most double the diameter of~$G^*$. Indeed, denote by~$G'$ the graph obtained after the removals and consider any path~$P=(u_1,u_2,\dots,u_k)$ in~$G'$. We construct a connected subgraph~$H'$ of~$G'$ that contains~$u_1$ and~$u_k$ and that has at most $2k-2$ edges, as follows. For~$i=1,\dots,k-1$, we add the edge~$(u_i,u_{i+1})$ to~$H'$ if it belongs to~$G'$. Otherwise,~$(u_i,u_{i+1})$ is incident to a face~$f$ of $G^*$ delimited by a cycle~$(u_i,u_{i+1},u(f))$, and we add to~$H'$ the path~$(u_i,u(f),u_{i+1})$. Such a path is in~$G'$, since the faces adjacent to $f$ have more than $3$ incident vertices, hence the edges incident to $f$ can only be removed because of $f$. The number of edges of~$H'$ is at most twice the length of~$P$, hence at most $2k-2$. Thus,~$H'$ contains a path from~$u_1$ to~$u_k$ whose length is at most twice the length of~$P$.
\end{proof}

Note that, inserting the edges removed from $G$ in an internally-strictly-convex grid drawing of $G'$, results in an internally-strictly-convex grid drawing of $G$. Hence, by \cref{obs:assumption-biconnected} and \cref{le:assumption-faces}, in order to prove \cref{th:upper-diameter} we can hereafter assume that $G$ is an $n$-vertex biconnected outerplane graph with diameter $d$ and that every internal face of $G$ has at least $4$ vertices.

In the following, we say that a straight-line segment~$\overline{pq}$ is an \emph{$s \times t$-segment} if~$|x(p)-x(q)|=s$ and~$|y(p)-y(q)|=t$. We now define two types of internally-strictly-convex drawings. We will then show an algorithm that recursively constructs such drawings within small area.

Let~$m$ be the number of edges of~$G$, let~$a$ and~$b$ be two positive integers such that~$b-a+1=m$, that is, the interval $[a,b]$ contains $m$ integers, and let~$(u,v)$ be an edge incident to the outer face of~$G$, where~$u$ comes right before~$v$ in counter-clockwise order along the outer face of~$G$. 
A \emph{$(u,v,a,b)$-top-drawing}~$\Gamma$ of~$G$ is an internally-strictly-convex grid drawing of~$G$ satisfying the following properties (see \cref{fig:top-drawing}):
\begin{enumerate}[label=\textbf{T\arabic*}] 
    \item First, the edge~$(u,v)$ is represented by an~$s\times 1$-segment~$\sigma$, with~$a\leq s\leq b$, where~$u$ is to the left of~$v$.
    \item
    Second, consider the half-line starting at~$u$ with slope~$1/a$, and the half-line starting at~$v$ with slope~$1/b$. Such half-lines, together with the segment~$\sigma$, partition the plane in two closed regions. 
    Then~$\Gamma$ is contained in the closed region above~$\sigma$.
\end{enumerate}

\begin{figure}[htb]
    \begin{subfigure}[b]{0.45\textwidth}
      \centering
      \includegraphics[page=3]{Figures/diameter.pdf}
      \subcaption{}
      \label{fig:top-drawing}
    \end{subfigure}
    \hfill
    \begin{subfigure}[b]{0.5\textwidth}
      \centering
      \includegraphics[page=4]{Figures/diameter.pdf}
      \subcaption{}
      \label{fig:side-drawing}
    \end{subfigure}
    \caption{Schematic visualizations of (a) a $(u,v,a,b)$-top-drawing and (b) a $(u,v,\Sigma,a,b)$-side-drawing.}
    \label{fig:drawing-types}
\end{figure}

Now, let $m$, $a$, $b$, and~$(u,v)$ be defined as above, and additionally let $\Sigma$ be a set that consists either of one $s\times 1$-segment $\sigma$ with $0<s<a$, or of two segments $\sigma_1$ and $\sigma_2$, where $\sigma_1$ is a $z_1\times 1$-segment, $\sigma_2$ is a $z_2\times 1$-segment, and $0<z_1<z_2<a$; in the latter case, let $\sigma$ be a $(z_1+z_2)\times 2$-segment.  
A \emph{$(u,v,\Sigma,a,b)$-side-drawing}~$\Gamma$ of~$G$ is an internally-strictly-convex grid drawing of~$G$ satisfying the following properties (see \cref{fig:side-drawing}):
\begin{enumerate}[label=\textbf{S\arabic*}] 
    \item First, the edge~$(u,v)$ is represented by~$\sigma$, with~$v$ to the left of~$u$.
    \item Second, consider the half-line starting at~$u$ with slope~$1/a$, and the half-line starting at~$v$ with slope~$1/b$. Such half-lines, together with the segment~$\sigma$, partition the plane in two closed regions. Then~$\Gamma$ is contained in the closed region to the right of~$\sigma$.
\end{enumerate}

Note that, differently from a $(u,v,a,b)$-top-drawing, in a $(u,v,\Sigma,a,b)$-side-drawing the representation of $(u,v)$ is prescribed.

We now prove, by simultaneous induction on $m$, that $G$ admits:
\begin{itemize}
\item a $(u,v,a,b)$-top-drawing such that, for any vertex $w\neq u$, the difference $y(w)-y(u)$ is at most $2\cdot \delta_G(u,w)$ and the difference $x(w)-x(u)$ is at most $2b\cdot \delta_G(u,w)$; and
\item a $(u,v,\Sigma,a,b)$-side-drawing such that, for any vertex $w\neq v$, the difference $y(w)-y(v)$ is at most $2\cdot \delta_G(v,w)$ and the difference $x(w)-x(v)$ is at most $2b\cdot \delta_G(v,w)$.
\end{itemize}
We call the constraints on the width and on the height of such drawings the \emph{width} and \emph{height properties}, respectively. By choosing $a=1$ and $b=m$, since $m \in O(n)$ and $\delta_G(u,v)\leq d$, we have that the height of each drawing is in $O(d)$ and the width is in $O(nd)$, hence \cref{th:upper-diameter} follows.

In the base case, we have $m=1$. 
\begin{itemize}
    \item A $(u,v,a,b)$-top-drawing $\Gamma$ is constructed by placing $u$ at any grid point of the plane and by placing $v$ so that $x(v)-x(u)=a$ and $y(v)-y(u)=1$. Obviously, $\Gamma$ is an internally-strictly-convex grid drawing of $G$. Since $m=1$, we have $b=a$, and hence Property T1 of a $(u,v,a,b)$-top-drawing is satisfied. Property T2 is also trivially satisfied, given that $G$ consists of $(u,v)$ only. By construction, $y(v)-y(u)=1<2=2\cdot \delta_G(u,v)$ and $x(v)-x(u)=a<2b=2b\cdot \delta_G(u,v)$.
    \item A $(u,v,\Sigma,a,b)$-side-drawing $\Gamma$ is constructed as $\Gamma=\sigma$, where the leftmost end-point of $\sigma$ represents $v$ and the rightmost end-point of $\sigma$ represents $u$. Obviously, $\Gamma$ is an internally-strictly-convex grid drawing of $G$ satisfying Property S1. Property S2 is also trivially satisfied, given that $G$ consists of $(u,v)$ only. By the definition of $\sigma$, we have $y(u)-y(v)\leq 2=2\cdot \delta_G(v,u)$ and $x(u)-x(v)<2a\leq 2b=2b\cdot \delta_G(v,u)$. 
\end{itemize}  

Assume now that a $(u,v,a,b)$-top-drawing and a $(u,v,\Sigma,a,b)$-side-drawing exist if $G$ has at most $m-1$ edges. We show how to construct such drawings if $G$ has $m$ edges. Let $f$ be the unique internal face of $G$ the edge $(u,v)$ is incident to, and let $u_1=u,u_2,\dots,u_k=v$ be the clockwise order of the vertices along the boundary of $f$. For $i=1,\dots,k-1$, let $G_i:=G[u_{i+1},u_i]$ be the outerplane subgraph of $G$ induced by $u_i$, by $u_{i+1}$, and by the vertices that are encountered when traversing the boundary of the outer face of $G$ in counter-clockwise direction from $u_{i+1}$ to $u_i$. For $i=1,\dots,k-1$, let $m_i\geq 1$ be the number of edges of $G_i$; note that $\sum_{i=1}^{k-1} m_i=m-1$, given that every edge of $G$ belongs to a single graph $G_i$, except for the edge $(u,v)$, which does not belong to any graph $G_i$. Finally, for $i=1,\dots,k-1$, let $a_i=a+\sum_{j=1}^{i-1} m_j$ and $b_i=a-1+\sum_{j=1}^{i} m_j$. Note that $b_i-a_i+1=m_i$ and that $a=a_1<b_1<a_2<b_2<\dots<a_{k-1}<b_{k-1}<b$. 

\begin{figure}[htb]
    \begin{subfigure}[b]{0.9\textwidth}
      \centering
      \includegraphics[page=5,scale=.9]{Figures/diameter.pdf}
      \subcaption{In this example, we have $k=6$.}
      \label{fig:top-drawing-construction-even}
    \end{subfigure}
    \\
    \begin{subfigure}[b]{0.9\textwidth}
      \centering
      \includegraphics[page=6,scale=.9]{Figures/diameter.pdf}
      \subcaption{In this example, we have $k=5$.}
      \label{fig:top-drawing-construction-odd}
    \end{subfigure}
    \caption{Construction of a $(u,v,a,b)$-top-drawing if $k$ is even (a) or odd (b). In this and the following figures, in order to improve the readability, the vertices do not lie at grid points and the edges are steeper than they should be. The little symbols next to the edges match segments that are equivalent under translation.}
    \label{fig:top-drawing-construction}
\end{figure}

The basic idea of our construction is the following. We would like to draw $G$ so that its edges have slopes $\frac{1}{a},\frac{1}{a+1},\frac{1}{a+2},\dots,\frac{1}{b}$. Hence, we will partition the interval of integers $[a,b]$ into sub-intervals $[a_1,b_1], [a_2,b_2], \dots, [a_{k-1},b_{k-1}]$ and assign such intervals to the subgraphs $G_1,G_2,\dots,G_{k-1}$, respectively, so to draw the edges of each subgraph $G_i$ with slopes $\frac{1}{a_i},\frac{1}{a_i+1},\frac{1}{a_i+2},\dots,\frac{1}{b_i}$. One complication to this strategy is that the path $(u_1,u_2,\dots,u_{k})$ on the boundary of $f$ cannot be drawn with positive, and progressively decreasing, slopes, as otherwise $x(u_k)-x(u_1)$ would be in $\Omega(nd)$, while $\delta_G(u_1,u_k)$ is only in $O(1)$, and hence this would not result in a drawing with the area bounds we aim for. Thus, we need to split $(u_1,u_2,\dots,u_{k})$ into two paths of roughly equal length, each to be drawn with positive slopes, that are progressively decreasing in the first path and progressively increasing in the second one. The subgraphs attached to the first path have a top-drawing, while the ones attached to the second path have a side-drawing. The integers in the interval $[a,b]$ are still distributed among the subgraphs $G_1,G_2,\dots,G_{k-1}$, however the edges of the second path use the same slopes as the edges of the first path, with an exception that might arise based on the parity of $k$. In fact, we might require that the last edge of the second path has the $x$- and $y$-extensions of the first two edges of the first path combined, or that the first edge of the first path has the $x$- and $y$-extensions of the last two edges of the second path combined. This results in  edges with $y$-extension equal to $2$, rather than $1$ as all the other edges of the paths. We now formalize our construction.

We first show how to construct a $(u,v,a,b)$-top-drawing $\Gamma$ of $G$, refer to \cref{fig:top-drawing-construction}. For $i=1,\dots,\lceil \frac{k}{2}\rceil$, recursively construct a $(u_i,u_{i+1},a_i,b_i)$-top-drawing $\Gamma_i$ of $G_i$. Recall that, by definition of top-drawing, for $i=1,\dots,\lceil \frac{k}{2}\rceil$, we have that the edge $(u_i,u_{i+1})$ is represented by an~$s_i\times 1$-segment~$\sigma_i$, with~$a_i\leq s_i\leq b_i$, where~$u_i$ is to the left of~$u_{i+1}$. The drawings $\Gamma_2,\dots,\Gamma_{\lceil \frac{k}{2}\rceil}$ are translated so that the placement of $u_i$ coincides in $\Gamma_i$ and $\Gamma_{i-1}$. Now the construction differs if $k$ is even or odd. 
\begin{itemize}
    \item If $k$ is even, as in \cref{fig:top-drawing-construction-even}, for $i=\frac{k}{2}+1,\dots,k-1$, recursively construct a $(u_i,u_{i+1},\Sigma_i,a_i,b_i)$-side-drawing $\Gamma_i$ of $G_i$, where $\Sigma_i$ contains only the segment $\sigma_{i-\frac{k}{2}}$, and translate it so that the placement of $u_i$ coincides in $\Gamma_i$ and $\Gamma_{i-1}$. Note that $\sigma_{i-\frac{k}{2}}$ is an $s_{i-\frac{k}{2}}\times 1$-segment with $s_{i-\frac{k}{2}}<b_{i-\frac{k}{2}}<a_i$, hence  the hypotheses for the construction of a side-drawing are satisfied.
    \item If $k$ is odd, as in \cref{fig:top-drawing-construction-odd}, recursively construct a $(u_{\frac{k+3}{2}},u_{\frac{k+5}{2}},\Sigma_{\frac{k+3}{2}},a_{\frac{k+3}{2}},b_{\frac{k+3}{2}})$-side-drawing of $G_{\frac{k+3}{2}}$, where $\Sigma_{\frac{k+3}{2}}$ contains an $s_1\times 1$-segment and an $s_2\times 1$-segment, and translate it so that the placement of $u_{\frac{k+3}{2}}$ coincides in $\Gamma_{\frac{k+3}{2}}$ and $\Gamma_{\frac{k+1}{2}}$. Note that $0<s_1<s_2\leq b_2<a_{\frac{k+3}{2}}$, hence  the hypotheses for the construction of a side-drawing are satisfied. Also, for $i=\frac{k+5}{2},\dots,k-1$, recursively construct a $(u_i,u_{i+1},\Sigma_i,a_i,b_i)$-side-drawing of $G_i$, where $\Sigma_i$ only contains the segment $\sigma_{i-\frac{k-1}{2}}$, and translate it so that the placement of $u_i$ coincides in $\Gamma_i$ and $\Gamma_{i-1}$. Note that $\sigma_{i-\frac{k-1}{2}}$ is an $s_{i-\frac{k-1}{2}}\times 1$-segment with $0<s_{i-\frac{k-1}{2}}<b_{i-\frac{k-1}{2}}<a_i$, hence the hypotheses for the construction of a side-drawing are satisfied.
\end{itemize}

This completes the construction of $\Gamma$. We now prove that $\Gamma$ satisfies the required properties.

First, note that the segment $\sigma$ representing the edge $(u,v)$ in $\Gamma$ is an $s_{\lceil \frac{k}{2}\rceil}\times 1$-segment. This is because, if $k$ is even, then the segments $\sigma_{\frac{k}{2}+1},\dots,\sigma_{k-1}$ are equivalent under translation to $\sigma_{1},\dots,\sigma_{\frac{k}{2}-1}$, respectively; while if $k$ is odd, then the segments $\sigma_{\frac{k+5}{2}},\dots,\sigma_{k-1}$ are equivalent under translation to $\sigma_{3},\dots,\sigma_{\frac{k-1}{2}}$, respectively,  and $\sigma_{\frac{k+3}{2}}$ occupies the same $x$- and $y$-intervals of the segments $\sigma_1$ and $\sigma_2$ combined. Note that this argument uses the fact that $k\geq 4$, which implies that the edge $(u_{\lceil \frac{k}{2}\rceil+1}, u_{\lceil \frac{k}{2}\rceil+2})$ is not the same edge as $(u,v)$. We have $a=a_1<a_{\lceil \frac{k}{2}\rceil}\leq s_{\lceil \frac{k}{2}\rceil}\leq b_{\lceil \frac{k}{2}\rceil}<b_{k-1}<b$, hence Property T1 of a top drawing is satisfied.
By construction and since $\Gamma_1,\dots,\Gamma_{k-1}$ are straight-line grid drawings, it follows that $\Gamma$ is a straight-line grid drawing. 
Since the drawings $\Gamma_1,\dots,\Gamma_{k-1}$ are planar, in order to prove the planarity of $\Gamma$, it suffices to prove that such drawings do not intersect each other and do not intersect $\sigma$. By Property T2 of a top drawing, for any $1\leq i< j\leq \lceil \frac{k}{2}\rceil$, we have that $\Gamma_i$ and $\Gamma_j$ lie on different sides of any line with a slope between $\frac{1}{b_i}$ and $\frac{1}{a_{j}}$ passing through $u_{i+1}$. Analogously, by Property S2 of a side drawing, for any $\lceil \frac{k}{2}\rceil+1\leq i< j\leq k-1$, we have that $\Gamma_i$ and $\Gamma_j$ lie on different sides of any line with a slope between $\frac{1}{b_i}$ and $\frac{1}{a_{j}}$ passing through $u_{i+1}$; by the same property, for any $\lceil \frac{k}{2}\rceil+1\leq i\leq k-1$, we have that $\Gamma_i$ and $\sigma$ lie on different sides of the vertical line through $u_{i+1}$. Consider the half-line starting from $u_{\lceil \frac{k}{2}\rceil+1}$ and passing through $u$, and consider any half-line starting from $u_{\lceil \frac{k}{2}\rceil+1}$ with a slope  between $\frac{1}{b_{\lceil \frac{k}{2}\rceil}}$ and $\frac{1}{a_{\lceil \frac{k}{2}\rceil+1}}$. The drawings $\Gamma_1,\dots,\Gamma_{\lceil \frac{k}{2}\rceil}$ lie in the wedge above the union of such half-lines, while the drawings $\Gamma_{\lceil \frac{k}{2}\rceil+1},\dots,\Gamma_{k-1}$  and the segment $\sigma$ lie in the wedge below the union of such half-lines, hence the drawings $\Gamma_1,\dots,\Gamma_{\lceil \frac{k}{2}\rceil}$  do not cross the drawings $\Gamma_{\lceil \frac{k}{2}\rceil+1},\dots,\Gamma_{k-1}$ and the segment $\sigma$. 
The fact that $\Gamma$ is internally-strictly-convex follows from that fact that so are $\Gamma_1,\dots,\Gamma_{k-1}$ and from the fact that $s_1<s_2<\dots<s_{\lceil \frac{k}{2}\rceil}$; the latter implies that the path $(u=u_1,u_2,\dots,u_{\lceil \frac{k}{2}\rceil+1})$ is represented by a sequence of segments with positive and strictly-decreasing slopes, and that the path $(v=u_{k},u_{k-1},\dots,u_{\lceil \frac{k}{2}\rceil+1})$ is represented by a sequence of segments with positive and strictly-increasing slopes. Property T2 follows from the fact that, for $i=1,\dots,\lceil \frac{k}{2}\rceil$, the drawing $\Gamma_i$ is a $(u_i,u_{i+1},a_i,b_i)$-top-drawing and hence satisfies Property T2, where $a\leq a_i\leq b_i<b$, and from the fact that,  for $i=\lceil \frac{k}{2}\rceil+1,\dots,k-1$, the drawing $\Gamma_i$ is a $(u_i,u_{i+1},\Sigma_i,a_i,b_i)$-side-drawing and hence satisfies Property S2, where $a\leq s_{i-\lfloor \frac{k}{2}\rfloor}\leq b_{\lceil \frac{k}{2}\rceil}<a_i\leq b_i<b$ if $\Sigma_i$ only contains an $s_i\times 1$-segment, and where $a\leq s_1<s_2\leq b_{\lceil \frac{k}{2}\rceil}<a_i\leq b_i<b$ if $\Sigma_i$ contains an $s_1\times 1$-segment and an $s_2\times 1$-segment. 

It remains to discuss the width and height properties of $\Gamma$. Consider any vertex $w$ of $G$ different from $u$. 
\begin{itemize}
\item First, suppose that $w=u_i$, for some $2\leq i\leq \lceil \frac{k}{2}\rceil$. Note that $\delta_G(u,w)=i-1$, since $(u=u_1,u_2,\dots,u_i=w)$ is the shortest path between $u$ and $w$ in $G$. By construction, we have $y(w)-y(u)=i-1<2i-2=2\cdot \delta_G(u,w)$. Also, we have $x(w)-x(u)=\sum_{j=1}^{i-1} s_j\leq \sum_{j=1}^{i-1} b_j<(i-1) \cdot b < 2b \cdot (i-1)=2b\cdot \delta_G(u,w)$.
\item Second, suppose that $w=u_i$, with $i=\lceil \frac{k}{2}\rceil+1$. If $k$ is even, then again $\delta_G(u,w)=i-1$, since $(u=u_1,u_2,\dots,u_i=w)$ is a shortest path between $u$ and $w$ in $G$, and the proof that the width and height properties are satisfied is the same as in the first case. However, if $k$ is odd, then $\delta_G(u,w)=i-2$, since $(u,u_k,u_{k-1},\dots,u_{\lceil \frac{k}{2}\rceil+1})$ is the shortest path between $u$ and $w$ in $G$ and such a path has length $(k+1)-(\lceil \frac{k}{2}\rceil+1)=\lfloor \frac{k}{2} \rfloor=i-2$. Since $k\geq 5$ (given that $k\geq 4$ and $k$ is odd), we get $y(w)-y(u)=i-1=\lceil \frac{k}{2}\rceil\leq 2 \lfloor \frac{k}{2} \rfloor =2\cdot \delta_G(u,w)$. Also, we have $x(w)-x(u)=\sum_{j=1}^{\lceil \frac{k}{2}\rceil} s_j\leq \sum_{j=1}^{\lceil \frac{k}{2}\rceil} b_j<\lceil \frac{k}{2}\rceil \cdot b < 2b \cdot \lfloor \frac{k}{2} \rfloor=2b\cdot \delta_G(u,w)$. 

\item Third, suppose that $w=u_i$, for some $\lceil\frac{k}{2}\rceil+2\leq i\leq k$. Note that $\delta_G(u,w)=k-i+1$, since $(u,u_k,u_{k-1},\dots,u_i=w)$ is the shortest path between $u$ and $w$ in $G$. By construction, we have $y(w)-y(u)=k-i+1<2(k-i+1)=2\cdot \delta_G(u,w)$. Also, we have $x(w)-x(u)=\sum_{j=i-\lfloor\frac{k}{2}\rfloor}^{\lceil\frac{k}{2}\rceil} s_j\leq \sum_{j=i-\lfloor\frac{k}{2}\rfloor}^{\lceil\frac{k}{2}\rceil} b_j<(k-i+1) \cdot b < 2b \cdot (k-i+1)=2b\cdot \delta_G(u,w)$.
\item Fourth, suppose that $w$ belongs to a graph $G_i$, for some $1\leq i\leq \lceil \frac{k}{2}\rceil-1$, with $w\notin \{u_i,u_{i+1}\}$. Then $\delta_G(u,w)=i-1+\delta_{G_i}(u_i,w)$, since the shortest path between $u$ and $w$ in $G$ consists of $(u=u_1,u_2,\dots,u_i)$, together with a shortest path between $u_i$ and $w$ in $G_i$. By induction, we have $y(w)-y(u_i)\leq 2\cdot \delta_{G_i}(u_i,w)$. Thus, we have $y(w)-y(u)=(y(u_i)-y(u))+(y(w)-y(u_i))\leq i-1 + 2\cdot \delta_{G_i}(u_i,w)\leq 2(i-1+\delta_{G_i}(u_i,w))=2\delta_G(u,w)$. Also, by induction, we have $x(w)-x(u_i)\leq 2b\cdot \delta_{G_i}(u_i,w)$. Thus, we have $x(w)-x(u)=(x(u_i)-x(u))+(x(w)-x(u_i))\leq \sum_{j=1}^{i-1} s_j + 2b\cdot \delta_{G_i}(u_i,w)\leq \sum_{j=1}^{i-1} b_j + 2b\cdot \delta_{G_i}(u_i,w) <(i-1) \cdot b + 2b\cdot \delta_{G_i}(u_i,w) \leq 2b(i-1+\delta_{G_i}(u_i,w))=2b\cdot\delta_G(u,w)$.
\item Fifth, suppose that $w$ belongs to $G_i$, with $i=\lceil \frac{k}{2}\rceil$. If a shortest path between $u$ and $w$ passes through $u_i$, then again $\delta_G(u,w)=i-1+\delta_{G_i}(u_i,w)$ and the proof that the width and height properties are satisfied is the same as in the fourth case. However, if $k$ is odd, then the shortest path between $u$ and $w$ might not pass through $u_i$. Indeed, the path $(u,u_k,u_{k-1},\dots,u_{\lceil \frac{k}{2}\rceil+1})$ has the same length as the path $(u,u_2,u_3,\dots,u_{\lceil \frac{k}{2}\rceil})$, and the shortest path between $u_{\lceil \frac{k}{2}\rceil+1}$ and $w$ in $G_{\lceil \frac{k}{2}\rceil}$ might be shorter than the one between $u_{\lceil \frac{k}{2}\rceil}$ and $w$. However, since the edge $(u_{\lceil \frac{k}{2}\rceil},u_{\lceil \frac{k}{2}\rceil+1})$ belongs to $G$, it follows that the shortest path between $u_{\lceil \frac{k}{2}\rceil}$ and $w$ has length smaller than or equal to one plus the length of the shortest path between $u_{\lceil \frac{k}{2}\rceil+1}$ and $w$. Thus, if the shortest path between $u$ and $w$ does not pass through $u_i$, we have $\delta_G(u,w)=\lceil \frac{k}{2}\rceil-2+\delta_{G_{\lceil \frac{k}{2}\rceil}}(u_{\lceil \frac{k}{2}\rceil},w))$; note the $-2$, which replaces the $-1$ from the case in which the shortest path from $u$ to $w$ passes through $u_i$. Since $k\geq 5$ (given that $k\geq 4$ and $k$ is odd), we get $y(w)-y(u)\leq \lceil \frac{k}{2}\rceil-1 + 2\cdot \delta_{G_{\lceil \frac{k}{2}\rceil}}(u_{\lceil \frac{k}{2}\rceil},w)\leq 2(\lceil \frac{k}{2}\rceil-2+\delta_{G_{\lceil \frac{k}{2}\rceil}}(u_{\lceil \frac{k}{2}\rceil},w))=2\delta_G(u,w)$. For the same reason, we get $x(w)-x(u)<(\lceil \frac{k}{2}\rceil-1) \cdot b + 2b\cdot \delta_{G_{\lceil \frac{k}{2}\rceil}}(u_{\lceil \frac{k}{2}\rceil},w) \leq 2b({\lceil \frac{k}{2}\rceil}-1+\delta_{G_{\lceil \frac{k}{2}\rceil}}(u_{\lceil \frac{k}{2}\rceil},w))=2b\cdot\delta_G(u,w)$.
\item Finally, suppose that $w$ belongs to a graph $G_i$, for some $\lceil\frac{k}{2}\rceil+1\leq i\leq k-1$, with $w\notin \{u_i,u_{i+1}\}$. Then we have $\delta_G(u,w)=k-i+\delta_{G_i}(u_{i+1},w)$, since the shortest path between $u$ and $w$ in $G$ consists of $(u,u_k,u_{k-1},\dots,u_{i+1})$, together with a shortest path between $u_{i+1}$ and $w$ in $G_i$. By induction, we have $y(w)-y(u_i)\leq 2\cdot \delta_{G_i}(u_{i+1},w)$. Thus, we have $y(w)-y(u)=(y(u_{i+1})-y(u))+(y(w)-y(u_{i+1}))\leq k-i + 2\cdot \delta_{G_i}(u_{i+1},w)\leq 2(k-i+\delta_{G_i}(u_{i+1},w))=2\delta_G(u,w)$. Also, by induction, we have $x(w)-x(u_{i+1})\leq 2b\cdot \delta_{G_i}(u_{i+1},w)$. Thus, we have $x(w)-x(u)=(x(u_{i+1})-x(u))+(x(w)-x(u_{i+1}))\leq \sum_{j=i+1}^{k} s_{j-\lfloor \frac{k}{2}\rfloor} + 2b\cdot \delta_{G_i}(u_{i+1},w)\leq \sum_{j=i+1}^{k} b_{j-\lfloor \frac{k}{2}\rfloor} + 2b\cdot \delta_{G_i}(u_{i+1},w) <(k-i) \cdot b + 2b\cdot \delta_{G_i}(u_{i+1},w) \leq 2b(k-i+\delta_{G_i}(u_{i+1},w))=2b\cdot\delta_G(u,w)$.
\end{itemize}
This concludes the proof that $\Gamma$ is a $(u,v,a,b)$-top-drawing satisfying the width and height properties.

We now proceed to the construction of a $(u,v,\Sigma,a,b)$-side-drawing. The construction distinguishes three cases, namely the one in which $k$ is even, the one in which $k$ is odd and $|\Sigma|=1$, and the one in which $k$ is odd and $|\Sigma|=2$. In all three cases, we represent $(u,v)$ as the segment $\sigma$, with $v$ to the left of $u$; recall that $\sigma$ is the unique $s\times 1$-segment in $\Sigma$ if $|\Sigma|=1$, and is a $(z_1+z_2)\times 2$-segment if $\Sigma$ contains a $z_1\times 1$-segment and a $z_2\times 1$-segment.

\begin{figure}[htb]
    \begin{subfigure}[b]{0.9\textwidth}
      \centering
      \includegraphics[page=7,scale=.9]{Figures/diameter.pdf}
      \subcaption{In this example, we have $k=6$.}
      \label{fig:side-drawing-construction-even-1}
    \end{subfigure}
    \\
    \begin{subfigure}[b]{0.9\textwidth}
      \centering
      \includegraphics[page=8,scale=.9]{Figures/diameter.pdf}
      \subcaption{In this example, we have $k=4$.}
      \label{fig:side-drawing-construction-even-2}
    \end{subfigure}
    \caption{Construction of a $(u,v,\Sigma,a,b)$-top-drawing if $k$ is even and $|\Sigma|=1$ (a) or $|\Sigma|=2$ (b).}
    \label{fig:side-drawing-construction}
\end{figure}

\begin{itemize}
\item First, if $k$ is even, as in \cref{fig:side-drawing-construction-even-1,fig:side-drawing-construction-even-2}, then for $i=1,\dots,\frac{k}{2}-1$, recursively construct a $(u_i,u_{i+1},a_i,b_i)$-top-drawing $\Gamma_i$ of $G_i$. The drawings $\Gamma_1,\Gamma_2,\dots,\Gamma_{\frac{k}{2}-1}$ are translated so that the placement of $u_1$ is the same as in $\sigma$, and the placement of $u_i$ coincides in $\Gamma_i$ and $\Gamma_{i-1}$. Recall that, by definition of top-drawing, for $i=1,\dots,\frac{k}{2}-1$, we have that the edge $(u_i,u_{i+1})$ is represented by an~$s_i\times 1$-segment with~$a_i\leq s_i\leq b_i$. Next, recursively construct a $(u_{\frac{k}{2}},u_{\frac{k}{2}+1},\Sigma_{\frac{k}{2}},a_{\frac{k}{2}},b_{\frac{k}{2}})$-side-drawing $\Gamma_{\frac{k}{2}}$ of $G_{\frac{k}{2}}$, where $\Sigma_{\frac{k}{2}}$ contains only the segment~$\sigma$, and translate it so that the placement of $u_{\frac{k}{2}}$ coincides in $\Gamma_{\frac{k}{2}}$ and $\Gamma_{{\frac{k}{2}}-1}$. Also, for $i=\frac{k}{2}+1,\dots,k-1$, recursively construct a $(u_i,u_{i+1},\Sigma_i,a_i,b_i)$-side-drawing $\Gamma_i$ of $G_i$, where $\Sigma_i$ contains only the segment $\sigma_{i-\frac{k}{2}}$, and translate it so that the placement of $u_i$ coincides in $\Gamma_i$ and $\Gamma_{i-1}$. 
\begin{figure}[htb]
      \centering
      \includegraphics[page=9,scale=.9]{Figures/diameter.pdf}
    \caption{Construction of a $(u,v,\Sigma,a,b)$-top-drawing if $k$ is odd and $|\Sigma|=1$. In this example, we have $k=5$.}
    \label{fig:side-drawing-construction-odd-1}
\end{figure}
\item Second, if $k$ is odd and $|\Sigma|=1$, as in \cref{fig:side-drawing-construction-odd-1}, for $i=1,\dots,\frac{k-1}{2}$, recursively construct a $(u_i,u_{i+1},a_i,b_i)$-top-drawing $\Gamma_i$ of $G_i$. The drawings $\Gamma_1,\Gamma_2,\dots,\Gamma_{\frac{k-1}{2}}$ are translated so that the placement of $u_1$ is the same as in $\sigma$, and the placement of $u_i$ coincides in $\Gamma_i$ and $\Gamma_{i-1}$. Recall that, by definition of top-drawing, for $i=1,\dots,\frac{k-1}{2}$, we have that the edge $(u_i,u_{i+1})$ is represented by an~$s_i\times 1$-segment with~$a_i\leq s_i\leq b_i$. Next, recursively construct a $(u_{\frac{k+1}{2}},u_{\frac{k+3}{2}},\Sigma_{\frac{k+1}{2}},a_{\frac{k+1}{2}},b_{\frac{k+1}{2}})$-side-drawing of $G_{\frac{k+1}{2}}$, where $\Sigma_{\frac{k+1}{2}}$ contains $\sigma$ and an $s_1\times 1$-segment, and translate it so that the placement of $u_{\frac{k+1}{2}}$ coincides in $\Gamma_{\frac{k+1}{2}}$ and $\Gamma_{\frac{k-1}{2}}$. Also, for $i=\frac{k+3}{2},\dots,k-1$, recursively construct a $(u_i,u_{i+1},\Sigma_i,a_i,b_i)$-side-drawing of $G_i$, where $\Sigma_i$ only contains the segment $\sigma_{i-\frac{k-1}{2}}$, and translate it so that the placement of $u_i$ coincides in $\Gamma_i$ and $\Gamma_{i-1}$. 
\begin{figure}[htb]
      \centering
      \includegraphics[page=10,scale=.9]{Figures/diameter.pdf}
    \caption{Construction of a $(u,v,\Sigma,a,b)$-top-drawing if $k$ is odd and $|\Sigma|=2$. In this example, we have $k=5$.}
    \label{fig:side-drawing-construction-odd-2}
\end{figure}
\item Finally, if $k$ is odd and $|\Sigma|=2$, as in \cref{fig:side-drawing-construction-odd-2}, for $i=1,\dots,\frac{k-3}{2}$, recursively construct a $(u_i,u_{i+1},a_i,b_i)$-top-drawing $\Gamma_i$ of $G_i$. The drawings $\Gamma_1,\Gamma_2,\dots,\Gamma_{\frac{k-3}{2}}$ are translated so that the placement of $u_1$ is the same as in $\sigma$, and the placement of $u_i$ coincides in $\Gamma_i$ and $\Gamma_{i-1}$. Recall that, by definition of top-drawing, for $i=1,\dots,\frac{k-3}{2}$, we have that the edge $(u_i,u_{i+1})$ is represented by an~$s_i\times 1$-segment with~$a_i\leq s_i\leq b_i$. Next, recursively construct a $(u_{\frac{k-1}{2}},u_{\frac{k+1}{2}},\Sigma_{\frac{k-1}{2}},a_{\frac{k-1}{2}},b_{\frac{k-1}{2}})$-side-drawing of $G_{\frac{k-1}{2}}$, where $\Sigma_{\frac{k-1}{2}}$ only contains a $z_1\times 1$-segment, and translate it so that the placement of $u_{\frac{k-1}{2}}$ coincides in $\Gamma_{\frac{k-3}{2}}$ and $\Gamma_{\frac{k-1}{2}}$. Also, recursively construct a $(u_{\frac{k+1}{2}},u_{\frac{k+3}{2}},\Sigma_{\frac{k+1}{2}},a_{\frac{k+1}{2}},b_{\frac{k+1}{2}})$-side-drawing of $G_{\frac{k+1}{2}}$, where $\Sigma_{\frac{k+1}{2}}$ only contains a $z_2\times 1$-segment, and translate it so that the placement of $u_{\frac{k+1}{2}}$ coincides in $\Gamma_{\frac{k-1}{2}}$ and $\Gamma_{\frac{k+1}{2}}$. For $i=\frac{k+3}{2},\dots,k-1$, recursively construct a $(u_i,u_{i+1},\Sigma_i,a_i,b_i)$-side-drawing of $G_i$, where $\Sigma_i$ only contains the segment $\sigma_{i-\frac{k-1}{2}}$, and translate it so that the placement of $u_i$ coincides in $\Gamma_i$ and $\Gamma_{i-1}$. 
\end{itemize} 

This completes the construction of $\Gamma$. The proof that $\Gamma$ satisfies the required properties is very similar to the one for a top-drawing and we hence only highlight the differences. First, note that, in all three cases, the edge $(u,v)$ is represented by $\sigma$, with $v$ to the left of $u$, by construction, thus satisfying Property S1. The proof that $\Gamma$ is an internally-strictly-convex grid drawing satisfying Property S2 follows the one for a top-drawing almost verbatim, expect for a change in the indices, given that the subgraphs $G_i$ which have a top-drawing and are drawn ``to the left'' of the face $f$ are those with $1\leq i\leq \lfloor \frac{k}{2}\rfloor$ (in the first two cases) or with $1\leq i\leq \lfloor \frac{k}{2}\rfloor-1$ (in the third case), rather than those with $1\leq i\leq \lceil \frac{k}{2}\rceil$ as in the construction of a top-drawing, given that here the edge $(u,v)$ is the part of the path on the boundary of $f$ drawn with decreasing slopes, rather than of the path on the boundary of $f$ drawn with increasing slopes. The width and height properties are again satisfied because each segment representing an edge on the boundary of $f$ has $y$-extension at most $2$ and $x$-extension at most $2b$. Similarly to the construction of a top-drawing, if $k$ is odd and $\Sigma=1$, the shortest path from $v$ to a vertex $w$ in $G_i$ with $i=\frac{k-1}{2}$ might not pass through $u_{\frac{k-3}{2}}$ and use the route $(v,u_{k-1},u_{k-2},\dots,u_{i+1})$ instead. Here we also have a symmetric case. Namely, if $k$ is odd and $\Sigma=2$, the shortest path from $v$ to a vertex $w$ in $G_i$ with $i=\frac{k-1}{2}$ might not pass through $u_{\frac{k+1}{2}}$ and use the route $(v,u_1,u_2,\dots,u_i)$ instead. However, in both cases, the same arguments used as in the construction of a top-drawing ensure that the drawing satisfies the width and height properties. This concludes the induction and hence the proof of \cref{th:upper-diameter}.

The following is a consequence of \cref{th:upper-diameter}.

\begin{theorem} \label{th:upper-diameter-outer1}
Every~$n$-vertex outer-$1$-plane graph with diameter $d$ admits an embedding-preserving straight-line grid drawing in $O(nd^2)$ area.
\end{theorem}

\begin{proof}
Let $G$ be an~$n$-vertex outer-$1$-plane graph with diameter $d$. We first prove that the outer frame $G'$ of $G$ has diameter at most $2d$. Note that the framed augmentation $F$ of $G$ has diameter at most $d$, since it is obtained by adding edges to $G$. Also, for every path~$P=(u_1,u_2,\dots,u_k)$ in~$F$, there exists a path in $G'$ whose length is at most twice the length of $P$. Indeed, we construct a connected subgraph~$H'$ of~$G'$ that contains~$u_1$ and~$u_k$ and that has at most $2k-2$ edges, as follows. For~$i=1,\dots,k-1$, we add the edge~$(u_i,u_{i+1})$ to~$H'$ if it belongs to~$G'$. Otherwise,~$(u_i,u_{i+1})$ has been removed from $F$ since it crosses an edge $e$. This implies that $G'$ contains a $4$-cycle on the end-vertices of~$(u_i,u_{i+1})$ and $e$, hence $G'$ contains a length-$2$ path between $u_i$ and $u_{i+1}$. We add such a path to~$H'$. The number of edges of~$H'$ is at most twice the length of~$P$, hence at most $2k-2$. Thus,~$H'$ contains a path from~$u_1$ to~$u_k$ whose length is at most twice the length of~$P$.

By \cref{th:upper-diameter}, we have that $G'$ admits an internally-strictly-convex drawing $\Gamma'$ in $O(nd^2)$ area. Since the internal faces of $G'$ are represented as strictly-convex polygons in $\Gamma'$, re-introducing the removed crossing edges in $\Gamma'$ as straight-line segments, and removing the edges of $G'$ that do not belong to $G$ results in an embedding-preserving straight-line grid drawing of $G$ in $O(nd^2)$ area. 
\end{proof}

\section{Conclusions}
\label{sec:conclusions}

In this paper we have proved the first sub-quadratic area upper bound for internally-convex grid drawings of outerplanar graphs. We also presented an algorithm to construct internally-strictly-convex grid drawings of outerpaths whose area is asymptotically optimal, with respect to the number of vertices of the graph and the maximum size of an internal face. Finally, we presented an algorithm to construct internally-strictly-convex grid drawings of outerplanar graphs whose area is asymptotically optimal, with respect to the number of vertices of the graph and the diameter. 

Several intriguing questions are left open by our research:
\begin{itemize}
    \item A first, natural, research direction is to narrow the gap between the $O(n^{1.5})$ upper bound and the $\Omega(n)$ lower bound for internally-convex grid drawings of~$n$-vertex outerplanar graphs. In particular, any super-linear lower bound would be an important step towards proving an analogous result for (not necessarily convex) planar straight-line grid drawings.
    \item Concerning internally-strictly-convex grid drawings of (general) $n$-vertex outerplane graphs whose faces have size at most $k$, for which we proved an $\Omega(nk^2)$ lower bound, we are not aware of any upper bound better than $O(n^3)$~\cite{Andrews1963}. Tightening this gap is a nice goal. In particular, we ask: Does every $n$-vertex outerplane graph whose faces have size at most $k$ admit an internally-strictly-convex grid drawing in $O(nk^2)$ area? We could only answer this question affirmatively for outerpaths.
    \item The case $k=4$, for which there is an $O(n^2)$ upper bound (a consequence of results in~\cite{Barany2006}), is especially interesting. Indeed, any upper bound would translate to an upper bound for the area requirements of straight-line drawings of outer-1-planar graphs.
    \item Finally, we have observed that, for (non-strictly) convex grid drawings of~$n$-vertex outerplane graphs (in which the outer face is also required to be convex), 
    $\Theta(n^3)$ is a tight bound for the area requirements. However, if the maximum size of an internal face is bounded, our lower bound does not hold anymore. We believe that an $\Omega(n \log n)$ lower bound can be proved, as the convexity of the outer face would require a dimension to have $\Omega(n)$ length and as there exist $n$-vertex trees that require $\Omega(\log n)$ length in both dimensions in any planar straight-line drawing (see, e.g.,~\cite{DBLP:journals/jgaa/FelsnerLW03}). However, this lower bound is far from the $O(n^3)$ upper bound and it would be interesting to close this gap.
\end{itemize}



\bibliographystyle{plainurl}
\bibliography{references}

\newpage
\appendix




\end{document}